\newtheorem{theorem}{Theorem}[section]
\newtheorem{definition}[theorem]{Definition}
\newtheorem{proposition}[theorem]{Proposition}
\theoremstyle{plain}
\newtheorem{lemma}[theorem]{Lemma}
\newtheorem{corollary}[theorem]{Corollary}
\DeclareMathOperator{\PTIME}{\mathbf{P}}
\DeclareMathOperator{\NP}{\mathbf{NP}}
\DeclareMathOperator{\co}{\mathbf{co-}\!}
\DeclareMathOperator{\EXPSPACE}{\mathbf{EXPSPACE}}
\DeclareMathOperator{\PSPACE}{\mathbf{PSPACE}}
\DeclareMathOperator{\Trk}{Trk}
\DeclareMathOperator{\Pref}{Pref}
\DeclareMathOperator{\Suff}{Suff}
\DeclareMathOperator{\states}{states}
\DeclareMathOperator{\lst}{lst}
\DeclareMathOperator{\fst}{fst}
\DeclareMathOperator{\hsA}{\langle A\rangle}
\DeclareMathOperator{\hsL}{\langle L\rangle}
\DeclareMathOperator{\hsB}{\langle B\rangle}
\DeclareMathOperator{\hsE}{\langle E\rangle}
\DeclareMathOperator{\hsD}{\langle D\rangle}
\DeclareMathOperator{\hsO}{\langle O\rangle}
\DeclareMathOperator{\hsG}{\langle G\rangle}
\DeclareMathOperator{\hsAt}{\langle \overline{A}\rangle}
\DeclareMathOperator{\hsBt}{\langle \overline{B}\rangle}
\DeclareMathOperator{\hsEt}{\langle \overline{E}\rangle}
\newcommand{\stat}{\mathsf{st}}
\newcommand{\LinearPast}{\mathsf{ct}}
\newcommand{\LinearTime}{\mathsf{lin}}
\newcommand{\A}{\mathsf{A}}
\newcommand{\Bbar}{\mathsf{\overline{B}}}
\newcommand{\B}{\mathsf{B}}
\newcommand{\E}{\mathsf{E}}
\newcommand{\Nat}{{\mathbb{N}}}
\DeclareMathAlphabet{\mathpzc}{OT1}{pzc}{m}{it}
\newcommand{\Length}{\textit{length}}
\newcommand{\Ku}{\ensuremath{\mathpzc{K}}}
\newcommand{\Lang}{{\textit{L}}}
\newcommand{\act}{{\textit{act}}}
\newcommand{\CTL}{\text{\sffamily CTL}}
\newcommand{\HS}{\text{\sffamily HS}}
\newcommand{\FO}{\text{\sffamily FO}}
\newcommand{\CTLStar}{\text{\sffamily CTL$^{*}$}}
\newcommand{\CTLStarLP}{\text{\sffamily CTL$^{*}_{lp}$}}
\newcommand{\LTL}{\text{\sffamily LTL}}
\newcommand{\LTLP}{\text{\sffamily LTL$_p$}}
\newcommand{\until}{\textsf{U}}
\newcommand{\Downarrowx}{\text{$\downarrow$$x$}}
\newcommand{\Downarrowy}{\text{$\downarrow$$y$}}
\newcommand{\Next}{\textsf{X}}
\newcommand{\Always}{\textsf{G}}
\newcommand{\Eventually}{\textsf{F}}
\newcommand{\EQ}{\exists}
\newcommand{\EQF}{\exists_f}
\newcommand{\AQ}{\forall}
\newcommand{\EQSubf}{\exists\textit{SubF}}
\newcommand{\present}{\textit{present}}
\begin{document}

\title{Interval vs.\ Point Temporal Logic Model Checking:\\ an Expressiveness Comparison%
\thanks{This work is an extended and revised version of \cite{fsttcs16}.}}
 
\author{Laura Bozzelli\textsuperscript{1}, Alberto Molinari\textsuperscript{2}, Angelo Montanari\textsuperscript{2},\\ Adriano Peron\textsuperscript{1}, Pietro Sala\textsuperscript{3}}
\date{\small \textsuperscript{1} University of Napoli \lq\lq Federico II\rq\rq, IT \quad \textsuperscript{2} University of Udine, IT \quad \textsuperscript{3} University of Verona, IT}

\maketitle

\begin{abstract}
In the last years, model checking with interval temporal logics is emerging as a viable alternative to  model checking with standard point-based temporal logics, such as $\LTL$, $\CTL$, $\CTLStar$, and the like. The behavior of the system is modelled by means of (finite) Kripke structures, as usual. However, while temporal logics which are interpreted  ``point-wise'' describe how the system evolves state-by-state, and predicate properties of system states, those which are interpreted ``interval-wise'' express properties of computation stretches, spanning a sequence of states. A proposition letter is assumed to hold over a computation stretch (interval) if and only if it holds over each component state (homogeneity assumption). 
%
%
A natural question arises: is there any advantage in replacing points by intervals as the primary temporal entities, or is it just a matter of taste? 

In this paper, we study the expressiveness of Halpern and Shoham's interval temporal logic ($\HS$) in model checking, in comparison with those of $\LTL$, $\CTL$, and $\CTLStar$. To this end, we consider three semantic variants of $\HS$: the state-based one, introduced by Montanari et al.\ in~\cite{DBLP:conf/time/MontanariMPP14,DBLP:journals/acta/MolinariMMPP16}, that allows time to branch both in the past and in the future, the computation-tree-based one, that allows time to branch in the future only, and the trace-based variant, that disallows time to branch.  These variants are compared
among themselves and to the aforementioned standard logics, getting a complete picture. In particular, we show that 
$\HS$ with trace-based semantics is equivalent to $\LTL$ (but at least exponentially more succinct), $\HS$ with  computation-tree-based semantics is equivalent to finitary $\CTLStar$, and $\HS$ with state-based semantics is incomparable with all of them ($\LTL$, $\CTL$, and $\CTLStar$).
\end{abstract}

\vfill

\noindent
The work has been supported by the GNCS project \emph{Formal Methods for Verification and Synthesis of Discrete and Hybrid Systems}. The work by A.\ Molinari and A.\ Montanari has also been supported by the project \emph{(PRID) ENCASE - Efforts in the uNderstanding of Complex interActing SystEms}.

\vfill

\newpage

\section{Introduction}

\emph{Point-based temporal logics} (PTLs) provide a 
standard framework for the specification of the 
behavior of reactive systems, that makes it possible to describe how a system evolves state-by-state (``point-wise'' 
view). PTLs have been successfully employed in \emph{model checking} (MC), which enables one to automatically 
verify complex finite-state systems
usually modelled 
as finite propositional Kripke structures. 
The MC methodology considers two types of PTLs---\emph{linear} and \emph{branching}---which differ 
in the underlying model of time.
In linear PTLs, like $\LTL$~\cite{pnueli1977temporal}, each moment in time has a unique possible future:
formulas are interpreted over paths of a Kripke structure, and thus they refer to a single computation of the system.
In branching PTLs,
like $\CTL$ and $\CTL^{*}$~\cite{emerson1986sometimes}, each moment in time may evolve into several possible futures:
 formulas are interpreted over states of the Kripke structure, hence referring to all the possible system computations.

\emph{Interval temporal logics} (ITLs) have been proposed as an alternative setting for reasoning about time~\cite{HS91,digital_circuits_thesis,Ven90}. Unlike standard PTLs, they assume intervals, instead of points, as their primitive entities. ITLs allow one to specify relevant temporal properties that involve, e.g., actions with duration, accomplishments, and temporal aggregations, which are inherently ``interval-based'', and thus cannot be naturally expressed by PTLs.
ITLs have been applied in various areas of computer science, including formal verification, computational linguistics, planning, and multi-agent systems
\cite{LM13,digital_circuits_thesis,DBLP:journals/ai/Pratt-Hartmann05}. 
\emph{Halpern and Shoham's modal logic of time intervals} (referred to as $\HS$)~\cite{HS91}
is the most popular among the ITLs.  It features one modality for each of the 13 possible ordering relations between pairs of intervals
(the so-called Allen's relations~\cite{All83}), apart from equality.
Its \emph{satisfiability problem} turns out to be highly undecidable for all interesting (classes of) linear orders~\cite{HS91}; the same happens with most of its fragments~%
\cite{DBLP:journals/amai/BresolinMGMS14,Lod00,MM14}, but there are some noteworthy exceptions like the logic of temporal neighbourhood $\mathsf{A\overline{A}}$, over all relevant (classes of) linear orders~\cite{BGMS09,DBLP:conf/tableaux/BresolinMSS11}, and the logic of sub-intervals $\mathsf{D}$, over the class of dense linear orders~\cite{BGMS10,DBLP:journals/corr/MontanariPS15}.

In this paper, we focus on the \emph{MC problem} for $\HS$. In order to check interval properties of computations, one needs to collect information about states into computation stretches, that is, finite paths of the Kripke structure (\emph{traces} for short). Each trace is interpreted as an interval, whose labelling is defined on the basis of the labelling of the component states. Such an approach to $\HS$ MC has been simultaneously and independently proposed by Montanari et al.\ in~\cite{DBLP:conf/time/MontanariMPP14,DBLP:journals/acta/MolinariMMPP16} and by Lomuscio and Michaliszyn in~\cite{LM13,LM14}. 

In \cite{DBLP:conf/time/MontanariMPP14,DBLP:journals/acta/MolinariMMPP16}, Montanari et al.\ assume a \emph{state-based} semantics, according to which
intervals/traces 
are \lq\lq forgetful\rq\rq{} of the history leading to their initial state. 
Since the initial (resp., final) state of an interval may feature several predecessors (resp., successors), such an interpretation induces a branching reference both in the future and in the past. A graphical account of the state-based semantics can be found in Figure \ref{fig:ST}; a detailed explanation will be given in the following.
The other fundamental choice done in \cite{DBLP:conf/time/MontanariMPP14,DBLP:journals/acta/MolinariMMPP16} concerns the labeling of intervals: a natural principle, known as the \emph{homogeneity assumption}, is adopted, which states that a proposition letter holds over an interval if and only if it holds over each component state (such an assumption turns out to be the most appropriate choice for many practical applications). 
In this setting, the MC problem for full $\HS$ turns out to be decidable. More precisely, it is  $\EXPSPACE$-hard~\cite{BMMPS16}, while the only known upper bound is non-elementary~\cite{DBLP:journals/acta/MolinariMMPP16}.%
\footnote{Here and in the following we refer to the \emph{combined complexity} of MC (which accounts for both the size of the Kripke structure and of the formula at the same time).} 
The exact complexity of MC for almost all the meaningful syntactic fragments of $\HS$, which ranges from $\co\NP$ to $\PTIME^{\NP}$, $\PSPACE$, and beyond, has been  determined in a subsequent series of papers~\cite{BMMPS16,BMMPS16B,DBLP:conf/icalp/BozzelliMMPS17,DBLP:journals/acta/MolinariMMPP16,MMP15B,MMP15,MMPS16}.

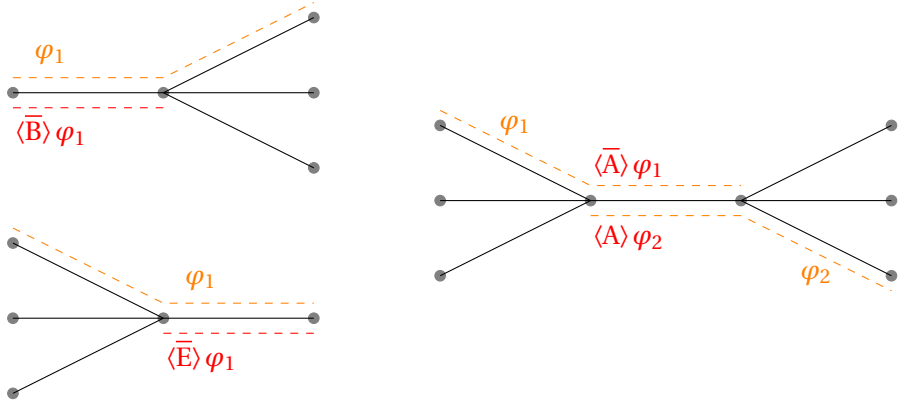
\begin{figure}[t]
\centering
\begin{minipage}{0.30\textwidth}
\begin{tikzpicture}
				\filldraw [gray] (0,2) circle (2pt)
				(2,2) circle (2pt)
				(4,3) circle (2pt)
				(4,2) circle (2pt)
				(4,1) circle (2pt);
					\filldraw [gray] (4,-1) circle (2pt)
				(2,-1) circle (2pt)
				(0,0) circle (2pt)
				(0,-1) circle (2pt)
				(0,-2) circle (2pt);
				\draw [black]  (0,2) -- (4,2);
				\draw [black]  (0,-1) -- (4,-1);
				\draw [black] (2,2) -- (4,3);
				\draw [black] (2,2) -- (4,1);
				\draw [black] (0,0) -- (2,-1);
				\draw [black] (0,-2) -- (2,-1);
				\draw [dashed, orange] (0,2.2) -> (2,2.2) -> (4,3.2);
				\draw [dashed, red] (0,1.8) -> (2,1.8);
				\draw [dashed, orange] (0,0.2) -> (2,-0.8) -> (4,-0.8);
				\draw [dashed, red] (2,-1.2) -> (4,-1.2);
				
				\node [orange] at (0.5,2.5) {$\varphi_1$};	
				\node [red] at (0.5,1.5) {$\hsBt\varphi_1$};
			\node [orange] at (2.5,-0.5) {$\varphi_1$};	
				\node [red] at (2.5,-1.5) {$\hsEt\varphi_1$};
						
				
			\end{tikzpicture}
\end{minipage}		
\hspace{1cm} 			
\begin{minipage}{0.45\textwidth}
			\begin{tikzpicture}
				%
					\filldraw [gray] (4,-1) circle (2pt)
				(2,-1) circle (2pt)
				(0,0) circle (2pt)
				(0,-1) circle (2pt)
				(0,-2) circle (2pt)
				(6,0) circle (2pt)
				(6,-1) circle (2pt)
				(6,-2) circle (2pt)	;
				\draw [black]  (0,-1) -- (6,-1);
				\draw [black] (4,-1) -- (6,0);
				\draw [black] (4,-1) -- (6,-2);
					\draw [black] (0,0) -- (2,-1);
				\draw [black] (0,-2) -- (2,-1);
				\draw [dashed, orange] (0,0.2) -> (2,-0.8) -> (4,-0.8);
				\draw [dashed, orange] (2,-1.2) -> (4,-1.2) -> (6,-2.2);
				
				\node [orange] at (1,0) {$\varphi_1$};	
				\node [red] at (2.5,-0.5) {$\hsAt\varphi_1$};
			\node [orange] at (5,-2) {$\varphi_2$};	
				\node [red] at (2.5,-1.5) {$\hsA \varphi_2$};
						
				
			\end{tikzpicture}
\end{minipage}
    \caption{State-based semantic variant $\HS_\stat$: past and future are branching.}
    \label{fig:ST}
\end{figure}


In~\cite{LM13,LM14}, Lomuscio and Michaliszyn address the MC problem for some fragments of $\HS$ extended with epistemic modalities. Their semantic assumptions are different from those made in~\cite{DBLP:conf/time/MontanariMPP14,DBLP:journals/acta/MolinariMMPP16}: the fragments are interpreted over the unwinding of the Kripke structure (\emph{computation-tree-based} semantics---see Figure~\ref{fig:CT} for a graphical account), and the interval labeling takes into account only the endpoints of intervals. 
In~\cite{LM13}, they focus on the $\HS$ fragment $\B\E$ of Allen's relations \emph{started-by} and \emph{finished-by}, extended with epistemic modalities. They consider a \emph{restricted form of MC (local MC)}, which checks the specification against a single (finite) initial computation interval, and prove that it is $\PSPACE$-complete.  
In~\cite{LM14}, they demonstrate that the picture drastically changes with other fragments of $\HS$ that allow one to access infinitely many intervals. In particular, they prove that the MC problem for the $\HS$ fragment $\A\Bbar$ of Allen's relations \emph{meets} and \emph{starts}, extended with epistemic modalities, is decidable with a non-elementary upper bound. The decidability status of MC for full epistemic $\HS$ 
is not known.

To summarize, the MC problem for \HS\ (and its fragments) has been extensively studied under the state-based and the computation-tree-based semantics, mainly focusing on complexity issues.  What is missing is a formal 
comparison of the expressiveness of \HS\ MC and MC for standard point-based temporal logics. A comparison of the expressiveness of the MC problem for \HS\ under the state-based and the computation-tree-based semantics
is missing as well.

\begin{figure}[t]
\centering
\begin{minipage}{0.3\textwidth}
\begin{tikzpicture}
				\filldraw [gray] (0,2) circle (2pt)
				(2,2) circle (2pt)
				(4,3) circle (2pt)
				(4,2) circle (2pt)
				(4,1) circle (2pt);
					\filldraw [gray] (4,-1) circle (2pt)
				(2,-1) circle (2pt)
				(0,-1) circle (2pt);
				%
				\draw [black]  (0,2) -- (4,2);
				\draw [black]  (0,-1) -- (4,-1);
				\draw [black] (2,2) -- (4,3);
				\draw [black] (2,2) -- (4,1);
				%
				\draw [dashed, orange] (0,2.2) -> (2,2.2) -> (4,3.2);
				\draw [dashed, red] (0,1.8) -> (2,1.8);
				\draw [dashed, orange] (0,-0.8) -> (2,-0.8) -> (4,-0.8);
				\draw [dashed, red] (2,-1.2) -> (4,-1.2);
				
				\node [orange] at (0.5,2.5) {$\varphi_1$};	
				\node [red] at (0.5,1.5) {$\hsBt\varphi_1$};
			\node [orange] at (2.5,-0.5) {$\varphi_1$};	
				\node [red] at (2.5,-1.5) {$\hsEt\varphi_1$};
						
				
			\end{tikzpicture}
\end{minipage}			
\hspace{1cm}
\begin{minipage}{0.45\textwidth}
			\begin{tikzpicture}
				%
					\filldraw [gray] (4,-1) circle (2pt)
				(2,-1) circle (2pt)
				(0,-1) circle (2pt)
				(6,0) circle (2pt)
				(6,-1) circle (2pt)
				(6,-2) circle (2pt)	;
				\draw [black]  (0,-1) -- (6,-1);
				\draw [black] (4,-1) -- (6,0);
				\draw [black] (4,-1) -- (6,-2);
				%
				\draw [dashed, orange] (0,-0.8) -> (2,-0.8) -> (4,-0.8);
				\draw [dashed, orange] (2,-1.2) -> (4,-1.2) -> (6,-2.2);
				
				\node [orange] at (1,-0.5) {$\varphi_1$};	
				\node [red] at (2.5,-0.5) {$\hsAt\varphi_1$};
			\node [orange] at (5,-2) {$\varphi_2$};	
				\node [red] at (2.5,-1.5) {$\hsA \varphi_2$};
						
				
			\end{tikzpicture}
\end{minipage}
    \caption{Computation-tree-based semantic variant $\HS_\LinearPast$: future is branching, past is linear, finite and cumulative.}
    \label{fig:CT}
\end{figure}
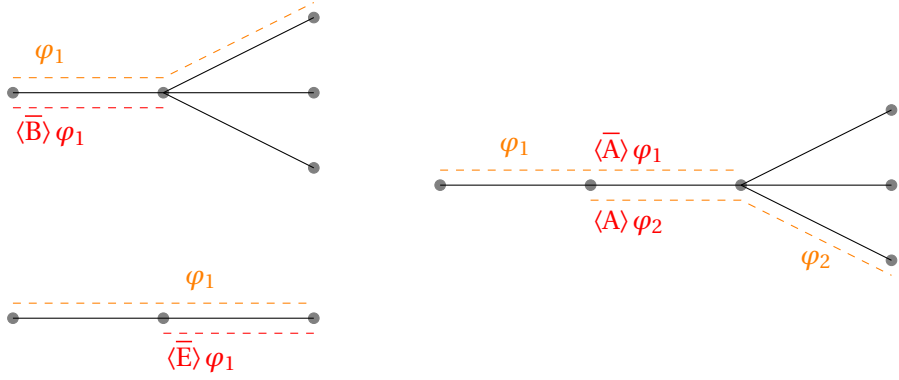

\paragraph{Our contribution.} 
In this paper, we study the expressiveness of $\HS$, in the context of MC, in comparison with that of the standard PTLs $\LTL$, $\CTL$, and $\CTLStar$. The analysis is carried on enforcing the homogeneity assumption.

We prove that $\HS$ endowed with the state-based semantics proposed in~\cite{DBLP:conf/time/MontanariMPP14,DBLP:journals/acta/MolinariMMPP16} (hereafter denoted as $\HS_\stat$) is not comparable with $\LTL$, $\CTL$, and $\CTLStar$. On the one hand, the result supports the intuition that $\HS_\stat$ gains some expressiveness by the ability of branching in the past. On the other hand, $\HS_\stat$ does not feature the possibility of forcing the verification of a property over an 
infinite path, thus implying that the formalisms are not comparable. With the aim of having a more ``effective'' comparison base, we consider two other semantic variants of $\HS$, 
namely, the \emph{computation-tree-based semantic variant} (denoted as $\HS_\LinearPast$) and the \emph{trace-based} one ($\HS_\LinearTime$). 

The state-based (see Figure~\ref{fig:ST}) and computation-tree-based (see Figure~\ref{fig:CT}) approaches rely on a branching-time setting and differ in the nature of past. In the latter approach, past is linear: each interval may have several possible futures, but only a unique past. Moreover, past is assumed to be finite 
and cumulative, that is, the story of the current situation increases with time, and is never forgotten. 
%
The trace-based approach relies on a linear-time setting  (see Figure~\ref{fig:LN}), where the infinite paths (computations) of the given Kripke structure are the main semantic entities. Branching is neither allowed in the past nor in the future.
Note that the linear-past (rather than branching) approach is more suited to the specification  of dynamic behaviors, because it considers states in a computation tree, while the branching-past approach considers machine states, where past is not very meaningful for the specification of behavioral constraints~\cite{LS95}.

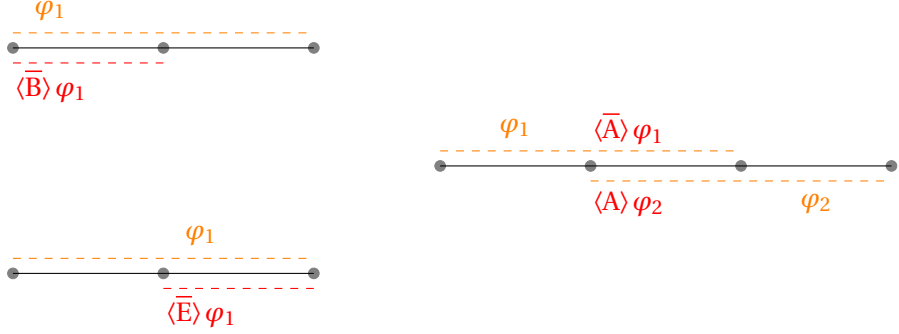
\begin{figure}[t]
\centering
\begin{minipage}{0.3\textwidth}
\begin{tikzpicture}
				\filldraw [gray] (0,2) circle (2pt)
				(2,2) circle (2pt)
				(4,2) circle (2pt);
				%
					\filldraw [gray] (4,-1) circle (2pt)
				(2,-1) circle (2pt)
				(0,-1) circle (2pt);
				%
				\draw [black]  (0,2) -- (4,2);
				\draw [black]  (0,-1) -- (4,-1);
				%
				%
				\draw [dashed, orange] (0,2.2) -> (2,2.2) -> (4,2.2);
				\draw [dashed, red] (0,1.8) -> (2,1.8);
				\draw [dashed, orange] (0,-0.8) -> (2,-0.8) -> (4,-0.8);
				\draw [dashed, red] (2,-1.2) -> (4,-1.2);
				
				\node [orange] at (0.5,2.5) {$\varphi_1$};	
				\node [red] at (0.5,1.5) {$\hsBt\varphi_1$};
			\node [orange] at (2.5,-0.5) {$\varphi_1$};	
				\node [red] at (2.5,-1.5) {$\hsEt\varphi_1$};
						
				
			\end{tikzpicture}
\end{minipage}
\hspace{1cm}
\begin{minipage}{0.45\textwidth}
        \begin{tikzpicture}
				%
					\filldraw [gray] (4,-1) circle (2pt)
				(2,-1) circle (2pt)
				(0,-1) circle (2pt)
				(6,-1) circle (2pt);
				%
				\draw [black]  (0,-1) -- (6,-1);
				%
				%
				\draw [dashed, orange] (0,-0.8) -> (2,-0.8) -> (4,-0.8);
				\draw [dashed, orange] (2,-1.2) -> (4,-1.2) -> (6,-1.2);
				
				\node [orange] at (1,-0.5) {$\varphi_1$};	
				\node [red] at (2.5,-0.5) {$\hsAt\varphi_1$};
			\node [orange] at (5,-1.5) {$\varphi_2$};	
				\node [red] at (2.5,-1.5) {$\hsA \varphi_2$};
						
				
			\end{tikzpicture}
\end{minipage}
    \caption{Trace-based semantic variant $\HS_\LinearTime$: neither past nor future are branching.}
    \label{fig:LN}
\end{figure}

The variant $\HS_\LinearPast$ is a natural candidate for an expressiveness comparison with the branching time logics  $\CTL$ and $\CTLStar$. The most interesting and technically involved result is the characterization of the expressive power of $\HS_\LinearPast$: $\HS_\LinearPast$ turns out to be expressively equivalent to finitary $\CTLStar$, that is, the variant of $\CTLStar$ with quantification over finite paths. As for $\CTL$, a non comparability result can be stated.

The variant $\HS_\LinearTime$ is a natural candidate for an expressiveness comparison with $\LTL$. 
We prove that $\HS_\LinearTime$ and $\LTL$ are equivalent (this result holds true even for a very small fragment of $\HS_\LinearTime$), but the former is at least exponentially more succinct than the latter. 

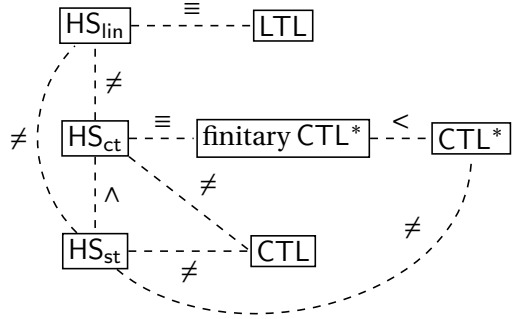
\begin{figure}[t]
\centering
\begin{tikzpicture}[-,>=stealth',shorten >=1pt,auto,semithick,main node/.style={rectangle,draw,inner sep=2pt}]  
\tikzstyle{gray node}=[fill=gray!30]
\node [main node](0) at (0,0) {$\HS_\LinearPast$};
\node [main node](1) at (0,1.5) {$\HS_\LinearTime$};
\node [main node](2) at (0,-1.5) {$\HS_\stat$};
\node [main node](3) at (2.5,0) {finitary $\CTLStar$};
\node [main node](4) at (2.5,1.5) {$\LTL$};
\node [main node](6) at (2.5,-1.5) {$\CTL$};
\node [main node](5) at (5,0) {$\CTLStar$};
\draw [dashed] (0.east) to node {$\equiv$} (3);
\draw [dashed] (1.east) to node {$\equiv$} (4);
\draw [dashed] (3.east) to node {$<$} (5);
\draw [dashed] (0.north) to node [swap] {$\neq$} (1);
\draw [dashed] (0.south) to node {\rotatebox{-90}{$<$}} (2);
\draw [dashed] (2.east) to node [swap] {$\neq$} (6.west);
\draw [dashed] (2) [out=-40,in=270] to node [near end] {$\neq$} (5);
\draw [dashed] (0.south east) to node {$\neq$} (6.west);
\draw [dashed] (2) [out=135,in=225] to node {$\neq$} (1);
\end{tikzpicture}
\vspace*{-0.6cm}
\caption{Overview of the expressiveness results.}\label{results}
\end{figure}

We complete the picture with a comparison of the three semantic variants $\HS_\stat$, $\HS_\LinearPast$, and $\HS_\LinearTime$. We prove that, as expected, $\HS_\LinearTime$ is not comparable with either of the branching versions, $\HS_\LinearPast$ and $\HS_\stat$. The interesting result is that, on the other hand, $\HS_\LinearPast$ is strictly included in $\HS_\stat$: this supports $\HS_\stat$, adopted in~\cite{DBLP:journals/acta/MolinariMMPP16,MMP15B,MMP15,MMPS16,BMMPS16,BMMPS16B}, as a reasonable and adequate semantic choice.
The complete picture of the expressiveness results is reported in Figure~\ref{results}
(the symbols $\neq$, $\equiv$, and $<$ denote incomparability, equivalence, and strict 
inclusion, respectively).

\paragraph{Structure of the paper.}
%
In Section~\ref{sec:backgr}, we introduce basic notation and
preliminary notions. In Subsection~\ref{sect:Kripke} we define Kripke structures and interval structures, 
in Subsection~\ref{sect:PTL} we recall the well-known PTLs $\LTL$, $\CTL$, and $\CTLStar$, and in Subsection \ref{sect:HS} we present the interval temporal logic $\HS$. Then, in Subsection~\ref{sect:3sem} we define the three semantic variants of $\HS$ ($\HS_\stat$, $\HS_\LinearPast$, and $\HS_\LinearTime$). Finally, in Subsection~\ref{subs:vendingMach} we provide a detailed example which gives an intuitive account of the three semantic variants and highlights their differences. 
In the next three sections, we analyze and compare their expressiveness.
In Section~\ref{sec:CharacterizezionOfLeniarTimeHS} we show the expressive equivalence of $\LTL$ and $\HS_\LinearTime$. Then, in Section~\ref{sec:characterizationHSLinearPast} we prove the expressive equivalence of $\HS_\LinearPast$ and finitary $\CTLStar$. Finally, in Section~\ref{sect:allSems} we compare the expressiveness of 
$\HS_\stat$, $\HS_\LinearPast$, and $\HS_\LinearTime$.
Conclusions summarize the work done and outline some directions for future research.
\section{Preliminaries}\label{sec:backgr}

In this section, we introduce the notation and some fundamental notions that will be extensively used in the rest of the paper.
Let $(\Nat,<)$ be the set of natural numbers equipped with the standard linear ordering. For all  $i,j\in\Nat $, with $i\leq j$, we denote by $[i,j]$ the set of natural numbers $h$ such that $i\leq h\leq j$.
Let $\Sigma$ be an alphabet and $w$ be a non-empty finite or infinite word over $\Sigma$. We denote by $|w|$ the length of $w$ ($|w|=\infty$ if $w$ is infinite). For all  $i,j\in\Nat $, with $i\leq j$, $w(i)$ denotes the
$i$-th letter of $w$, while $w[i,j]$ denotes the finite subword of $w$ given by $w(i)\cdots w(j)$. If $w$ is finite and $|w|=n+1$, we define
$\fst(w)=w(0)$ and $\lst(w)=w(n)$.
The sets of all proper prefixes and suffixes of $w$ are
$\Pref(w)=\{w[0,i] \mid 0\leq i\leq n-1\}$ and $\Suff(w)=\{w[i,n] \mid 1\leq i\leq n\}$, respectively.
The set of all the finite words over $\Sigma$ is denoted by $\Sigma^*$, and $\Sigma^+:=\Sigma^*\setminus\{\varepsilon\}$, where $\varepsilon$ is the empty word.

\subsection{Kripke structures and interval structures}\label{sect:Kripke}
Systems are usually modelled as Kripke structures. Let $\mathpzc{AP}$ be a finite set of proposition letters, which represent predicates decorating the states of the given system.


\begin{definition}[Kripke structure]
A \emph{Kripke structure} over  a finite set $\mathpzc{AP}$ of proposition letters is a tuple $\mathpzc{K}=(\mathpzc{AP},S, \delta,\mu,s_0)$, where  $S$ is a set of states,
$\delta\subseteq S\times S$ is a left-total transition relation, 
$\mu:S\to 2^\mathpzc{AP}$ is a total labelling function assigning to each state $s$ the set of proposition letters that hold over it, and $s_0\in S$ is the initial state. For $(s,s')\in \delta$, we say that $s'$ is a successor of $s$,
and $s$ is a predecessor of $s'$. Finally, we say that $\mathpzc{K}$ is finite if $S$ is finite.
\end{definition}


\begin{figure}[b]
\centering
\begin{tikzpicture}[->,>=stealth,thick,shorten >=1pt,auto,node distance=2cm,every node/.style={circle,draw}]
    \node [style={double}](v0) {$\stackrel{s_0}{p}$};
    \node (v1) [right of=v0] {$\stackrel{s_1}{q}$};
    \draw (v0) to [bend right] (v1);
    \draw (v1) to [bend right] (v0);
    \draw (v1) to [loop right] (v1);
\end{tikzpicture}
\caption{The Kripke structure $\mathpzc{K}$.}\label{KEquiv}
\end{figure}
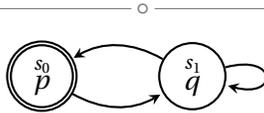
For example, Figure~\ref{KEquiv} depicts the finite Kripke structure $\mathpzc{K}=(\{p,q\},\{s_0,s_1\},\delta,\mu,s_0)$,
where $\delta\!=\!\{(s_0,s_1),(s_1,s_0),(s_1,s_1)\}$, 
$\mu(s_0)\!=\!\{p\}$, $\mu(s_1)\!=\!\{q\}$.
The initial state $s_0$ is marked by a double circle. 

Let $\mathpzc{K}=(\mathpzc{AP},S, \delta,\mu,s_0)$ be a Kripke structure. An infinite path $\pi$ of $\mathpzc{K}$ is an infinite word over $S$ such that $(\pi(i),\pi(i+1))\in \delta$ for all $i\geq 0$.
A \emph{trace} (or finite path) of $\mathpzc{K}$ is a non-empty prefix of some infinite path of $\mathpzc{K}$. A finite or infinite path is \emph{initial} if it starts from the initial state of $\mathpzc{K}$.
Let $\Trk_\mathpzc{K}$ be the (\emph{infinite}) set of all traces of $\mathpzc{K}$ and
$\Trk_\mathpzc{K}^{0}$ be the set of initial traces of $\mathpzc{K}$.
For a trace $\rho$, $\states(\rho)$ denotes the set of states occurring in $\rho$, i.e., $\states(\rho)\!=\!\{\rho(0),\ldots,\rho(n)\}$, where $|\rho|\!=\! n+1$.

%

We now introduce the notion of \emph{$D$-tree structure}, namely, an infinite tree-shaped Kripke structure with branches over a set $D$ of directions.

\begin{definition}[$D$-tree structure] Given a set $D$ of directions,
a \emph{$D$-tree structure} (over $\mathpzc{AP}$) is a Kripke structure $\mathpzc{K}=(\mathpzc{AP},S, \delta,\mu,s_0)$
such that $s_0\in D$, $S$ is a prefix closed subset of $D^{+}$, and $\delta$ is the set of pairs $(s,s')\in S\times S$ such that there exists $d\in D$
for which $s'=s\cdot d$ (note that $\delta$ is completely specified by $S$). The states of a $D$-tree structure are called \emph{nodes}.
\end{definition}

A Kripke structure $\mathpzc{K}=(\mathpzc{AP},S, \delta,\mu,s_0)$ induces an $S$-tree structure, called the \emph{computation tree of $\mathpzc{K}$}, denoted by
$\mathpzc{C}(\mathpzc{K})$, which is obtained by unwinding $\mathpzc{K}$ 
from the initial state (note that the directions are the set of states of $\mathpzc{K}$).
Formally, $\mathpzc{C}(\mathpzc{K})= (\mathpzc{AP},\Trk_\mathpzc{K}^{0}, \delta',\mu',s_0)$, where the set of nodes is the set of initial traces of
$\mathpzc{K}$ and  for all $\rho,\rho'\in \Trk_\mathpzc{K}^{0}$, $\mu'(\rho)=\mu(\lst(\rho))$ and $(\rho,\rho')\in \delta'$ if and only if
$\rho'=\rho\cdot s$ for some $s\in S$. See Figure~\ref{fig:unr} for an example.
\begin{figure}[t]
    \centering
\begin{tikzpicture}[->,>=stealth,thick,shorten >=1pt,auto,node distance=1.6cm,every node/.style={ellipse,draw}]
			\node[double] (v0) {$s_0$};
			\node (v1) at (0,-1) {$s_0s_1$};
			\node (v01) at (-1,-2) {$s_0s_1s_0$};
			\node (v11) at (1,-2) {$s_0s_1s_1$};
			\node (v02) at (-1,-3) {$s_0s_1s_0s_1$};
			\node (v022) at (1,-3) {$s_0s_1s_1s_0$};
			\node (v12) at (3,-3) {$s_0s_1s_1s_1$};
			

			\node (v30) [draw=none, below =0.2cm of v02] {\Large $\cdots$};
			\node (v31) [draw=none, below left =0.21cm and 0.09cm of v12] {\Large $\cdots$};

			\draw (v0) to (v1);
			\draw (v1) to (v01); \draw (v1) to (v11);
			\draw (v01) to (v02);
			\draw (v11) to (v022);
			\draw (v11) to (v12);
		\end{tikzpicture}
	\vspace{-0.3cm}
    \caption{Computation tree $\mathpzc{C}(\mathpzc{K})$ of the Kripke structure $\Ku$ of Figure~\ref{KEquiv}.}
    \label{fig:unr}
\end{figure}

Given a strict partial ordering $\mathbb{S}=(X,<)$, an \emph{interval} in $\mathbb{S}$ is an ordered pair
$[x,y]$ such that $x,y\in X$ and $x\leq y$. The interval $[x,y]$ denotes the subset of $X$ given by the set of points $z\in X$ such
that $x\leq z\leq y$. We denote by $\mathbb{I}(\mathbb{S})$ the set of intervals in $\mathbb{S}$.

\begin{definition}[Interval structure]
An \emph{interval structure} $\mathpzc{IS}$ over  $\mathpzc{AP}$ is a pair $\mathpzc{IS}=(\mathbb{S},\sigma)$ such that $\mathbb{S}=(X,<)$ is a strict partial ordering
and $\sigma: \mathbb{I}(\mathbb{S}) \to 2^\mathpzc{AP}$ is a labeling function assigning a set of proposition letters
to each interval over $\mathbb{S}$.
\end{definition}

\subsection{Standard temporal logics}\label{sect:PTL}

In this subsection, we recall the standard propositional temporal logics $\CTLStar$, $\CTL$,  and $\LTL$~\cite{emerson1986sometimes,pnueli1977temporal}.
Given a set of proposition letters $\mathpzc{AP}$, the formulas $\varphi$ of
$\CTLStar$ are defined as follows:
\[
\varphi ::= \top \ | \ p \ | \ \neg \varphi \ | \ \varphi \wedge \varphi \ | \ \Next \varphi\ | \ \varphi \until \varphi\ | \ \EQ  \varphi ,
\]
where $p\in \mathpzc{AP}$, $\Next$ and $\until$ are the
``next'' and ``until'' temporal modalities,   and $\EQ$ is the
 existential path quantifier.
\footnote{Hereafter, we denote by $\exists/\forall$ the existential/universal path quantifiers (instead of by the usual E/A), in order not to confuse them with the $\HS$ modalities $\mathsf{E}/\mathsf{A}$.}
 We also use the standard shorthands $\AQ\varphi:=\neg\EQ\neg\varphi$ (``universal path quantifier''),
$\Eventually\varphi:= \top \until \varphi$ (``eventually'' or ``in the future'') and its
dual $\Always \varphi:=\neg \Eventually\neg\varphi$ (``always'' or ``globally'').
Hereafter, we denote by $|\varphi|$ the size of $\varphi$, that is, the number of its symbols/subformulas.

The logic $\CTL$ is the  fragment of $\CTLStar$ where each temporal modality is immediately preceded by a path quantifier, whereas  $\LTL$ corresponds to the path-quantifier-free  fragment of $\CTLStar$.

Given a Kripke
structure $\mathpzc{K}=(\mathpzc{AP},S,\delta,\mu,s_0)$, an infinite path $\pi$ of
$\mathpzc{K}$, and a position $i\geq 0$ along $\pi$, the
satisfaction relation $\mathpzc{K},\pi,i \models \varphi$ for
$\CTLStar$, written simply $\pi,i \models \varphi$ when $\mathpzc{K}$ is clear from the context, is defined as follows (Boolean connectives are treated as usual):
\[ \begin{array}{ll}
\pi, i \models p  &  \Leftrightarrow  p \in \mu(\pi(i)),\\
\pi, i \models \Next \varphi  & \Leftrightarrow   \pi, i+1 \models \varphi ,\\
\pi, i \models \varphi_1\until \varphi_2  &
  \Leftrightarrow  \text{for some $j\geq i$}: \pi, j
  \models \varphi_2
  \text{ and }  \pi, k \models  \varphi_1 \text{ for all }i\leq k<j,\\
\pi, i \models \EQ \varphi  & \Leftrightarrow \text{for some infinite path } \pi'  \text{ starting from $\pi(i)$, }  \pi', 0 \models \varphi .
\end{array} \]
The model checking (MC) problem is defined as follows: $\mathpzc{K}$ is a model of $\varphi$, written $\mathpzc{K}\models \varphi$, if for all initial infinite paths $\pi$ of $\mathpzc{K}$, it holds that $\mathpzc{K},\pi, 0 \models\varphi$.

We also consider a variant of $\CTLStar$, called \emph{finitary} $\CTLStar$, where the path quantifier $\EQ$ of $\CTLStar$ is replaced by the finitary path
quantifier $\EQF$. In this setting, path quantification ranges over the traces (finite paths) starting from the current state.
The satisfaction relation $\rho, i \models \varphi$, where $\rho$ is a trace and $i$ is a position along $\rho$, is similar to that given for $\CTLStar$ with the only difference of finiteness of paths, and the fact that for a formula $\Next\varphi$, 
  $\rho, i \models \Next\varphi$ if and only if $i+1<|\rho|$ and $\rho, i+1 \models \varphi$.    A Kripke structure $\mathpzc{K}$  is
  a model of a finitary $\CTLStar$ formula if for each initial trace $\rho$ of  $\mathpzc{K}$,  it holds that $\mathpzc{K},\rho, 0 \models\varphi$.
  
The MC problem for both $\CTLStar$ and $\LTL$ is $\PSPACE$-complete~\cite{DBLP:conf/popl/EmersonL85,DBLP:journals/jacm/SistlaC85}. It is not difficult to show that, as it happens with finitary $\LTL$~\cite{DBLP:conf/ijcai/GiacomoV13}, MC for finitary $\CTLStar$ is $\PSPACE$-complete as well.

\subsection{The interval temporal logic $\HS$}\label{sect:HS}
An interval algebra was proposed by Allen in~\cite{All83} to reason about intervals and their relative order, while a systematic logical study of interval representation and reasoning was done a few years later by Halpern and Shoham, that introduced the interval temporal logic $\HS$ featuring one modality for each Allen
relation, but equality~\cite{HS91}.
Table~\ref{allen} depicts 6 of the 13 Allen's relations,
together with the corresponding $\HS$ (existential) modalities. The other 7 relations are the 6 inverse relations (given a binary relation $\mathpzc{R}$, the inverse relation $\overline{\mathpzc{R}}$ is such that $b \overline{\mathpzc{R}} a$ if and only if $a \mathpzc{R} b$) and equality.

\begin{table}[b]
\centering
\caption{Allen's relations and corresponding $\HS$ modalities.}\label{allen}
\resizebox{\width}{\height}{
\begin{tabular}{cclc}
\hline
\rule[-1ex]{0pt}{3.5ex} Allen relation & $\HS$ & Definition w.r.t. interval structures &  Example\\
\hline

&   &   & \multirow{7}{*}{\begin{tikzpicture}[scale=0.96]
\draw[draw=none,use as bounding box](-0.3,0.2) rectangle (3.3,-3.1);
\coordinate [label=left:\textcolor{red}{$x$}] (A0) at (0,0);
\coordinate [label=right:\textcolor{red}{$y$}] (B0) at (1.5,0);
\draw[red] (A0) -- (B0);
\fill [red] (A0) circle (2pt);
\fill [red] (B0) circle (2pt);
\coordinate [label=left:$v$] (A) at (1.5,-0.5);
\coordinate [label=right:$z$] (B) at (2.5,-0.5);
\draw[black] (A) -- (B);
\fill [black] (A) circle (2pt);
\fill [black] (B) circle (2pt);
\coordinate [label=left:$v$] (A) at (2,-1);
\coordinate [label=right:$z$] (B) at (3,-1);
\draw[black] (A) -- (B);
\fill [black] (A) circle (2pt);
\fill [black] (B) circle (2pt);
\coordinate [label=left:$v$] (A) at (0,-1.5);
\coordinate [label=right:$z$] (B) at (1,-1.5);
\draw[black] (A) -- (B);
\fill [black] (A) circle (2pt);
\fill [black] (B) circle (2pt);
\coordinate [label=left:$v$] (A) at (0.5,-2);
\coordinate [label=right:$z$] (B) at (1.5,-2);
\draw[black] (A) -- (B);
\fill [black] (A) circle (2pt);
\fill [black] (B) circle (2pt);
\coordinate [label=left:$v$] (A) at (0.5,-2.5);
\coordinate [label=right:$z$] (B) at (1,-2.5);
\draw[black] (A) -- (B);
\fill [black] (A) circle (2pt);
\fill [black] (B) circle (2pt);
\coordinate [label=left:$v$] (A) at (1.3,-3);
\coordinate [label=right:$z$] (B) at (2.3,-3);
\draw[black] (A) -- (B);
\fill [black] (A) circle (2pt);
\fill [black] (B) circle (2pt);
\coordinate (A1) at (0,-3);
\coordinate (B1) at (1.5,-3);
\draw[dotted] (A0) -- (A1);
\draw[dotted] (B0) -- (B1);
\end{tikzpicture}}\\

\textsc{meets} & $\hsA$ & $[x,y]\mathpzc{R}_A[v,z]\iff y=v$ &\\

\textsc{before} & $\hsL$ & $[x,y]\mathpzc{R}_L[v,z]\iff y<v$ &\\

\textsc{started-by} & $\hsB$ & $[x,y]\mathpzc{R}_B[v,z]\iff x=v\wedge z<y$ &\\

\textsc{finished-by} & $\hsE$ & $[x,y]\mathpzc{R}_E[v,z]\iff y=z\wedge x<v$ &\\

\textsc{contains} & $\hsD$ & $[x,y]\mathpzc{R}_D[v,z]\iff x<v\wedge z<y$ &\\

\textsc{overlaps} & $\hsO$ & $[x,y]\mathpzc{R}_O[v,z]\iff x<v<y<z$ &\\

\hline
\end{tabular}}
\end{table}

For a set of proposition letters $\mathpzc{AP}$, the formulas $\psi$ of
$\HS$ are defined as follows:
\[
    \psi ::= p \;\vert\; \neg\psi \;\vert\; \psi \wedge \psi \;\vert\; \langle X\rangle\psi, 
\]
where $p\in\mathpzc{AP}$ and $X\in\{A,L,B,E,D,O,\overline{A},\overline{L},\overline{B},\overline{E},\overline{D},
\overline{O}\}$. For any modality $\langle X\rangle$, the dual universal modality $[X]\psi$ is defined as $\neg\langle X\rangle\neg\psi$.
For any subset of Allen's relations $\{X_1,\ldots,X_n\}$,  $\mathsf{X_1 \cdots X_n}$  denotes the $\HS$ fragment featuring (universal and existential) modalities for $X_1,\ldots, X_n$ only.


We assume the \emph{non-strict semantic version of $\HS$}, which admits intervals consisting of a single point.\footnote{All the results we prove in the paper hold for the strict version as well.} Under such an assumption, all $\HS$ modalities can be expressed in terms of 
$\hsB, \hsE, \hsBt$, and $\hsEt$ \cite{Ven90}. As an example, $\hsA$ can be expressed in terms of $\hsE$ and $\hsBt$ as: $\hsA \varphi:= ([E]\bot \wedge (\varphi \vee \hsBt \varphi)) \vee \hsE ([E]\bot \wedge (\varphi \vee \hsBt \varphi))$. We also use the derived operator $\hsG$ of $\HS$ (and its dual $[G]$), which allows one to select  arbitrary subintervals of a given interval, and is defined as: $\hsG\psi:= \psi \vee \hsB\psi \vee \hsE\psi \vee \hsB\hsE\psi$.

$\HS$ can be viewed as a multi-modal logic with $\hsB, \hsE, \hsBt$, and $\hsEt$ as primitive modalities
and its semantics can be defined over a multi-modal Kripke structure, called \emph{abstract interval model}, where intervals are treated as atomic objects and Allen's relations as binary relations over intervals.

\begin{definition}[Abstract interval model~\cite{DBLP:journals/acta/MolinariMMPP16}]
An \emph{abstract interval model} over $\mathpzc{AP}$ is a tuple $\mathpzc{A}=(\mathpzc{AP},\mathbb{I},B_\mathbb{I},E_\mathbb{I},\allowbreak \sigma)$, where
     $\mathbb{I}$ is a set of worlds,
     $B_\mathbb{I}$  and $E_\mathbb{I}$ are two binary relations over $\mathbb{I}$, and
     $\sigma:\mathbb{I}\to 2^{\mathpzc{AP}}$ is a labeling function assigning a set of proposition letters to each world.
\end{definition}


Let  $\mathpzc{A}=(\mathpzc{AP},\mathbb{I},   B_\mathbb{I},E_\mathbb{I},\sigma)$
be an abstract interval model. In the interval setting, $\mathbb{I}$ is interpreted as a set of intervals, $B_\mathbb{I}$ and $E_\mathbb{I}$ as Allen's relations $B$ (\emph{started-by})  and $E$ (\emph{finished-by}), respectively, and $\sigma$ assigns to each interval in $\mathbb{I}$ the set of proposition letters that hold over it.
Given
an interval $I\in\mathbb{I}$, the truth of an $\HS$ formula over $I$ is inductively defined as follows (the Boolean connectives are treated as usual):
\begin{itemize}
    \item $\mathpzc{A},I\models p$ if and only if $p\in \sigma(I)$, for any $p\in\mathpzc{AP}$;
    \item $\mathpzc{A},I\models \langle X\rangle\psi$, for $X \in\{ B,E\}$, if and only if there exists $J\in\mathbb{I}$ such that $I\, X_\mathbb{I}\, J$ and $\mathpzc{A},J\models \psi$;
    \item $\mathpzc{A},I\models \langle \overline{X}\rangle\psi$, for $\overline{X} \in\{\overline{B},\overline{E}\}$, if and only if there exists $J\in\mathbb{I}$ such that $J\, X_\mathbb{I}\, I$ and $\mathpzc{A},J\models \psi$.
\end{itemize}

The next definition shows how to derive an abstract interval model from an interval structure.

\begin{definition}[Abstract interval model induced by an interval structure] 
An interval structure 
$\mathpzc{IS}=(\mathbb{S},\sigma)$, with $\mathbb{S}=(X,<)$,  \emph{induces} the abstract interval model
$ \mathpzc{A}_{\mathpzc{IS}}=(\mathpzc{AP},\mathbb{I}(\mathbb{S}), B_{\mathbb{I}(\mathbb{S})},E_{\mathbb{I}(\mathbb{S})},$ $\sigma)$,
where
 $[x,y] \, B_{\mathbb{I}(\mathbb{S})}\, [v,z]$ iff $x=v$ and $z<y$, and
 $[x,y] \, E_{\mathbb{I}(\mathbb{S})}\, [v,z]$ iff $y=z$ and $x<v$.
\end{definition}
For an interval $I$ and an $\HS$ formula $\psi$, we write $\mathpzc{IS},I\models \psi$ to mean that $\mathpzc{A}_{\mathpzc{IS}},I\models \psi$.

\subsection{Three semantic variants of $\HS$ for MC}\label{sect:3sem}

In this section we define the three variants of $\HS$ semantics $\HS_\stat$ (state-based), $\HS_\LinearPast$ (computation-tree-based), and $\HS_\LinearTime$ (trace-based) for model checking $\HS$ formulas against Kripke structures.
For each variant, the related (finite) MC problem consists of deciding whether or not a finite Kripke structure is a model of an $\HS$ formula under such a semantic variant.

%

Let us start with the \emph{state-based variant}~\cite{DBLP:conf/time/MontanariMPP14,DBLP:journals/acta/MolinariMMPP16},
where 
an abstract interval model 
is naturally associated with a given Kripke structure $\mathpzc{K}$ 
by considering the set of intervals as the set 
$\Trk_\mathpzc{K}$
of traces of $\mathpzc{K}$. 

\begin{definition}[Abstract interval model induced by a Kripke structure]\label{def:inducedmodel}
The \emph{abstract interval model induced by a  Kripke structure} $\mathpzc{K}=(\mathpzc{AP},S,\delta,\mu,s_0)$ is
$\mathpzc{A}_\mathpzc{K}=(\mathpzc{AP},\mathbb{I},B_\mathbb{I},E_\mathbb{I},\sigma)$, where
    $\mathbb{I}=\Trk_\mathpzc{K}$,
    $B_\mathbb{I}=\{(\rho,\rho')\in\mathbb{I}\times\mathbb{I}\mid \rho'\in\Pref(\rho)\}$,
    $E_\mathbb{I}=\{(\rho,\rho')\in\mathbb{I}\times\mathbb{I}\mid \rho'\in\Suff(\rho)\}$, and
    $\sigma:\mathbb{I}\to 2^\mathpzc{AP}$ is such that $\sigma(\rho)=\bigcap_{s\in\states(\rho)}\mu(s)$, for all $\rho\in\mathbb{I}$.
\end{definition}
\noindent
According to the definition of $\sigma$,
$p\in\mathpzc{AP}$ holds over $\rho=s_1\cdots s_n$
if and only if it holds over all the states $s_1, \ldots , s_n$ of $\rho$. This conforms to the \emph{homogeneity principle}, according to which a proposition letter holds over an interval
if and only if it holds over all its subintervals~\cite{roe80}.

\begin{definition}[State-based $\HS$---$\HS_\stat$]
Let $\mathpzc{K}$ be a Kripke structure and
$\psi$ be an $\HS$ formula.
A trace $\rho\in\Trk_\mathpzc{K}$ satisfies $\psi$ under the state-based semantic variant,
denoted as $\mathpzc{K},\rho\models_\stat \psi$, if it holds that $\mathpzc{A}_\mathpzc{K},\rho\models \psi$.
Moreover,
\emph{$\mathpzc{K}$ is a model of $\psi$ under the state-based semantic variant}, denoted as $\mathpzc{K}\models_\stat \psi$, if
for all \emph{initial} traces $\rho\in\Trk_\mathpzc{K}^{0}$, it holds that $\mathpzc{K},\rho\models_\stat \psi$.
\end{definition}

We now introduce the \emph{computation-tree-based semantic variant}, where we simply consider the abstract interval model \emph{induced by the computation tree} of the Kripke structure. 
Notice that since each state in a computation tree has a unique predecessor (with the exception of the initial state), this $\HS$ variant enforces a linear reference in the past.

\begin{definition}[Computation-tree-based $\HS$---$\HS_\LinearPast$]
A Kripke structure \emph{$\mathpzc{K}$ is a model of an $\HS$ formula $\psi$ under the computation-tree-based semantic variant}, written $\mathpzc{K}\models_\LinearPast \psi$, if
$\mathpzc{C}(\mathpzc{K})\models_\stat \psi$.
\end{definition}

Finally, we define the \emph{trace-based semantic variant}, which exploits the interval structures induced by the infinite paths of the Kripke structure.

\begin{definition}[Interval structure induced by an infinite path]\label{def:inducedPathIntervalStructure}
For a Kripke structure $\mathpzc{K}=(\mathpzc{AP},S,\delta,\mu,s_0)$ and an infinite path $\pi=\pi(0)\pi(1)\cdots$ of $\mathpzc{K}$,
the \emph{interval structure induced by $\pi$} is
$\mathpzc{IS}_{\mathpzc{K},\pi}=((\Nat,<),\sigma)$, where
 for each interval $[i,j]$,   $\sigma([i,j])=\bigcap_{h=i}^{j}\mu(\pi(h))$.
\end{definition}

\begin{definition}[Trace-based $\HS$---$\HS_\LinearTime$]
A Kripke structure $\mathpzc{K}$ \emph{is a model of an $\HS$ formula $\psi$ under the trace-based semantic variant}, denoted as $\mathpzc{K}\models_\LinearTime \psi$, if and only if
for each initial infinite path $\pi$  and for each initial interval $[0,i]$, it holds that  $\mathpzc{IS}_{\mathpzc{K},\pi},[0,i]\models \psi$. 
\end{definition}


In the next sections, we compare the expressiveness of the logics $\HS_{\stat}$, $\HS_{\LinearPast}$, $\HS_{\LinearTime}$, $\LTL$, $\CTL$, and $\CTLStar$ when interpreted over finite Kripke structures.
Given two logics $L_1$ and $L_2$, and two formulas $\varphi_1\in L_1$ and $\varphi_2\in L_2$, we say that   $\varphi_1$ in $L_1$ is \emph{equivalent} to  $\varphi_2$ in $L_2$ if, for every finite Kripke structure   $\mathpzc{K}$, 
$\mathpzc{K}$ is a model of $\varphi_1$ in $L_1$ if and only if $\mathpzc{K}$ is a model of $\varphi_2$ in $L_2$. 
We say that \emph{$L_2$ is subsumed by $L_1$}, denoted as $L_1\geq L_2$, if for each formula $\varphi_2\in L_2$, there exists a formula $\varphi_1\in L_1$ such that $\varphi_1$ in $L_1$ is equivalent to $\varphi_2$
in $L_2$. Moreover $L_1$ is \emph{as expressive as} $L_2$ (or $L_1$  and $L_2$ have \emph{the same expressive power}), written $L_1\equiv L_2$, if both $L_1\geq L_2$ and $L_2\geq L_1$. We say that  $L_1$ is \emph{(strictly) more expressive than} $L_2$ if $L_1\geq L_2$ and $L_2\not\geq L_1$. Finally $L_1$ and $L_2$ are \emph{expressively incomparable} if both $L_1\not\geq L_2$ and $L_2\not\geq L_1$.


\subsection{An example: a vending machine}\label{subs:vendingMach}
In this section, we give an example highlighting the differences among the $\HS$ semantic variants $\HS_\stat$, $\HS_\LinearPast$, and $\HS_\LinearTime$.

\begin{figure}[t]
\centering
\scalebox{0.8}{
\begin{tikzpicture}[->,>=stealth',shorten >=1pt,auto,node distance=2.8cm,semithick]
  \tikzstyle{every state}=[inner sep=1pt, outer sep=0pt, minimum size = 40pt]

  \node[accepting,state] (A0)                    {$\stackrel{s_0}{p_\text{\$=0}}$};
  \node[state]         (B2) [below of=A0] {$\stackrel{s_2}{p_\text{\$=2}}$};
  \node[state]         (B1) [left of=B2] {$\stackrel{s_1}{p_\text{\$=1}}$};
  \node[state]         (B05) [right of=B2] {$\stackrel{s_3}{p_\text{\$=0.50}}$};
  \node[state]         (C1) [below of=B1] {$\stackrel{s_4}{p_\text{candy}}$};
  \node[state]         (C2) [below of=B2] {$\stackrel{s_5}{p_\text{hotdog}}$};
  \node[state]         (C05) [below of=B05] {$\stackrel{s_6}{p_\text{water}}$};
  \node[state]         (D) [below of=C2] {$\stackrel{s_7}{p_\text{change}}$};
  
  \node[state]         (M1) [right of=C05] {$\stackrel{s_8}{p_\text{maint}}$};
  \node[state]         (M2) [right of=B05] {$\stackrel{s_9}{p_\text{maint\_end}}$};
  
  \path (A0) edge  [swap]  node {ins\_\$2} (B2)
            edge  [bend right,swap]  node {ins\_\$1} (B1)
            edge  [bend left,swap]  node {ins\_\$0.50} (B05)
        (B2) edge  [near start]  node {sel} (C2)
             edge  [near end]  node {sel} (C1)
             edge   node {sel} (C05)
        (B1) edge   node {sel} (C1)
             edge  [very near start]  node {sel} (C05)
        (B05) edge  [near end]  node {sel} (C05)
        (C1) edge   [bend right,very near start]  node {dispensed} (D)
        (C2) edge  [swap]  node {dispensed} (D)
        (C05) edge  [bend left,swap]  node {dispensed} (D)
        (D) edge   [bend right,swap]  node {change\_given} (M1)
        (D) edge   [out=200,in=160,looseness=1.7]  node {change\_given} (A0)
        (M1) edge  [near end]  node {maint\_ongoing} (M2)
        (M2) edge [bend left] node {maint\_failed} (M1)
        (M2) edge   [bend right,swap]  node {maint\_success} (A0);
        
\draw (-3.6,1) [dashed] rectangle (3.6,-9.5);
\draw (4.5,-1.5) [dashed] rectangle (6.8,-7);
\node (P1) at (4.3,1) {$p_\text{operative}$};
\node (P2) at (7,-1.3) {$\neg p_\text{operative}$};
\end{tikzpicture}
}
\vspace{-1cm}
\caption{Kripke structure representing a vending machine.}\label{fig:vending}
\end{figure}
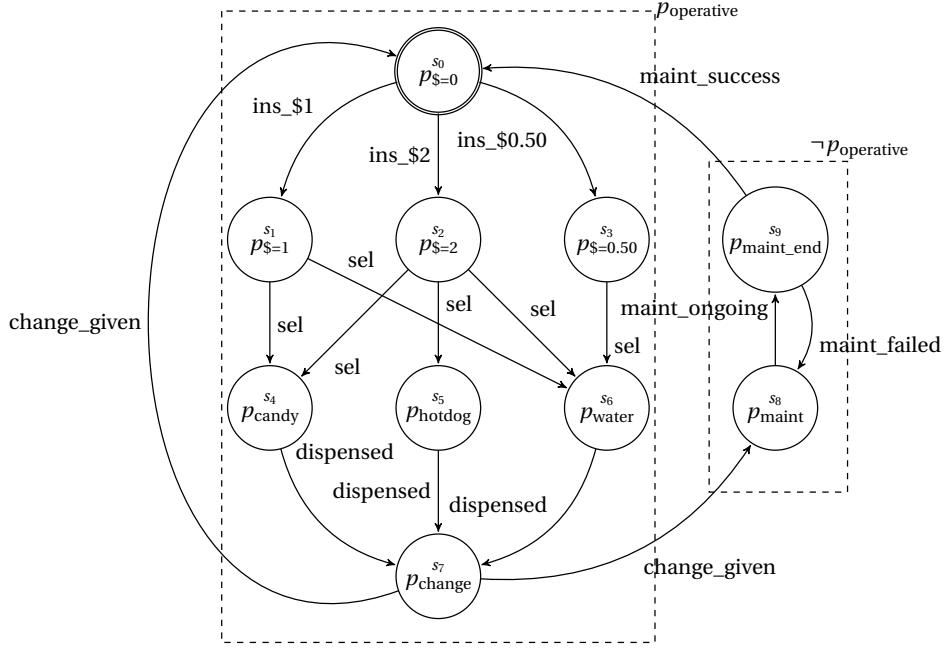

The Kripke structure of Figure~\ref{fig:vending} represents a \emph{vending machine}, which can dispense water, hot dogs, and candies.
In state $s_0$ (the initial one), no coin has been inserted into the machine (hence, the proposition letter $p_\text{\$=0}$ holds there).
Three edges, labelled by \lq\lq ins\_\$1\rq\rq, \lq\lq ins\_\$2\rq\rq, and \lq\lq ins\_\$0.50\rq\rq{}, connect $s_0$ to $s_1$, $s_2$, and $s_3$, respectively. 
Edge labels do not convey semantic value (they are neither part of the structure definition nor associated with proposition letters) and are simply used for an easy reference to edges. 
In $s_1$ (resp., $s_2$, $s_3$) the proposition letter $p_\text{\$=1}$ (resp., $p_\text{\$=2}$, $p_\text{\$=0.50}$) holds, representing the fact that 1 Dollar (resp., 2, 0.50 Dollars) has been inserted into the machine.
The cost of a bottle of water (resp., a candy, a hot dog) is \$0.50 (resp., \$1, \$2). A state $s_i$, for $i=1,2,3$, is connected to a state $s_j$, for $j=4,5,6$, only if the available credit allows one to buy the corresponding item.
Then, edges labelled by \lq\lq dispensed\rq\rq{} connect $s_4,s_5$, and $s_6$ to $s_7$. In $s_7$, the machine gives change, and  can nondeterministically move back to $s_0$ (ready for dispensing another item), or to $s_8$, where it begins an automatic maintenance activity ($p_\text{maint}$ holds there). Afterwards, state $s_9$ is reached, where maintenance ends. From there, if the maintenance activity fails (edge \lq\lq maint\_failed\rq\rq ), $s_8$ is reached again (another maintenance cycle is attempted); otherwise, maintenance concludes successfully (\lq\lq maint\_success\rq\rq ) and $s_0$ is reached. Since the machine is operating in states $s_0,\ldots,s_7$, and under maintenance in $s_8$ and $s_9$, $p_\text{operative}$ holds over the former, and it does not on the latter.

In the following, we will make use of the $\mathsf{B}$ formulas $\Length_{n}$, with $n\geq 1$: for any given $n$,
$\Length_{n}$ characterizes the intervals of length $n$, and is defined as follows:
\[\Length_n:= (\underbrace{\hsB\ldots \hsB}_{n-1} \top)\,\,\wedge\,\,
(\underbrace{[B]\ldots [B]}_{n} \bot).\]

We now give some examples of properties we can formalize under all, or some, of the $\HS$ semantic variants $\HS_\stat$, $\HS_\LinearPast$, and $\HS_\LinearTime$.

\begin{itemize}
    \item In any run of length 50, during which the machine never enters maintenance mode, it dispenses at least a hotdog, a bottle of water and a candy. 
    \[
        \Ku\not\models (p_\text{operative} \wedge \Length_{= 50})\longrightarrow \big((\hsB\hsE p_\text{hotdog}) \wedge  (\hsB\hsE p_\text{water}) \wedge (\hsB\hsE p_\text{candy}) \big)
    \]
    Clearly this property is false, as the machine can possibly dispense only one or two kinds of items.
    We start by observing that the above formula is equivalent in all of the three semantic variants of $\HS$: since modalities $\hsB$ and $\hsE$ only allow one to \lq\lq move\rq\rq{} from an interval to its subintervals, $\B\E_\LinearTime$, $\B\E_\stat$, and $\B\E_\LinearPast$ coincide (for this reason, we have omitted the subscript from the symbol $\models$). 
    Homogeneity plays a fundamental role here: asking $p_\text{operative}$ to be true implies that such a letter is true along the whole trace (thus $s_8$ and $s_9$ are always avoided).
    
    It is worth observing that the same property can be expressed in $\LTL$, for instance as follows:
    \[
    \smashoperator{\bigwedge_{i\in\{0,\ldots,49\}}}\Next^i p_\text{operative} \wedge\smashoperator[r]{\bigvee_{i,j,k\in\{1,\ldots,48\}, i\neq j\neq k\neq i}} (\Next^i p_\text{hotdog}) \wedge (\Next^j p_\text{water}) \wedge (\Next^k p_\text{candy}).
    \]
    The length of this $\LTL$ formula is \emph{exponential} in the number of items (in this case, 3), whereas the length of the above $\HS$ one is only linear. As a matter of fact, we will prove (Theorem~\ref{theo:succinctnessHSlin}) that $\B\E$ is at least exponentially more succinct than $\LTL$.
    
    \item If the credit is \$0.50, then no hot dog or candy may be provided.
    \[
        \Ku\models (\hsE p_\text{\$=0.50})\longrightarrow \neg \hsA (\Length_{=2} \wedge \hsE (p_\text{hotdog} \vee p_\text{candy})) 
    \]
    We observe that a trace satisfies $\hsE p_\text{\$=0.50}$ if and only if it ends in $s_3$.
    This property is satisfied under all of the three semantic variants, even though the nature of future differs among them (recall Figure~\ref{fig:ST}, \ref{fig:CT}, and \ref{fig:LN}). As we have already mentioned, a linear setting (rather than branching) is suitable for the specification  of dynamic behaviors, because it considers states \emph{of a computation}; conversely, a branching approach focuses on machine states (and thus on the structure of a system).
    
    In this case, only the state $s_6$ can be reached from $s_3$, regardless of the nature of future. For this reason, $\HS_\stat$, $\HS_\LinearPast$, and $\HS_\LinearTime$ behave in the same way. 
    
    \item Let us exemplify now a difference between $\HS_\stat$ (and $\HS_\LinearPast$) and $\HS_\LinearTime$.
    \[
        \begin{array}{l}
            \Ku\models_\stat \\
            \Ku\models_\LinearPast  \\
            \Ku\not\models_\LinearTime
        \end{array}
        (\hsE p_\text{maint\_end})\longrightarrow \hsA\hsE p_\text{operative}
    \]
    This is a structural property, requiring that when the machine enters state $s_9$ (where maintenance ends), it can become again operative reaching state $s_0$ ($s_9$ is not a lock state for the system). This is clearly true when future is branching and it is not when future is linear: $\HS_\LinearTime$ refers to system computations, and some of these may ultimately loop between $s_8$ and $s_9$.
    
    \item 
    Conversely, some properties make sense only if they are predicated over computations. This is the case, for instance, of fairness.
    \[
        \begin{array}{l}
            \Ku\models_\stat \\
            \Ku\models_\LinearPast  \\
            \Ku\not\models_\LinearTime
        \end{array}
        ([A]\hsA\hsE p_\text{maint})\longrightarrow [A]\hsA\hsE p_\text{operative}
    \]
    Assuming the trace-based semantics, the property requires that if a system computation enters infinitely often into maintenance mode, it will infinitely often enter operation mode.
    Again, this is not true, as some system computations may ultimately loop between $s_8$ and $s_9$ (hence, they are not fair). On the contrary, such a property is trivially true under $\HS_\stat$ or $\HS_\LinearPast$, as, for any initial trace $\rho$, it holds that $\Ku,\rho\models \hsA\hsE p_\text{operative}$.
    
    \item We conclude with a property showing the difference between linear and branching \emph{past}, that is, between $\HS_\stat$ and $\HS_\LinearTime$ (and $\HS_\LinearPast$).
    The requirement is the following: the machine may dispense water with any amount of (positive) credit.
    \[
        \begin{array}{l}
            \Ku\models_\stat \\
            \Ku\not\models_\LinearPast  \\
            \Ku\not\models_\LinearTime
        \end{array}
         (\hsE p_\text{water})\longrightarrow \hsE\big(p_\text{water}\wedge\smashoperator[r]{\bigwedge_{p\in\{p_\text{\$=2},p_\text{\$=1},p_\text{\$=0.50}\}}}\hsAt(\Length_{=2} \wedge\hsB p)\big)
    \]
    Again, this one is a structural property, that cannot be expressed in $\HS_\LinearTime$ or $\HS_\LinearPast$, as these refer to a specific computation in the past. Conversely, it is true under $\HS_\stat$, since $s_6$ is backward reachable in one step by $s_1$, $s_2$, and $s_3$.
\end{itemize}

\section{Equivalence between $\LTL$ and $\HS_\LinearTime$}\label{sec:CharacterizezionOfLeniarTimeHS}

In this section, we show that $\HS_\LinearTime$ is as expressive as $\LTL$ even for small syntactical fragments of $\HS_\LinearTime$. To this end, we exploit the well-known equivalence between $\LTL$ and the first-order fragment of monadic second-order logic over infinite words ($\FO$ for short). Recall that, given a countable set $\{x,y,z,\ldots\}$ of (position) variables, the $\FO$ formulas $\varphi$ over a set of proposition letters $\mathpzc{AP}=\{p,\ldots \}$
are defined as:
\[
\varphi ::= \top \ |\ p(x)    \ |\ x\leq y \ |\ x <y  \ |\ \neg\,\varphi \ |\ \varphi\, \wedge\, \varphi
 \ |\ \exists x. \varphi\; .
 \]

We interpret $\FO$ formulas $\varphi$ over infinite paths $\pi$ of Kripke structures $\mathpzc{K}=(\mathpzc{AP},S,\delta,\mu,s_0)$. 
Given a variable \emph{valuation} $g$, assigning to each variable  a position $i\geq 0$,  the satisfaction relation
$(\pi,g)\models \varphi$ corresponds to the standard satisfaction relation $(\mu(\pi),g)\models \varphi$, where $\mu(\pi)$ is the infinite word over $2^{\mathpzc{AP}}$ given by $\mu(\pi(0))\mu(\pi(1))\cdots$. 
More precisely, 
$(\pi,g)\models \varphi$ is inductively defined as follows
 (we omit the standard rules for the Boolean connectives):
\[
\begin{array}{ll}
 (\pi,g)\models p(x) & \Leftrightarrow  p \in \mu(\pi(g(x))),\\
 (\pi,g)\models x\ op\ y & \Leftrightarrow  g(x)\ op\ g(y), \mbox{ for } op \in \{<, \leq\},\\
   (\pi,g)\models \exists x. \varphi & \Leftrightarrow  (\pi,g[x\leftarrow i])\models \varphi \text{ for some } i\geq 0 ,
\end{array}
\]
where $g[x\leftarrow i](x)=i$ and $g[x\leftarrow i](y)=g(y)$ for $y\neq x$. Note that the satisfaction relation  depends only on the values assigned to the variables occurring free in the given formula $\varphi$.
We write $\pi\models \varphi$ to mean that $(\pi,g_0)\models \varphi$, where $g_0(x)=0$ for each variable $x$. An $\FO$ sentence is a formula with no free variables.  The following is a well-known result (Kamp's theorem~\cite{Kamp}).

\begin{proposition}\label{theo:FromFOtoLTL} Given an $\FO$ sentence $\varphi$ over $\mathpzc{AP}$, one can construct an $\LTL$ formula $\psi$ such that, for all Kripke structures $\mathpzc{K}$ over $\mathpzc{AP}$ and infinite paths $\pi$, it holds that
$\pi \models \varphi$ if and only if $\pi,0\models \psi$.
\end{proposition}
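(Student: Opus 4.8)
The plan is to establish the non-trivial direction of the equivalence; the converse translation of $\LTL$ into $\FO$ is a routine structural induction and is not what the statement asks for. First I would strengthen the claim to an \emph{open-formula} statement over a logic with past, namely $\LTL$ augmented with the ``since'' and ``previous'' modalities (call it $\LTLP$), interpreted at an arbitrary position rather than only at $0$. Concretely, I would prove: for every $\FO$ formula $\varphi(x)$ with a single free variable $x$, there is an $\LTLP$ formula $\psi$ such that $(\pi,g[x\leftarrow i])\models\varphi$ iff $\pi,i\models\psi$, for every infinite path $\pi$ and every position $i$. The desired Proposition is then the special case of an $\FO$ sentence evaluated at position $0$, once the past operators have been discharged.

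The heart of the argument is a structural induction on $\varphi$. Atomic formulas $p(x)$ become $p$, the order atoms $x<y$ and $x\le y$ are handled by reading the position of $x$ as ``now'', and the Boolean connectives translate directly. The crux is the existential quantifier $\exists y.\,\theta(x,y)$. Here I would appeal to the Separation Theorem for $\LTLP$ over $(\Nat,<)$: every $\LTLP$ formula is equivalent to a Boolean combination of formulas that speak purely about the strict future, purely about the strict past, or about the present (Boolean combinations of proposition letters). Applying separation to the inductively obtained translation of $\theta$ --- read as a formula whose current position records $y$, with $x$ marked by a fresh proposition --- lets me split the quantification over $y$ according to whether $y$ precedes, coincides with, or follows $x$, and re-express each case by a single $\until$ or ``since'' modality anchored at $x$. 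This is exactly the step that eliminates one quantifier while keeping the formula inside $\LTLP$.

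To pass from $\LTLP$ back to the future-only $\LTL$ required by the statement, I would invoke separation once more: writing the translated sentence as a Boolean combination of pure-past, present, and pure-future components, and observing that at position $0$ the strict past is empty, so each pure-past component collapses to a constant truth value. What remains is a pure-future formula, i.e.\ an $\LTL$ formula $\psi$ enjoying $\pi\models\varphi$ iff $\pi,0\models\psi$.

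The main obstacle is the Separation Theorem itself: establishing that $\LTLP$ admits separated normal forms over $(\Nat,<)$ is a delicate syntactic result, proved by an involved rewriting induction on the nesting of temporal operators, and it is where essentially all the difficulty of Kamp's theorem is concentrated. Once separation is available, both the quantifier-elimination step and the reduction to position $0$ are comparatively mechanical. For this reason I would, as the paper does, ultimately rely on the established formulation of Kamp's theorem~\cite{Kamp} rather than reprove separation from scratch.
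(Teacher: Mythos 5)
The paper gives no proof of this proposition at all: it is stated as a known result and discharged by citing Kamp's theorem~\cite{Kamp}, which is exactly where your proposal ends up after sketching the standard separation-based route (translate into $\LTLP$ by structural induction, eliminate quantifiers via the Separation Theorem, then collapse the pure-past components at position $0$). Your sketch is a correct outline of one classical proof of the cited result, so there is nothing to fault — you and the paper take essentially the same approach of relying on Kamp's theorem as a black box.
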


Given a $\HS_\LinearTime$ formula $\psi$, we now construct  an 
$\FO$ sentence $\psi_\FO$ such that, for all Kripke structures $\mathpzc{K}$, 
$\mathpzc{K}\models_\LinearTime \psi$ if and only if for each initial infinite path $\pi$ of $\mathpzc{K}$, $\pi\models \psi_\FO$.

We start by defining a mapping $h$ assigning to each triple $(\varphi,x,y)$, consisting of a $\HS$ formula $\varphi$ and two distinct position variables $x,y$, an $\FO$ formula having as free variables $x$ and $y$.
The mapping $h$ returns the $\FO$ formula defining the semantics of the $\HS$ formula $\varphi$ interpreted over an interval bounded by the positions $x$ and $y$. 
 
 The function $h$ is homomorphic with respect to the Boolean connectives, and is defined for proposition letters and modal operators as follows (here $z$ is a fresh position variable):
\[ \begin{array}{ll}
h(p,x,y) & = \forall z.((z\geq x \wedge z\leq y) \rightarrow p(z)),\\
h(\hsE\psi,x,y) & = \exists z.(z> x \wedge z\leq y \wedge h(\psi,z,y)),\\
h(\hsB\psi,x,y) & = \exists z.(z\geq x  \wedge z< y \wedge h(\psi,x,z)),\\
h(\hsEt\psi,x,y) & = \exists z.(z< x \wedge h(\psi,z,y)),\\
h(\hsBt\psi,x,y) & = \exists z.(z> y  \wedge  h(\psi,x,z)).
\end{array} \]
It is worth noting that homogeneity plays a crucial role in the definition of $h(p,x,y)$ (without it, a binary predicate would be necessary to encode the truth of $p$ over $[x,y]$).

Given a Kripke structure $\mathpzc{K}$, an infinite path $\pi$, an interval of positions $[i,j]$, and an $\HS_\LinearTime$ formula $\psi$, by a straightforward induction on the structure of
$\psi$, we can show that 
$\mathpzc{IS}_{\mathpzc{K},\pi},[i,j]\models \psi$
if and only if $(\pi,g)\models h(\psi,x,y)$ for any valuation such that $g(x)= i$ and $g(y)=j$. 

Now, let us consider the
$\FO$ sentence $h(\psi)$ given by $\exists x ((\forall z.  z\geq x)\wedge \forall y.  h(\psi,x,y))$. Clearly $\mathpzc{K}\models_\LinearTime \psi$ if and only if for each initial infinite path $\pi$ of $\mathpzc{K}$, $\pi\models h(\psi)$.  By Proposition~\ref{theo:FromFOtoLTL}, it follows that one can construct an $\LTL$ formula $h'(\psi)$ such that $h'(\psi)$ in $\LTL$ is equivalent to $\psi$ in $\HS_\LinearTime$. 
Thus, we obtain the following expressiveness containment.
 
\begin{theorem}\label{theo:HSTracedBasedIsINLTL} $\LTL\geq \HS_\LinearTime$.
\end{theorem}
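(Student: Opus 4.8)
The plan is to route the translation through first-order logic over infinite words and then appeal to Kamp's theorem (Proposition~\ref{theo:FromFOtoLTL}). Since under the non-strict semantics every $\HS$ modality is definable from $\hsB,\hsE,\hsBt,\hsEt$, it suffices to give, for each of these four primitive modalities (plus proposition letters and the Boolean connectives), a faithful $\FO$ encoding of the interval semantics, and then to express the trace-based acceptance condition ``for every initial infinite path and every initial interval $[0,i]$'' as a single $\FO$ sentence, which Kamp's theorem converts into an equivalent $\LTL$ formula.

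First I would set up the homomorphic map $h$ sending a triple $(\varphi,x,y)$ to an $\FO$ formula with free variables $x,y$, with the intended meaning that $h(\varphi,x,y)$ holds on a word exactly when $\varphi$ is satisfied over the interval $[x,y]$ of the induced interval structure. The defining clauses are the natural ones: a proposition letter $p$ becomes $\forall z.((x\le z \wedge z\le y)\to p(z))$, the subinterval modalities $\hsB,\hsE$ shrink the interval from the right and left endpoints respectively, and the inverse modalities $\hsBt,\hsEt$ extend it. The essential point here is homogeneity: $\sigma([i,j])=\bigcap_{h=i}^{j}\mu(\pi(h))$ means that $p$ holds over $[i,j]$ iff it holds at every position in between, which is precisely a unary, universally quantified predicate; without homogeneity one would be forced to introduce a binary predicate for interval labellings, and the target logic $\FO$ (and hence $\LTL$) would no longer suffice.

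Second I would prove, by structural induction on $\psi$, the correspondence $\mathpzc{IS}_{\mathpzc{K},\pi},[i,j]\models\psi$ iff $(\pi,g)\models h(\psi,x,y)$ for every valuation with $g(x)=i$ and $g(y)=j$. The Boolean and propositional cases are immediate, and each modal case follows by matching the quantifier over the fresh position $z$ against the corresponding Allen relation in $\mathpzc{IS}_{\mathpzc{K},\pi}$; the only care needed is that the strictness conditions ($z>x$, $z<y$, and so on) reproduce exactly the definitions of $B,E,\overline{B},\overline{E}$ over $(\Nat,<)$.

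Finally, to quantify the initial interval I would use the $\FO$ sentence $h(\psi):=\exists x.((\forall z.\, z\ge x)\wedge \forall y.\, h(\psi,x,y))$, which pins $x$ to the least position $0$ and ranges $y$ over all $i\ge 0$, so that $\pi\models h(\psi)$ iff $\mathpzc{IS}_{\mathpzc{K},\pi},[0,i]\models\psi$ for every $i$; hence $\mathpzc{K}\models_\LinearTime\psi$ iff every initial infinite path of $\mathpzc{K}$ satisfies $h(\psi)$. Applying Proposition~\ref{theo:FromFOtoLTL} to $h(\psi)$ yields an $\LTL$ formula $h'(\psi)$ with $\pi\models h(\psi)$ iff $\pi,0\models h'(\psi)$, and therefore $h'(\psi)$ in $\LTL$ is equivalent to $\psi$ in $\HS_\LinearTime$, giving $\LTL\ge\HS_\LinearTime$. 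I expect the only genuinely delicate step to be the inductive correspondence, and within it the propositional case, since that is where homogeneity silently does the work of collapsing interval labellings to pointwise ones; the modal cases then reduce to routine quantifier bookkeeping, and the passage to $\LTL$ is entirely delegated to Kamp's theorem.
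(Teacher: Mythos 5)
Your proposal is correct and follows essentially the same route as the paper: the same homomorphic map $h(\varphi,x,y)$ into $\FO$ with the identical clauses for proposition letters (where homogeneity is indeed the key observation) and for $\hsB,\hsE,\hsBt,\hsEt$, the same closing sentence $\exists x.((\forall z.\, z\ge x)\wedge \forall y.\, h(\psi,x,y))$ to capture the trace-based acceptance condition, and the same final appeal to Kamp's theorem (Proposition~\ref{theo:FromFOtoLTL}). Nothing is missing.
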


Now we show that also the converse containment holds, that is, $\LTL$ can be translated into $\HS_\LinearTime$.
Actually, it is worth noting that for such a purpose the fragment $\A\B$ of $\HS_\LinearTime$, featuring only modalities for $A$ and $B$, is expressive enough.

\begin{theorem}\label{theo:FromLTLTOSmallFragments} Given an $\LTL$ formula $\varphi$, one can construct in linear-time an $\mathsf{AB}$  formula $\psi$ such that $\varphi$ in $\LTL$ is equivalent to $\psi$ in $\mathsf{AB}_\LinearTime$.
\end{theorem}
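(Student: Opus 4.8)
The plan is to give a compositional, linear-size translation $t$ from $\LTL$ into $\mathsf{AB}$ that faithfully simulates the point-based semantics of $\LTL$ on the \emph{single-point} intervals $[i,i]$ of $\mathpzc{IS}_{\mathpzc{K},\pi}$, and then to wrap it so as to account for the fact that $\HS_\LinearTime$ model checking ranges over all initial intervals $[0,i]$. Concretely, I would define $t$ by induction on the structure of the formula, maintaining the invariant that for every infinite path $\pi$ and every position $i$, $\mathpzc{IS}_{\mathpzc{K},\pi},[i,i]\models t(\alpha)$ if and only if $\pi,i\models\alpha$. The atomic and Boolean cases are immediate: by homogeneity $\sigma([i,i])=\mu(\pi(i))$, so I set $t(p):=p$, and $t$ commutes with $\neg,\wedge$. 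The observation I rely on throughout is that, over the linear order $(\Nat,<)$, $\hsA$ applied to an interval $[x,y]$ reaches exactly the intervals $[y,z]$ with $z\ge y$, and that $\Length_1$ (equal to $[B]\bot$) characterizes the point intervals; hence from any interval the idiom $\hsA(\Length_1\wedge\cdot)$ jumps to the point at the current \emph{right} endpoint, and a two-step $\hsA$-chain walks to a successor point.

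For $\Next$ I would set $t(\Next\alpha):=\hsA(\Length_2\wedge\hsA(\Length_1\wedge t(\alpha)))$: evaluated at $[i,i]$, the outer $\hsA$ together with $\Length_2$ selects $[i,i+1]$, the inner $\hsA$ with $\Length_1$ selects $[i+1,i+1]$, and $t(\alpha)$ is evaluated there, yielding $\pi,i+1\models\alpha$. For the until I would set
\[
t(\alpha\until\beta):=\hsA\big(\hsA(\Length_1\wedge t(\beta))\ \wedge\ [B]\,\hsA(\Length_1\wedge t(\alpha))\big).
\]
Evaluated at $[i,i]$, the leading $\hsA$ guesses a witness interval $[i,j]$ with $j\ge i$ (this realizes the existential $\exists j\ge i$); on $[i,j]$ the conjunct $\hsA(\Length_1\wedge t(\beta))$ reads the right endpoint $j$ and asserts $\pi,j\models\beta$, while $[B]\,\hsA(\Length_1\wedge t(\alpha))$ ranges via $[B]$ over all proper prefixes $[i,z]$ with $z\in[i,j-1]$ and, reading each right endpoint, asserts $\pi,z\models\alpha$; together this is exactly the $\LTL$ clause for $\alpha\until\beta$ at $i$, with the degenerate case $j=i$ handled correctly because $[B]$ then quantifies over the empty set and the gadget reduces to $\pi,i\models\beta$.

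Finally, to match the model-checking semantics I would take $\psi:=(\Length_1\wedge t(\varphi))\vee\hsB(\Length_1\wedge t(\varphi))$. On any initial interval $[0,i]$ this formula locates the point prefix $[0,0]$ and evaluates $t(\varphi)$ there, so $\mathpzc{IS}_{\mathpzc{K},\pi},[0,i]\models\psi$ iff $\pi,0\models\varphi$, uniformly in $i$; hence $\mathpzc{K}\models_\LinearTime\psi$ iff $\pi,0\models\varphi$ for every initial $\pi$, i.e.\ iff $\mathpzc{K}\models\varphi$. Each $\LTL$ connective is translated by a fixed-size gadget containing exactly one copy of the translation of each immediate subformula, so $|\psi|$ is linear in $|\varphi|$ and the construction is a single bottom-up pass. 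The delicate points—where I expect the real work to lie—are (i) verifying the semantics of $\hsA$ over $(\Nat,<)$ so that the endpoint-reading idiom $\hsA(\Length_1\wedge\cdot)$ is justified, and (ii) the correctness of the until gadget, specifically that reading the right endpoints of the prefixes selected by $[B]$ enumerates \emph{precisely} the positions $i,\dots,j-1$ at which $\alpha$ must hold, together with the careful treatment of the $j=i$ boundary; everything else follows by a routine structural induction on $\varphi$ using the stated invariant.
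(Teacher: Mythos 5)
Your proposal is correct and follows essentially the same route as the paper: identical gadgets for $\Next$ and $\until$ (namely $\hsA(\Length_2\wedge\hsA(\Length_1\wedge\cdot))$ and $\hsA(\hsA(\Length_1\wedge\cdot)\wedge[B]\hsA(\Length_1\wedge\cdot))$), the same single-point invariant $\mathpzc{IS}_{\mathpzc{K},\pi},[i,i]\models t(\alpha)\iff\pi,i\models\alpha$ proved by structural induction, and a final wrapper to handle quantification over all initial intervals. The only (immaterial) difference is that wrapper: the paper uses $\Length_1\rightarrow f(\varphi)$, which is vacuously true on non-point initial intervals, while you use $(\Length_1\wedge t(\varphi))\vee\hsB(\Length_1\wedge t(\varphi))$ to locate the prefix $[0,0]$ from every $[0,i]$; both yield the required equivalence.
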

\begin{proof}
Let $f: \LTL \to \mathsf{AB}$ be the mapping, homomorphic with respect to the Boolean connectives, defined as follows:  
\begin{gather*}
f(p)=p, \text{ for each proposition letter } p,\\
f(\Next\psi) = \hsA (\Length_{2}\wedge \hsA(\Length_1\wedge f(\psi))),\\
f(\psi_1\until\psi_2) = \hsA \bigl(\hsA(\Length_{1}\wedge f(\psi_2))\wedge[B](\hsA(\Length_{1}\wedge f(\psi_1))\bigr).
\end{gather*}

Given a Kripke structure $\mathpzc{K}$, an infinite path $\pi$, a position $i\geq 0$, and an $\LTL$ formula $\psi$, by a straightforward induction on the structure of
$\psi$ we can show that $\pi,i\models \psi$ if and only if
$\mathpzc{IS}_{\mathpzc{K},\pi},[i,i]\models f(\psi)$.
Hence $\mathpzc{K}\models \psi$ if and only if $\mathpzc{K}\models_\LinearTime \Length_1\rightarrow f(\psi)$.
\end{proof}

The next corollary follows immediately from Theorem~\ref{theo:HSTracedBasedIsINLTL} and Theorem~\ref{theo:FromLTLTOSmallFragments}.
\begin{corollary}\label{cor:HSLinearCharacterization} $\HS_\LinearTime$ and $\LTL$ have the same expressive power.
\end{corollary}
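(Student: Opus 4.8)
The plan is to prove Theorem~\ref{theo:FromLTLTOSmallFragments}, namely that every $\LTL$ formula can be translated in linear time into an equivalent $\mathsf{AB}_\LinearTime$ formula. The key idea is to encode the truth of an $\LTL$ formula at a position $i$ along an infinite path $\pi$ by the truth, under the trace-based semantics, of a translated $\mathsf{AB}$ formula over the \emph{point interval} $[i,i]$. Since point intervals are the natural interval counterparts of positions, this gives a tight correspondence between the pointwise $\LTL$ semantics and the interval semantics of $\HS_\LinearTime$. First I would set up a translation map $f:\LTL\to\mathsf{AB}$ that is homomorphic on the Boolean connectives, so that the only real work lies in handling the atomic propositions and the two temporal modalities $\Next$ and $\until$.

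The guiding invariant to prove, by straightforward induction on the structure of the $\LTL$ formula $\psi$, is: for every Kripke structure $\mathpzc{K}$, infinite path $\pi$, and position $i\geq 0$, we have $\pi,i\models\psi$ if and only if $\mathpzc{IS}_{\mathpzc{K},\pi},[i,i]\models f(\psi)$. The atomic case $f(p)=p$ is immediate from homogeneity: over the point interval $[i,i]$ the label $\sigma([i,i])=\mu(\pi(i))$, so $p$ holds over $[i,i]$ exactly when $p\in\mu(\pi(i))$. For the Boolean cases the inductive hypothesis together with homomorphism closes the argument directly. The crux is designing the clauses for $\Next$ and $\until$ so that, starting from the point interval $[i,i]$, the $\mathsf{A}$ (meets) modality is used to "jump forward", $\Length_n$ pins down interval lengths, and a further $\mathsf{A}$ application returns to a point interval at the target position. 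For $\Next$, the natural encoding moves via a length-$2$ interval $[i,i+1]$ and then to the point interval $[i+1,i+1]$; for $\until$, the encoding uses $\mathsf{A}$ to reach an interval $[i,j]$ whose right endpoint is the witness for $\psi_2$, with $[B]$ ranging over the proper prefixes $[i,k]$ (for $i\le k<j$) to enforce $\psi_1$, each time stepping to the point interval at the relevant endpoint.

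The step I expect to be the main obstacle is verifying that the modal clauses interact correctly with the length constraints under the trace-based semantics, so that the quantification over subintervals realized by $[B]$ exactly matches the universal quantification "for all $i\le k<j$" in the semantics of $\until$, and that the $\mathsf{A}$-modality lands on the intended point interval rather than on some longer interval with the same left endpoint. Here the formulas $\Length_1$ and $\Length_2$ are essential: requiring $\Length_1$ forces the target interval to be a point, restoring the inductive hypothesis's precondition, while $\Length_2$ controls the single-step advance for $\Next$. One must check carefully that $\hsA(\Length_1\wedge\cdots)$ correctly selects, among all intervals $[j,j']$ meeting $[i,j]$, exactly the point interval $[j,j]$, and that combining the $\mathsf{A}$ jump with $[B]$ over prefixes covers precisely the positions strictly between the endpoints.

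Once the invariant is established, the theorem follows by relating the initial point interval to $\LTL$ satisfaction at position $0$: we obtain $\mathpzc{K}\models\psi$ if and only if $\mathpzc{K}\models_\LinearTime(\Length_1\rightarrow f(\psi))$, since the trace-based semantics quantifies over all initial intervals $[0,i]$ but the guard $\Length_1$ restricts attention to the point interval $[0,0]$, which is exactly where $f(\psi)$ must certify $\pi,0\models\psi$. The linear-time bound on the construction is evident because $f$ introduces only a constant amount of structure per connective or modality, with the fixed-size subformulas $\Length_1$ and $\Length_2$. Finally, Corollary~\ref{cor:HSLinearCharacterization} is immediate: Theorem~\ref{theo:HSTracedBasedIsINLTL} gives $\LTL\geq\HS_\LinearTime$ and Theorem~\ref{theo:FromLTLTOSmallFragments} gives $\HS_\LinearTime\geq\mathsf{AB}_\LinearTime\geq\LTL$, whence $\HS_\LinearTime\equiv\LTL$.
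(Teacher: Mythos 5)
Your proposal is correct and follows essentially the same route as the paper: the corollary is obtained by combining Theorem~\ref{theo:HSTracedBasedIsINLTL} with Theorem~\ref{theo:FromLTLTOSmallFragments}, and your translation $f$ --- homomorphic on Boolean connectives, anchored at point intervals via $\Length_1$, using $\hsA$ with $\Length_2$ for $\Next$ and $\hsA$ plus $[B]$ over proper prefixes for $\until$, concluding with $\mathpzc{K}\models_\LinearTime(\Length_1\rightarrow f(\psi))$ --- is exactly the construction in the paper's proof.
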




While there is no difference in the expressive power between $\LTL$ and $\HS_\LinearTime$, things change if we consider succinctness. Whereas Theorem~\ref{theo:FromLTLTOSmallFragments} shows that it is possible to convert any $\LTL$ formula into an equivalent $\HS_\LinearTime$ one in linear time, the following theorem holds. 

\begin{theorem}\label{theo:succinctnessHSlin} $\HS_\LinearTime$ is at least exponentially more succinct than $\LTL$.
\end{theorem}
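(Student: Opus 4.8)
The plan is to prove the succinctness gap by exhibiting a family of properties $\{P_n\}_{n\geq 1}$ that admit $\HS_\LinearTime$ formulas of size polynomial in $n$, while every equivalent $\LTL$ formula must have size exponential in $n$. The natural candidate, suggested by the vending-machine discussion preceding the theorem, is a property that refers to a fixed bounded window of length polynomial in $n$ but encodes a combinatorial (e.g.\ ``pairwise-distinct occurrences'' or ``all permutations'') requirement whose $\LTL$ encoding must enumerate exponentially many cases. Concretely, I would take a set of proposition letters $q_1,\dots,q_n$ and consider a property over intervals of length $2^n$ (or length polynomial in $n$ with an exponential counting content), using the $\B$-modality to descend into subintervals: the formula $\Length_{2^n}$ itself is already expressible as an $\HS_\LinearTime$ (indeed $\B$) formula of size polynomial in $n$ by nesting $\hsB$ and $[B]$ — but $\Length_{2^n}$ in $\LTL$ would require $2^n$ nested $\Next$ operators.

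The cleanest approach is to exploit exactly this counting phenomenon. First I would observe that the $\HS_\LinearTime$ formula $\Length_{=m}$ (characterizing intervals of length exactly $m$) has an encoding of size $O(\log m)$: rather than the linear nesting $\underbrace{\hsB\cdots\hsB}_{m-1}\top$, one can write a divide-and-conquer formula that asserts the interval can be split, using $\hsB$ and $\hsE$ together, into halves of length $m/2$, giving logarithmic size by repeated halving. Setting $m=2^n$ then yields an $\HS_\LinearTime$ formula of size $O(n)$ expressing ``the current initial interval has length exactly $2^n$.'' Second, I would show that any $\LTL$ formula $\varphi$ equivalent (under $\models_\LinearTime$ versus $\models$) to this property must have size $\Omega(2^n)$. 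For this I would use a standard \emph{fooling / Ehrenfeucht–Fra\"iss\'e} or \emph{automaton-based} lower bound: an $\LTL$ formula of size $s$ translates to a Büchi (or, over the bounded-past fragment relevant here, a finite) automaton with at most $2^{O(s)}$ states, yet distinguishing length-$2^n$ prefixes from all other lengths among single-letter words requires a deterministic-counter behavior that any correct automaton encodes with at least $2^n$ distinguishable prefix-states, forcing $s=\Omega(n)$ — which is not yet enough. To get a genuine exponential separation I must instead make the \emph{LTL} side pay exponentially while the \emph{HS} side pays only polynomially, so the right parameter is to fix the property so that $\HS_\LinearTime$ size is polynomial in $n$ and $\LTL$ size is $\Omega(2^n)$; the length-$2^n$ trick gives $\HS$ size $O(n)$ and forces $\LTL$ to literally count to $2^n$.

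The key steps, in order, are therefore: (i) define the target family $P_n$ precisely, interpreting it via $\models_\LinearTime$ over Kripke structures, choosing $P_n$ to express membership in a length-$2^n$ initial interval together with a light propositional condition so that the property is nontrivial; (ii) give the succinct $\HS_\LinearTime$ (indeed $\B$-fragment) formula of size $O(\mathrm{poly}(n))$, justifying its correctness by the homogeneity-based semantics of $\hsB$ and $[B]$ on interval structures induced by paths, as in the definition of $\Length_n$; (iii) establish the $\LTL$ lower bound of $2^{\Omega(n)}$ on the size of any equivalent formula. For step (iii) I would pass through the equivalence $\LTL\equiv\FO$ and the automata-theoretic correspondence, and apply a communication-complexity or distinguishability argument: construct $2^n$ pairwise inequivalent finite words (prefixes of paths in a suitable Kripke structure) that any correct $\LTL$ formula, via its linear-size nondeterministic automaton of $2^{O(|\varphi|)}$ states, must separate, yielding $|\varphi|=\Omega(2^n)$ after taking logarithms — wait, this only gives $\Omega(n)$, so the decisive ingredient must instead be a known \emph{explicit} exponential-succinctness lower bound for the counting property, proved by a direct adversary/indistinguishability argument on $\LTL$ formula structure (by induction on subformulas one shows a size-$s$ formula cannot pin down a position at distance $2^n$ unless $s\geq 2^n$, because each $\Next$ advances one step and $\until$ cannot enforce an exact large distance with sublinear nesting).

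I expect the main obstacle to be precisely this last lower bound: proving that no clever use of $\until$ in $\LTL$ can encode an exact distance of $2^n$ in subexponential size. The subtlety is that $\until$ is more powerful than a chain of $\Next$'s, so a naive ``each $\Next$ buys one step'' argument is insufficient; I would need an Ehrenfeucht–Fra\"iss\'e game argument for $\LTL$ (or equivalently for $\FO$ with the successor/order signature restricted to the bounded setting) showing that formulas of modal/quantifier depth below $2^n$ cannot distinguish a path segment of length exactly $2^n$ from nearby lengths, together with a separate counting of how depth relates to size. The interplay between \emph{size} and \emph{depth} is where care is required, since succinctness (size) rather than depth is the claimed measure, so the argument must control the number of distinct ``distance obligations'' a bounded-size formula can impose. \qed
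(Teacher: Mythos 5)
Your overall strategy (a polynomial-size $\HS_\LinearTime$ formula forcing an exponentially long behaviour, plus a lower bound on any equivalent $\LTL$ formula) has the right shape, and it is the shape of the paper's proof, but both of your key steps fail as stated. The fatal one is step (ii): $\Length_{=2^n}$ does \emph{not} admit a $\B\E$ formula of size $O(n)$, or even of size polynomial in $n$. Over a trace whose states all carry the same labelling, the truth of a $\B\E$ formula on an interval depends only on the interval's length, so such a formula defines a set of lengths; an easy induction shows that $\hsB\varphi$ (resp.\ $\hsE\varphi$) defines the upward-closed set $\{\ell \mid \ell > \min S\}$, where $S$ is the set defined by $\varphi$, hence a formula of size $s$ can only define Boolean combinations of upward-closed sets with thresholds at most about $s$, and $\{2^n\}$ therefore requires size on the order of $2^n$. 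Your divide-and-conquer idea founders on exactly this point: $\hsB\varphi_{k}\wedge\hsE\varphi_{k}$ asserts the existence of a proper prefix of length $2^k$ \emph{and} of a proper suffix of length $2^k$, but nothing forces these two witnesses to partition the interval, so the conjunction is satisfied by every sufficiently long interval. (Even if the partition could be enforced, the recursion $S(m)=2S(m/2)+O(1)$ gives linear, not logarithmic, formula-tree size, since the subformula for length $m/2$ is used twice.) Step (iii) is also incomplete by your own admission: you note twice that the automaton and Ehrenfeucht--Fra\"iss\'e arguments you sketch only yield $\Omega(n)$, and you end by deferring to an unspecified ``known explicit lower bound'' without supplying it.

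The paper takes a different and cleaner route on both counts. For the upper bound it does not try to count with length alone: it reuses the $\EXPSPACE$-hardness machinery for $\B\E_\stat$ ($=\B\E_\LinearTime$) to write a formula $\psi_{\text{cpt}}$, of size polynomial in $n$, over a richer set of proposition letters, whose \emph{unique} satisfying trace encodes the computation of a Turing machine that counts in binary from $0$ to $2^{2^n}-1$ using $O(2^n)$ bits per configuration; the counting is carried by the labelling, adjacent exponential-length configurations are compared via the interval modalities, and the unique model has length at least $2^{2^n}$. For the lower bound it invokes the single-exponential small-model property of $\LTL$: every satisfiable $\LTL$ formula $\varphi$ has an ultimately periodic model $u\cdot v^{\omega}$ with $|u|\leq 2^{|\varphi|}$ and $|v|\leq|\varphi|\cdot 2^{|\varphi|}$. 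Any $\LTL$ formula equivalent to $\psi_{\text{cpt}}$ would then admit a model of single-exponential size, which is incompatible with the doubly-exponential length of the unique model unless $|\varphi|=\Omega(2^n)$. This sidesteps entirely the depth-versus-size difficulties you worry about at the end. To salvage your approach you would have to enrich the alphabet and encode a counter in the labelling---at which point you have essentially reinvented the paper's construction---because the pure length property cannot work.
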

\begin{proof}
To prove the statement, it suffices to provide an $\HS_\LinearTime$ formula $\psi$ for which there exists no $\LTL$ equivalent formula whose size is polynomial in $|\psi|$.

To this end, we restrict our attention to the fragment $\B\E_\LinearTime$. Since modalities $\hsB$ and $\hsE$ only allow one to \lq move\rq{} from an interval to its subintervals, $\B\E_\LinearTime$ actually coincides with $\B\E_\stat$, whose MC is known to be hard for $\EXPSPACE$~\cite{BMMPS16}. Thus, in particular, it is possible to encode by means of a $\B\E_\LinearTime$ formula $\psi_{\text{cpt}}$ the (unique) computation of a deterministic Turing machine using $b(n)\in O(2^n)$ bits that, when executed on input $0^n$, for some natural number $n\geq 1$, counts in binary from $0$ to $2^{2^n}-1$, by repeatedly summing 1, and finally accepts. The length of $\psi_{\text{cpt}}$ is \emph{polynomial} in $n$, and the unique trace which satisfies it (that is, that encodes such a computation) has length $\ell(n)\geq b(n)\cdot 2^{2^n}$. 


Conversely, it is known that $\LTL$ features a \emph{single-exponential small-model property}~\cite{DBLP:books/cu/Demri2016}, 
stating that, for every \emph{satisfiable} $\LTL$ formula $\varphi$, there are $u,v\in S^*$ with $|u|\leq 2^{|\varphi|}$ and $|v|\leq |\varphi|\cdot 2^{|\varphi|}$, such that $u\cdot v^\omega ,0 \models \varphi$. 
This allows us to conclude (by an easy contradiction argument) that there is no polynomial-length (w.r.t.\  $|\psi_{\text{cpt}}|$, and thus to $n$) $\LTL$ formula that can encode the aforementioned computation. An \emph{exponential-length} $\LTL$ formula would be needed for such an encoding.
\end{proof}

Exactly the same argument can be used to show that $\HS_\LinearTime$ is at least exponentially more succinct than the extension of $\LTL$ with past modalities (denoted in the following as $\LTLP$)~\cite{DBLP:journals/igpl/LichtensteinP00}.

\section{A characterization of $\HS_\LinearPast$}\label{sec:characterizationHSLinearPast}

In this section, we will focus our attention on the computation-tree-based semantic variant $\HS_\LinearPast$, showing that  it is as expressive as \emph{finitary} $\CTLStar$. As a matter of fact, the result can be proved to hold already for the syntactical fragment $\mathsf{ABE}$ which does not feature transposed modalities. In addition, we show that $\HS_\LinearPast$ is subsumed by $\CTLStar$.

\subsection{From finitary $\CTLStar$ to $\HS_\LinearPast$}
We first show that finitary $\CTLStar$ is subsumed by $\HS_\LinearPast$. 
As a preliminary fundamental step, we prove that  
when interpreted over finite words, the $\mathsf{BE}$ fragment of $\HS$ and $\LTL$ 
define the same class of finitary languages (Theorem \ref{theoremFromLTLtoBEoverFiniteWords}).

For an $\LTL$ formula $\varphi$ with proposition letters
over  an alphabet $\Sigma$ (in our case $\Sigma$ is  
 $2^{\mathpzc{AP}}$), let us denote by $\Lang_\act(\varphi)$ 
the set of non-empty finite words over $\Sigma$ satisfying $\varphi$ under the standard action-based semantics of $\LTL$, interpreted over finite words (see \cite{vardi1996automata}).
A similar notion can be given for $\mathsf{BE}$ formulas $\varphi$ with proposition letters in $\Sigma$ (under the homogeneity assumption).
Then, $\varphi$ denotes a language, written $\Lang_\act(\varphi)$, of non-empty finite words over $\Sigma$ inductively defined as:
\begin{itemize}
  \item $\Lang_\act(a)= a^{+} $, for $a\in\Sigma$ (we observe that this definition reflects the homogeneity assumption);
  \item $\Lang_\act(\neg \varphi)= \Sigma^{+}\setminus \Lang_\act(\varphi)$;
  \item $\Lang_\act(\varphi_1\wedge \varphi_2)=\Lang_\act(\varphi_1)\cap  \Lang_\act(\varphi_2)$;
  \item $\Lang_\act(\hsB \varphi)= \{w\in \Sigma^{+}\mid \Pref(w)\cap \Lang_\act(\varphi)\neq \emptyset\}$;
    \item $\Lang_\act(\hsE \varphi)= \{w\in \Sigma^{+}\mid \Suff(w)\cap \Lang_\act(\varphi)\neq \emptyset\}$.
\end{itemize}

We prove that, under the action-based semantics, $\mathsf{BE}$ formulas and $\LTL$ formulas define the same class of finitary languages.

To prove that the finitary languages defined by $\LTL$ formulas are subsumed by 
those defined by $\mathsf{BE}$ formulas we exploit an algebraic condition introduced by Wilke in~\cite{stacs/Wilke99}, called \emph{$\LTL$-closure}, which gives, for a class of finitary languages, a sufficient condition to guarantee the inclusion of the class of $\LTL$-definable languages.
The converse inclusion, that is, the class of finitary languages defined by the fragment $\mathsf{BE}$ is subsumed by that defined by $\LTL$, can be proved by a technique similar to that used in Section~\ref{sec:CharacterizezionOfLeniarTimeHS}, and thus omitted.

We start by considering the former inclusion recalling from~\cite{stacs/Wilke99} a sufficient condition for a class of finitary languages to include the class of finitary languages which are $\LTL$-definable.

  \begin{definition}[\LTL-closure]\label{def:LTLClosure} A class $\mathcal{C}$  of languages of finite words over finite alphabets is \emph{$\LTL$-closed} if and only if the following conditions are satisfied, where $\Sigma$ and $\Delta$ are finite alphabets, $b\in \Sigma$ and $\Gamma=\Sigma\setminus\{b\}$:
  \begin{enumerate}
    \item $\mathcal{C}$ is closed under language complementation and language intersection;
    \item if $\Lang \in \mathcal{C}$ with $\Lang\subseteq \Gamma^{+}$, then  $\Sigma^* b\Lang$, $\Sigma^* b(\Lang+\varepsilon)$,
    $ \Lang b \Sigma^* $, $ (\Lang+\varepsilon) b \Sigma^* $ are in  $\mathcal{C}$;
    \item Let $U_0= \Gamma^{*}b$, $h_0:U_0 \rightarrow \Delta$, and $h:U_0^{+} \rightarrow \Delta^{+}$ be defined by
    $h(u_0u_1\cdots u_n)\!=\!h_0(u_0)\!\cdots\! h_0(u_n)$. Assume that for each $d\in \Delta$, the language $\Lang_d =\{u\in \Gamma^{+}\mid h_0(ub)= d\}$
    is in $\mathcal{C}$. Then, for each language $\Lang\in \mathcal{C}$ such that $\Lang\subseteq \Delta^{+}$,
    the language $\Gamma^{*}b h^{-1}(\Lang)\Gamma^{*}$ is in $\mathcal{C}$.
  \end{enumerate}
  \end{definition}
  
\begin{figure}[b]
    \centering
    \scalebox{0.6}{\includegraphics{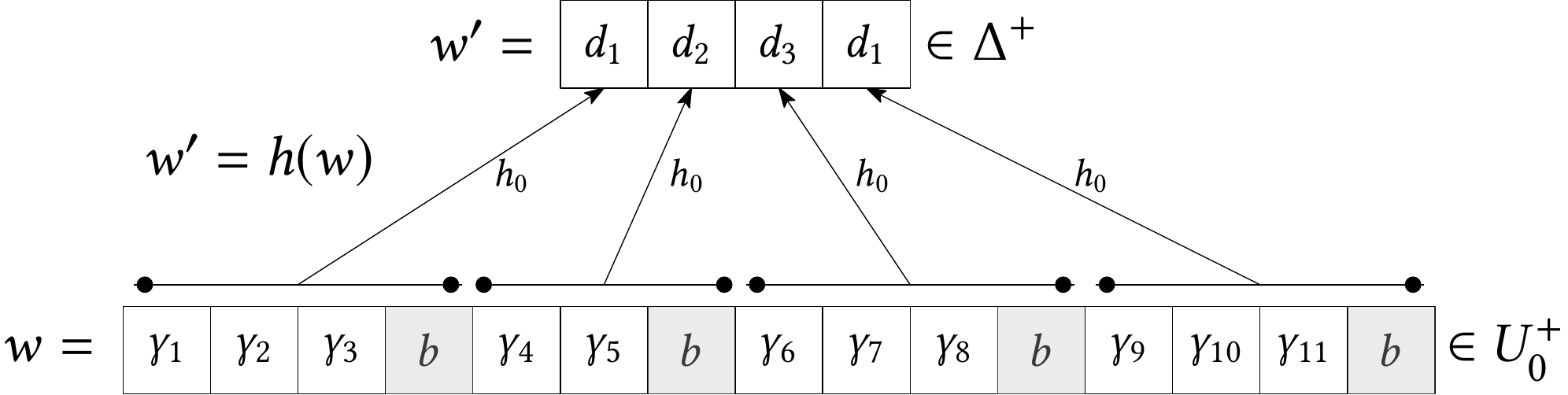}}
    \caption{ 
            Visual description of condition 3 of Definition~\ref{def:LTLClosure} (\LTL-closure). 
    }
    \label{fig:LTLclos3}
\end{figure}
  
In Figure~\ref{fig:LTLclos3}, we graphically  depict condition 3 of the definition of \LTL-closure. In the proposed example, we have:
              $(i)$~for all $i$, $d_i\in \Delta$ and $\gamma_i\in\Gamma$,
              $(ii)$~$w= (\gamma_1\gamma_2 \gamma_3 b) ( \gamma_4\gamma_5 b) ( \gamma_6\gamma_7 \gamma_8 b)( \gamma_9\gamma_{10} \gamma_{11} b)\in U_0^4$,
              $(iii)$~$w'=h(w)=h_0(\gamma_1\gamma_2 \gamma_3 b)h_0(\gamma_4\gamma_5 b)h_0(\gamma_6\gamma_7 \gamma_8 b)h_0(\gamma_9\gamma_{10} \gamma_{11} b)=d_1 d_2 d_3 d_1 \in\Delta^4$.
            For instance, $\gamma_1\gamma_2\gamma_3,\gamma_9\gamma_{10} \gamma_{11}\in L_{d_1}$ and $\gamma_4\gamma_5\in L_{d_2}$.

The following result holds~\cite{stacs/Wilke99}.
  \begin{theorem}\label{theo:CharacterizationFinitaryLTL} Any $\LTL$-closed class  $\mathcal{C}$ of finitary languages includes the class of  $\LTL$-definable finitary languages.
  \end{theorem}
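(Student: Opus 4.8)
The plan is to prove, by induction, that every $\LTL$-definable finitary language belongs to an arbitrary $\LTL$-closed class $\mathcal{C}$. Since these languages are exactly the $\Lang_\act(\varphi)$ for $\LTL$ formulas $\varphi$ over a finite alphabet $\Sigma$, it suffices to show $\Lang_\act(\varphi)\in\mathcal{C}$ for every $\varphi$. The natural well-founded measure is the nesting depth of the until modality $\until$ (the \emph{until rank}), with an inner structural induction handling Boolean connectives and $\Next$ at each fixed rank. I would first bootstrap the trivial languages: from condition 1, if $\mathcal{C}\neq\emptyset$ then $\emptyset=\Lang\cap\overline{\Lang}\in\mathcal{C}$ and $\Sigma^{+}=\overline{\emptyset}\in\mathcal{C}$, and then condition 2, instantiated with $\Lang=\emptyset$ or $\Lang=\Gamma^{+}$, yields the atomic languages (e.g.\ $a\Sigma^{*}$ for each $a\in\Sigma$), settling the base of the structural induction.

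For the inductive step at a fixed until rank, the Boolean cases are immediate from condition 1: $\Lang_\act(\neg\psi)=\Sigma^{+}\setminus\Lang_\act(\psi)$ is a complement and $\Lang_\act(\psi_1\wedge\psi_2)$ is an intersection, both in $\mathcal{C}$ by the inductive hypothesis. The $\Next$ case, and the base ``one distinguished marked position with a one-sided constraint'' patterns (those realised by the four concatenation forms $\Sigma^{*}b\Lang$, $\Sigma^{*}b(\Lang+\varepsilon)$, $\Lang b\Sigma^{*}$, $(\Lang+\varepsilon)b\Sigma^{*}$), are handled by combining condition 2 with intersection; in particular this settles $\until$-formulas whose arguments are atomic, without increasing the until rank of the subformulas involved.

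The crux, and the main obstacle, is the general until modality, which is exactly what condition 3 is engineered to handle and which forces the passage from rank $k{+}1$ to rank $\leq k$. The idea is to reduce the evaluation of $\varphi_1\until\varphi_2$ to a rank-zero property over a coarser, abstracted alphabet $\Delta$. Concretely, I would mark with a fresh letter $b$ the positions relevant to the factorisation induced by $\varphi_1,\varphi_2$, so that each word splits into blocks lying in $U_0=\Gamma^{*}b$; I then define $h_0:U_0\to\Delta$ recording, for each block, which Boolean combination of the lower-rank subformulas $\varphi_1,\varphi_2$ holds, so that the required occurrence pattern for $\until$ becomes a simple language $\Lang\subseteq\Delta^{+}$. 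Condition 3 lifts this back to $\Gamma^{*}b\,h^{-1}(\Lang)\Gamma^{*}\in\mathcal{C}$, with the hypotheses $\Lang_d=\{u\in\Gamma^{+}\mid h_0(ub)=d\}\in\mathcal{C}$ supplied by the inductive hypothesis, since these are languages of strictly smaller until rank.

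I expect the delicate points to be twofold: (i) choosing the marker $b$ and the abstraction $h_0$ so that the occurrence semantics of $\until$ is captured \emph{exactly} by a rank-zero language over $\Delta$, including the correct treatment of the nonemptiness and end-of-word conventions of the finite-word action-based semantics; and (ii) verifying that the auxiliary languages $\Lang_d$ genuinely have strictly smaller until rank, so that the outer induction is well-founded. Once the until case is discharged, assembling all cases gives the theorem. A cleaner alternative, which I would pursue should the direct construction of $h_0$ become unwieldy, is to route the reduction through the standard until-hierarchy normal form, where condition 3 precisely realises the passage between two consecutive levels of the hierarchy.
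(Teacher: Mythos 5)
You should first be aware that the paper contains no proof of this theorem: it is imported verbatim from Wilke~\cite{stacs/Wilke99} (``The following result holds''), and the paper's own technical work consists only in verifying that the class of $\mathsf{BE}$-definable finitary languages satisfies the three closure conditions (Lemmata~\ref{lemma1:CharacterizationFinitaryLTL} and~\ref{lemma2:CharacterizationFinitaryLTL}). Your proposal therefore has to be judged as a reconstruction of Wilke's argument, not against anything in this paper.

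As such a reconstruction, the overall shape is right (condition 1 for Boolean connectives, condition 2 for letter-anchored concatenations, condition 3 for the until case via a block abstraction), but the central step as you describe it does not typecheck against Definition~\ref{def:LTLClosure}. You propose to ``mark with a fresh letter $b$'' the positions relevant to $\varphi_1\until\varphi_2$; condition 3, however, requires $b$ to be an element of the \emph{given} alphabet $\Sigma$ and produces a language over that same $\Sigma$. There is no mechanism in the definition for enlarging the alphabet with a marker and later projecting it away, and $\mathcal{C}$ is not assumed closed under any such projection or inverse renaming into a larger alphabet. The actual argument must fix an already-present letter $b\in\Sigma$, split $\Lang_\act(\varphi)$ into the words containing no $b$ (a language over the strictly smaller alphabet $\Gamma=\Sigma\setminus\{b\}$) and those containing at least one $b$ (where the $b$-block factorisation and condition 3 apply), and consequently the induction is also on $|\Sigma|$: the hypotheses $\Lang_d\subseteq\Gamma^{+}$ of condition 3 are discharged because they live over a smaller alphabet, not --- as you assert --- because they have ``strictly smaller until rank'' (they need not). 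This missing alphabet induction also undermines your bootstrap: even $\Gamma^{+}$ itself, which you need before condition 2 can produce anything nontrivial, is not obtainable from conditions 1--2 alone without it. Finally, the $\Next$ case is not settled by ``condition 2 plus intersection'': $\Lang_\act(\Next\psi)=\Sigma\cdot\Lang_\act(\psi)$, whereas condition 2 only wraps a \emph{$b$-free} language around a single distinguished occurrence of $b$, so prefixing an arbitrary language of $\mathcal{C}$ by one arbitrary letter requires a separate argument. Your fallback remark about routing through the until-hierarchy normal form points in the right direction, but as it stands the proposal's key reduction would fail.
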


Therefore, to prove that the finitary languages defined by  $\mathsf{BE}$ formulas subsume those defined by $\LTL$, as stated by Theorem~\ref{theoremFromLTLtoBEoverFiniteWords} below, it suffices to prove that \emph{the class of finitary languages definable by $\mathsf{BE}$ formulas is $\LTL$-closed}, and to apply Theorem~\ref{theo:CharacterizationFinitaryLTL}.
We observe that, by definition, the class of $\mathsf{BE}$-definable languages is obviously closed under language complementation and intersection (condition 1 of Definition~\ref{def:LTLClosure}). The fulfillment of conditions 2 and 3 of Definition~\ref{def:LTLClosure} is then proved by the two following Lemmata~\ref{lemma1:CharacterizationFinitaryLTL} and~\ref{lemma2:CharacterizationFinitaryLTL}, respectively.


 \begin{lemma}\label{lemma1:CharacterizationFinitaryLTL} Let $\Sigma$ be a finite alphabet,
  $b\in \Sigma$, $\Gamma=\Sigma\setminus\{b\}$, $\Lang\subseteq \Gamma^{+}$, and $\psi$ be a $\mathsf{BE}$ formula over $\Gamma$ such that $\Lang_\act(\psi)=\Lang$. Then,
  there are $\mathsf{BE}$ formulas defining (under the action-based semantics) the languages
   $b\Lang$, $\Sigma^* b\Lang$, $\Sigma^* b(\Lang+\varepsilon)$,
   $\Lang b$, $ \Lang b \Sigma^* $, $ (\Lang+\varepsilon) b \Sigma^* $, and $b \Lang b$.
  \end{lemma}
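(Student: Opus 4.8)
I would first reduce the seven target languages to three \emph{core} marked languages $b\Lang$, $\Lang b$, $b\Lang b$; the other four are obtained by light padding. Assuming $\mathsf{BE}$ formulas $\varphi_{b\Lang},\varphi_{\Lang b}$ defining $b\Lang,\Lang b$ as subsets of $\Sigma^{+}$, a word lies in $\Sigma^* b\Lang$ iff some (proper or improper) suffix of it lies in $b\Lang$, and dually for $\Lang b\Sigma^*$, so
\[ \Sigma^* b\Lang = \Lang_\act(\varphi_{b\Lang}\vee\hsE\varphi_{b\Lang}), \qquad \Lang b\Sigma^* = \Lang_\act(\varphi_{\Lang b}\vee\hsB\varphi_{\Lang b}). \]
Writing $\mathit{endb}:=(\Length_1\wedge b)\vee\hsE(\Length_1\wedge b)$ for ``the last letter is $b$'' and $\mathit{begb}:=(\Length_1\wedge b)\vee\hsB(\Length_1\wedge b)$ for ``the first letter is $b$'', the remaining two follow from $\Sigma^* b(\Lang+\varepsilon)=\Sigma^* b\Lang\cup\Sigma^* b$ and $(\Lang+\varepsilon)b\Sigma^*=\Lang b\Sigma^*\cup b\Sigma^*$, i.e.\ by disjoining with $\mathit{endb}$ and $\mathit{begb}$. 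Thus it suffices to construct $\varphi_{b\Lang}$, $\varphi_{\Lang b}$, $\varphi_{b\Lang b}$.

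The \emph{main obstacle} concerns exactly these core languages. The naive rendering of $b\Lang$---``there is a proper suffix satisfying $\psi$ whose immediately preceding letter is $b$''---cannot be carried out in $\mathsf{BE}$, which has no transposed modalities: from inside a suffix one can neither inspect the preceding letter nor recognize that a suffix is the \emph{maximal} (longest proper) one. Since for $w=bu\in b\Lang$ all proper suffixes $w[1,n],\dots,w[n,n]$ are $b$-free, the formula $\hsE\psi$ only tests ``some suffix of $u$ lies in $\Lang$'', which is strictly weaker than ``$u\in\Lang$''. The whole point is therefore to evaluate $\psi$ on $u=w[1,n]$ \emph{without ever naming $w[1,n]$ as an interval}.

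The plan is to \emph{never discard the marker}: keep the leading $b$ as the first point of the current interval throughout the induction and simulate $\psi$ on the marker-free remainder by a syntactic translation $T$. I would define $T$ homomorphic on Boolean connectives and, for every interval $J=w[0,m]$ whose unique occurrence of $b$ is its left endpoint, arrange $J\models T(\psi)$ iff $w[1,m]\models\psi$ by setting
\[ T(\hsB\psi):=\hsB(\hsB\top\wedge T(\psi)), \qquad T(\hsE\psi):=\hsE\hsE\,\psi, \]
with $T(p)$ expressing ``$J\in b\,p^{+}$''. This asymmetry is the crux and is what replaces the missing transposed modalities: a proper prefix $w[0,k]$ of $J$ still carries the leading $b$, so it is again marked and the induction recurses through $T$ (the conjunct $\hsB\top$ discards the degenerate prefix $w[0,0]=b$); by contrast every proper suffix $w[j,n]$ with $j\ge1$ has shed the marker and is $b$-free, and the proper suffixes of $u=w[1,m]$ are \emph{precisely} the intervals $w[j,m]$ with $j\ge2$ reached by $\hsE\hsE$, on which plain $\psi$ already has the intended meaning. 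The base case is where homogeneity and the finiteness of $\Sigma$ enter: ``$J\in b\,p^{+}$'' is $\hsB(\Length_1\wedge b)$ (begins with $b$) $\wedge\,\neg\hsE\hsG b$ (no further $b$) $\wedge\bigwedge_{c\in\Sigma\setminus\{b,p\}}\neg\hsG c$ (no letter besides $b,p$). Correctness of $T$ is a routine induction, run together with the auxiliary observation that on any $b$-free interval the $\Sigma$- and $\Gamma$-semantics of $\psi$ coincide and navigation stays $b$-free; then $\varphi_{b\Lang}:=\hsB\top\wedge\hsG b\wedge\neg\hsE\hsG b\wedge T(\psi)$, the first three conjuncts forcing $w=bu$ with $u\in\Gamma^{+}$.

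The formula $\varphi_{\Lang b}$ is produced by the mirror translation $T'$ (swap $\hsB\leftrightarrow\hsE$, prefixes $\leftrightarrow$ suffixes, leading $\leftrightarrow$ trailing marker), with $T'(\hsE\psi):=\hsE(\hsE\top\wedge T'(\psi))$ and $T'(\hsB\psi):=\hsB\hsB\,\psi$. For $b\Lang b$ (i.e.\ $w=bub$ with $u\in\Lang$) I would combine the two one-sided translations into a translation $S$ for doubly-marked intervals: the proper prefixes of $u$ are read off the prefixes of $w$ that drop the trailing $b$ (these are singly left-marked, reached by $\hsB\hsB$ so as to avoid the maximal one, and handled by $T$), the proper suffixes of $u$ off the suffixes that drop the leading $b$ (singly right-marked, reached by $\hsE\hsE$ and handled by $T'$), i.e.
\[ S(\hsB\psi):=\hsB\hsB(\hsB\top\wedge T(\psi)), \qquad S(\hsE\psi):=\hsE\hsE(\hsE\top\wedge T'(\psi)), \]
with $S(p)$ expressing ``$J\in b\,p^{+}b$'' and $\varphi_{b\Lang b}$ conjoining $S(\psi)$ with the doubly-marked shape predicate. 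I expect the only genuinely delicate point to be the soundness proofs of $T$ (and $S$): one must check that the single asymmetric step $\hsE\hsE$ / $\hsB(\hsB\top\wedge\cdot)$ faithfully simulates, at every nesting depth, ``shift the window one point off the marker'', and that once the marker is shed plain $\psi$ is correct; the base cases are immediate given homogeneity and $|\Sigma|<\infty$, and the reductions of the four padded languages to the three core ones are straightforward.
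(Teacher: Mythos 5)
Your proposal is correct and follows essentially the same route as the paper: the paper's translation $h_b$ implements exactly your marker-preserving recursion, with $h_b(\hsB\theta)$ recursing through the translation on (still left-marked) proper prefixes and $h_b(\hsE\theta)$ using a double $\hsE$ to shed the marker, the two cases being selected by the guard $\hsB b$ rather than split into your two regimes $T$ versus plain $\psi$. The only organizational differences are that the paper's single guarded translation works uniformly on marked and unmarked intervals (so it does not need your separate observation that the $\Gamma$- and $\Sigma$-semantics of $\psi$ coincide on $b$-free intervals), and that it obtains $b\Lang b$ by composing a second, right-marking translation $k_b$ with the formula for $b\Lang$ instead of your direct two-sided translation $S$.
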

  \begin{proof}
We focus on the cases for the languages $b\Lang$, $\Sigma^* b\Lang$, $\Sigma^* b$, and  $b \Lang b$ (for the other languages, the proof is similar: $\Sigma^*b(\Lang + \varepsilon)=\Sigma^* b\Lang+ \Sigma^* b$, $\Lang b$ is symmetric to $b\Lang$, $\Lang b\Sigma^*$ to $\Sigma^* b\Lang$, and $ (\Lang+\varepsilon) b \Sigma^* $ to $\Sigma^*b(\Lang + \varepsilon)$). Let $\psi$ be a $\mathsf{BE}$ formula over $\Gamma$ such that $\Lang_\act(\psi)=\Lang$. 

\paragraph{Language $b\Lang$.} The $\mathsf{BE}$ formula defining the language $b\Lang$ is the formula:
\begin{equation}
\label{eq-bL}
(\neg \Length_1 \wedge \hsB b \wedge [E](\neg b \wedge [B]\neg b)) \wedge h_b(\psi),
\end{equation}
where the  formula  $h_b(\psi)$ is inductively defined on the structure of $\psi$ in the following way. The mapping $h_b$ is homomorphic with respect to the Boolean connectives, while for the atomic actions in $\Gamma$ and the modalities $\hsE$ and $\hsB$, it is defined  as follows:
 \begin{itemize}
   \item for all $a\in \Gamma$, $h_b(a)= a\vee (\hsB b \wedge \hsE a \wedge [E]a)$;
   \item $h_b(\hsB\theta)= (\hsB h_b(\theta) \wedge \neg\hsB b) \vee \hsB(h_b(\theta)\wedge \hsB b)$;
   \item $h_b(\hsE\theta)= (\hsE h_b(\theta) \wedge \neg\hsB b) \vee (\hsB b \wedge \hsE \hsE h_b(\theta))$.
 \end{itemize}
The first conjunct of the formula of (\ref{eq-bL}) ensures that a word $u'$ in the defined language has length at least $2$ and it has the form $bu$ without any occurrence of $b$ in $u$. The second conjunct $h_b(\psi)$ ensures that $u$ belongs to the language defined by $\psi$. For atomic actions and temporal modalities, $h_b(\psi)$ is a disjunction of two possible choices; the appropriate one is forced at top level by the first conjunct of the formula of (\ref{eq-bL}), that constrains one and only one $b$ to occur in the word in the first position.

By a straightforward structural induction on $\psi$, it can be shown that the following fact holds.

\emph{Claim 1.} Let $u\in \Gamma^{+}$, $u'= bu$, and $|u|= n+1$. Then, for all $i,j \in [0,n]$ with $i\leq j$, $u[i,j] \in \Lang_\act(\psi)$
if and only if $u'[\,\hat{i},j+1]\in \Lang_\act(h_b(\psi))$, where  $\hat{i}= i$ if $i=0$, and $\hat{i}=i+1$ otherwise.

 By Claim 1, for each $u\in \Gamma^{+}$, $u\in \Lang_\act(\psi)$ if and only if $bu\in \Lang_\act(h_b(\psi))$. Therefore, the formula of (\ref{eq-bL}) captures the language $b\Lang_\act(\psi)$. 

\paragraph{Languages $\Sigma^* b\Lang$ and $\Sigma^* b$.} Following the proof given for the case of the language $b\Lang$, with $\Lang\subseteq \Gamma^{+}$, one can construct a
 $\mathsf{BE}$ formula  $\varphi$ defining the language $b\Lang$. Hence, the $\mathsf{BE}$ formula $\varphi\vee \hsE\varphi$ defines $\Sigma^* b\Lang$. The $\mathsf{BE}$ formula defining $\Sigma^* b$ is $b \vee \hsE b$. 

\paragraph{Language $b\Lang b$.} By the proof given for the language $b\Lang$, with $\Lang\subseteq \Gamma^{+}$, one can build a $\mathsf{BE}$ formula  $\varphi$ defining the language $b\Lang$. The $\mathsf{BE}$ formula defining the language $b\Lang b$ is the formula:
\begin{equation}
\label{eq-bLb}
(\neg \Length_1 \wedge \neg \Length_2 \wedge \hsB b \wedge \hsE b \wedge [E][B]\neg b) \wedge k_b(\varphi)
\end{equation}
where the formula  $k_b(\varphi)$ is inductively defined on the structure of $\varphi$  in the following way. 
The mapping $k_b$ is homomorphic with respect to the Boolean connectives, while for the atomic actions in $\Sigma$ and the modalities $\hsE$ and $\hsB$, it is defined  as follows:
 \begin{itemize}
   \item for all $a\in \Gamma$, $k_b(a)= a\vee (\hsE b \wedge \hsB a \wedge [B]a)$;
   \item $k_b(b)=b$;
   \item $k_b(\hsB\theta)= (\hsB k_b(\theta) \wedge \neg\hsE b) \vee (\hsE b \wedge \hsB \hsB k_b(\theta))$.
      \item $k_b(\hsE\theta)= (\hsE k_b(\theta)\wedge \neg\hsE b) \vee \hsE(k_b(\theta)\wedge \hsE b)$.
 \end{itemize}

The first conjunct of the formula of (\ref{eq-bLb}) ensures that a word $u'$ in the defined language has length at least $3$ and it has the form $bub$ without any occurrence of $b$ in $u$. The second conjunct $k_b(\varphi)$ ensures that $bu$ belongs to the language defined by $\varphi$. Similarly to the case of the language $b\Lang$, for atomic actions (different from $b$) and temporal modalities, $k_b(\psi)$ is a disjunction of two possible choices; the appropriate one is forced at top level by the first conjunct of the formula of (\ref{eq-bLb}), that constrains one and only one $b$ to occur in the word in the last position.

By a straightforward structural induction on $\varphi$, it can be shown that the following fact holds.

\emph{Claim 2.} Let $u\in \Gamma^{+}$ and $|bu|= n+1$. Then, for all $i,j \in [0,n]$ with $i\leq j$, $bu[i,j] \in \Lang_\act(\varphi)$
if and only if $bub[i,\hat{j}]\in \Lang_\act(k_b(\varphi))$ where  $\hat{j}= j$ if $j<n$, and $\hat{j}=n+1$ otherwise. 

 By Claim 2, for each $u\in \Gamma^{+}$, $bu\in \Lang_\act(\varphi)$ if and only if $bub\in \Lang_\act(k_b(\varphi))$ implying that the formula of (\ref{eq-bLb})
 defines the language $\Lang_\act(\varphi)b$. 
 This concludes the proof of the lemma.
  \end{proof}

  \begin{lemma}\label{lemma2:CharacterizationFinitaryLTL}Let $\Sigma$ and $\Delta$ be finite alphabets,
  $b\in \Sigma$, $\Gamma=\Sigma\setminus\{b\}$, $U_0= \Gamma^{*}b$, $h_0:U_0 \rightarrow \Delta$ and $h:U_0^{+} \rightarrow \Delta^{+}$ be defined by
    $h(u_0u_1\cdots u_n)=h_0(u_0)\cdots h_0(u_n)$. Assume that, for each $d\in \Delta$, there is a $\mathsf{BE}$ formula capturing the language $\Lang_d =\{u\in \Gamma^{+}\mid h_0(ub)= d\}$.  Then, for each $\mathsf{BE}$ formula $\varphi$ over $\Delta$, one can construct a $\mathsf{BE}$ formula over $\Sigma$
    capturing
    the language $\Gamma^{*}b h^{-1}(\Lang_\act(\varphi))\Gamma^{*}$.
  \end{lemma}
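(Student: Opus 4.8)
The plan is to build the required formula by a syntax-directed translation $\widehat{(\cdot)}$ sending each $\mathsf{BE}$ formula over $\Delta$ to a $\mathsf{BE}$ formula over $\Sigma$, interpreting each letter $d\in\Delta$ as a \emph{block} $vb$ with $v\in L_d$ and each modality as its analogue acting on block boundaries, which are exactly the occurrences of $b$. The decisive design choice concerns how a contiguous $\Delta$-subword $d_i\cdots d_j$ is encoded as a $\Sigma$-interval. A word of the target language has the shape $w=\gamma_{\mathrm{pre}}\,b\,v_0b\,v_1b\cdots v_nb\,\gamma_{\mathrm{post}}$, with $\gamma_{\mathrm{pre}},\gamma_{\mathrm{post}}\in\Gamma^*$ and $v_m\in\Gamma^+$ (I treat the generic case of nonempty blocks, consistent with $L_d\subseteq\Gamma^+$), the leading $b$ being a separator and the remaining $b$'s being block ends. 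I will represent $d_i\cdots d_j$ by the interval $b\,v_ib\cdots v_jb$, i.e.\ the one \emph{delimited by a $b$ on both ends}. The reason for anchoring both endpoints on a $b$ is that, over finite words, $\mathsf{BE}$ cannot inspect the letter immediately to the left of an interval; hence a suffix step $\hsE$ could not otherwise detect that it has landed on a block boundary. Putting a marker $b$ at each endpoint removes this blindness and lets $\hsB$ and $\hsE$ translate symmetrically.

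Concretely, I would first invoke Lemma~\ref{lemma1:CharacterizationFinitaryLTL} to obtain, for every $d\in\Delta$, a $\mathsf{BE}$ formula $\xi_d$ over $\Sigma$ capturing $bL_db$, so that $\xi_d$ holds on an interval iff it equals $b\,v\,b$ with $v\in\Gamma^+$ and $v\in L_d$. Since $h_0$ is total on $\Gamma^+b$, the disjunction $\bigvee_{d\in\Delta}\xi_d$ captures exactly the ``single-block'' intervals $b\,\Gamma^+\,b$, and the easily definable guards $\mathsf{first}_b$, $\mathsf{last}_b$ and $\mathsf{val}:=\mathsf{first}_b\wedge\mathsf{last}_b\wedge\neg\Length_1\wedge\neg\hsG(\Length_2\wedge b)$ single out the valid encodings $b(\Gamma^+b)^+$. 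The translation is homomorphic on the Boolean connectives; on a letter it must express, by homogeneity of the finitary semantics ($\Lang_\act(d)=d^+$), that \emph{every} block of the interval lies in $L_d$, which I write as $\widehat{d}:=\neg\,\hsG\big(\bigvee_{d'\in\Delta\setminus\{d\}}\xi_{d'}\big)$; and on modalities I set $\widehat{\hsB\theta}:=\hsB(\mathsf{val}\wedge\widehat\theta)$ and $\widehat{\hsE\theta}:=\hsE(\mathsf{val}\wedge\widehat\theta)$, the guard $\mathsf{val}$ forcing the moved endpoint onto a block boundary and excluding the degenerate factor $b$. I would then prove, by structural induction on $\theta$, the invariant: for every valid-encoding sub-interval $[\alpha,\beta]$ of $w$ that encodes $d_i\cdots d_j$, the factor $w[\alpha,\beta]$ belongs to $\Lang_\act(\widehat\theta)$ iff $d_i\cdots d_j\in\Lang_\act(\theta)$. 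The cases match cleanly: valid proper prefixes/suffixes of $b\,v_ib\cdots v_jb$ are precisely the encodings of the proper prefixes/suffixes $d_i\cdots d_k$ and $d_k\cdots d_j$ of the $\Delta$-word, and $\widehat{d}$ is correct because the single-block sub-intervals of $[\alpha,\beta]$ are exactly the blocks $b\,v_mb$.

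It remains to pass from ``the central factor satisfies $\widehat\varphi$'' to the full language $\Gamma^{*}b\,h^{-1}(\Lang_\act(\varphi))\,\Gamma^{*}$. The central factor here is exactly the substring from the first to the last occurrence of $b$, and the flanking $\Gamma^{*}$ are automatically $b$-free. I would therefore wrap $\mathsf{val}\wedge\widehat\varphi$ with a device that strips these two $b$-free pads, in the spirit of the $\Sigma^{*}b\Lang$ and $\Lang b\Sigma^{*}$ constructions of Lemma~\ref{lemma1:CharacterizationFinitaryLTL} (adapted to strip $\Gamma^{*}$ rather than $\Sigma^{*}$), and with guards, of the form ``the removed prefix/suffix contains no $b$'', ensuring the selected factor genuinely begins at the first $b$ and ends at the last $b$ of $w$. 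Because every $b$ of $w$ lies within the first-to-last span, this selection is unambiguous and yields precisely $\Gamma^{*}b\,h^{-1}(\Lang_\act(\varphi))\,\Gamma^{*}$.

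I expect the main obstacle to be the modality translation under the block encoding, and specifically making $\hsE$ (and symmetrically $\hsB$) land on genuine block boundaries despite $\mathsf{BE}$'s inability to look just outside an interval: this is what forces the both-endpoints-on-$b$ convention and, at the top level, the careful coordination of the $b$-free padding so that the selected factor spans exactly from the first to the last $b$. The atom case is the other delicate point, since a $\Delta$-letter is \emph{homogeneous} and therefore must be rendered as a universal statement over all sub-blocks rather than a local test; the $\hsG$-guarded form above is the way I would discharge it.
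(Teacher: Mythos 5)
Your inner induction is essentially the paper's: blocks delimited by occurrences of $b$, a $\hsG/[G]$-style universal rendering of the atoms (your $\neg\hsG\bigl(\bigvee_{d'\neq d}\xi_{d'}\bigr)$ is the complement-phrased twin of the paper's $[G](\psi_b\rightarrow\hat{\theta}_d)$), and guarded $\hsB/\hsE$ steps that land on block boundaries. The genuine problem is the outer wrapping. The paper's induction invariant is stated for the \emph{whole padded word}: $\varphi^{+}$ is evaluated on $w\in\Gamma^{*}b\,h^{-1}(\cdot)\,\Gamma^{*}$ directly, and the pads are managed inside each case (e.g.\ for $\hsB\theta$ via $\xi=\hsE b\wedge\hsB(\theta^{+}\wedge\hsE b)$, where the conjunct $\hsE b$ forces the chosen prefix's right pad to be empty while its left pad is inherited from $w$). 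You instead keep the invariant on the pad-free central factor $w[f,l]$ (first $b$ to last $b$) and defer the pads to a final step that must assert ``$w[f,l]\models \mathsf{val}\wedge\widehat\varphi$''. That step is not available in $\mathsf{BE}$ as you describe it: a formula evaluated on a sub-interval cannot constrain the letters \emph{outside} that sub-interval to be $b$-free, and the Lemma~\ref{lemma1:CharacterizationFinitaryLTL} constructions you appeal to ($\varphi\vee\hsE\varphi$ for $\Sigma^{*}b\Lang$) work precisely because there the stripped pad is unconstrained ($\Sigma^{*}$). An existential selection $\hsG(\mathsf{val}\wedge\widehat\varphi)$ over-approximates: take $\varphi=d$ and $w=b\,v'\,b\,v\,b$ with $v\in\Lang_{d}$ but $v'\in\Lang_{d'}$, $d'\neq d$; then $w\notin\Gamma^{*}b\,h^{-1}(d^{+})\Gamma^{*}$, yet the sub-interval $b\,v\,b$ satisfies $\mathsf{val}\wedge\widehat{d}$. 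Pinning down the \emph{maximal} $b$-delimited factor amounts to a maximality/complement condition that $\hsB$ and $\hsE$ do not express, so ``the removed prefix/suffix contains no $b$'' remains an unproved (and, I believe, unprovable in the form stated) claim. The fix is exactly the paper's: fold the $\Gamma^{*}$ pads into the induction hypothesis so that $\theta^{+}$ already denotes $\Gamma^{*}b\,h^{-1}(\Lang_\act(\theta))\Gamma^{*}$, and recover empty right pads locally by conjoining $\hsE b$ where needed.

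A second, more minor gap: you explicitly restrict to nonempty blocks ($v_m\in\Gamma^{+}$), but $U_0=\Gamma^{*}b$, so $h^{-1}(\Lang_\act(\varphi))$ contains words with empty blocks ($u_m=b$, giving consecutive $b$'s), and these belong to the target language. Your encoding drops them. The paper handles this by passing from $\Lang_d$ to $\hat{\Lang}_d=\{u\in\Gamma^{*}\mid h_0(ub)=d\}$ and using formulas for $b\hat{\Lang}_d b$ (a harmless extension of Lemma~\ref{lemma1:CharacterizationFinitaryLTL}, since $\Lang_d=\hat{\Lang}_d\setminus\{\varepsilon\}$); you would need the same adjustment, together with a corresponding weakening of the $\neg\Length_1$-style guards in $\mathsf{val}$ and $\xi_d$ so that the length-$2$ block $bb$ is admitted.
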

  \begin{proof} By hypothesis and Lemma~\ref{lemma1:CharacterizationFinitaryLTL}, for each $d\in \Delta$ there exists a
  $\mathsf{BE}$ formula $\theta_d$ over $\Sigma$ defining the language $b \Lang_d b$, where $\Lang_d =\{u\in \Gamma^{+}\mid h_0(ub)= d\}$.
  Hence, there is a
  $\mathsf{BE}$ formula $\hat{\theta}_d$ over $\Sigma$ capturing the language $b \hat{\Lang}_d b$, where $\hat{\Lang}_d =\{u\in \Gamma^{*}\mid h_0(ub)= d\}$ (note that
 $\Lang_d= \hat{\Lang}_d\setminus\{\varepsilon\}$).

 Let $\varphi$ be a  $\mathsf{BE}$ formula over $\Delta$. By structural induction over $\varphi$, we construct a $\mathsf{BE}$ formula $\varphi^{+}$ over $\Sigma$
 such that $\Lang_\act(\varphi^{+})= \Gamma^{*}b h^{-1}(\Lang_\act(\varphi))\Gamma^{*}$.
The formula $\varphi^{+}$ is defined as follows:
\begin{itemize}
  \item $\varphi= d$ with $d\in \Delta$. We have that $\Lang_\act(d)=d^{+}$ and $\Gamma^{*}b h^{-1}(\Lang_\act(d))\Gamma^{*}$ is the set of finite words in
  $\Gamma^{*}b \Sigma^{*} b\Gamma^{*}$ such that each subword $u[i,j]$ of $u$ which is in $b\Gamma^{*}b$ is in $b\hat{\Lang}_d b$ as well. Using the formula  $\psi_b:= \neg \Length_1 \wedge \hsB b \wedge \hsE b \wedge [E][B]\neg b$ to define the language $b\Gamma^{*}b$, $\varphi^{+}$ is defined as follows:
  \[
  \varphi^{+} =(\hsG\psi_b) \wedge [G](\psi_b \rightarrow \hat{\theta}_d).
  \]
  \item $\varphi = \neg \theta$. We have that
  \[
 \Gamma^{*}b h^{-1}(\Lang_\act(\varphi))\Gamma^{*} = 
 \Gamma^{*}b h^{-1}(\Delta^+ \setminus \Lang_\act(\theta))\Gamma^{*} = 
 \Gamma^{*}b h^{-1}(\Delta^+)\Gamma^{*} \cap \overline{\Gamma^{*}b  h^{-1}(\Lang_\act(\theta))\Gamma^{*}}, 
  \]
  where $\Gamma^{*}b h^{-1}(\Delta^+)\Gamma^{*}$ restricts the set of \lq candidate\rq{} models to the well-formed ones.
  
  Thus, taking $\psi_b$ as defined in the previous case, $\varphi^{+}$ is given by: 
  \[\varphi^{+}=(\hsG\psi_b) \wedge [G](\psi_b \rightarrow \bigvee_{d\in \Delta}\hat{\theta}_d) \wedge \neg \theta^{+},\]
  where, by the inductive hypothesis, $\Lang_\act(\theta^{+})=\Gamma^{*}b  h^{-1}(\Lang_\act(\theta))\Gamma^{*}$. 
  \item $\varphi = \theta\wedge \psi$. We simply have $\varphi^{+}= \theta^{+}\wedge \psi^{+}$.
  \item $\varphi =\hsB \theta$. First, we note that $\Gamma^{*}b h^{-1}(\Lang_\act(\hsB \theta))\Gamma^{*}$ is the set of finite words in the language  $\Gamma^{*}b h^{-1}(\Lang_\act(\theta))h^{-1}(\Delta^+)\Gamma^{*}$, which is included in the
  language  $\Gamma^{*}bh^{-1}(\Delta^+)\Gamma^{*}$ defined by the formula
  $ [G](\psi_b \rightarrow \bigvee_{d\in \Delta}\hat{\theta}_d).$ Note also that,
  by the inductive hypothesis, $\Gamma^{*}b h^{-1}(\Lang_\act(\theta))$ is included in the language of $\theta^{+}$. 
  Thus, $\varphi^{+}$ is given by:
 \[
 \varphi^{+} =  [G](\psi_b \rightarrow \bigvee_{d\in \Delta}\hat{\theta}_d) \wedge (\xi\vee \hsB\xi),
 \]
 where $\xi=(\hsE b) \wedge  \hsB(\theta^{+}\wedge \hsE b)$.
 %
%
  \item $\varphi =\hsE \theta$. $\Gamma^{*}b h^{-1}(\Lang_\act(\hsE \theta))\Gamma^{*}$ is the set  $\Gamma^{*}b h^{-1}(\Delta^+) h^{-1}(\Lang_\act(\theta))\Gamma^{*}$ included in the
  language  $\Gamma^{*}bh^{-1}(\Delta^+)\Gamma^{*}$, symmetrically to the previous case.
  %
  %
  Thus, $\varphi^{+}$ is given by:
 \[
 \varphi^{+} =  [G](\psi_b \rightarrow \bigvee_{d\in \Delta}\hat{\theta}_d) \wedge (\xi'\vee \hsE\xi'),
\]
where $\xi'=(\hsB b) \wedge  \hsE(\theta^{+}\wedge \hsB b)$.\qedhere
 \end{itemize}

  \end{proof}

Since, by Lemmata~\ref{lemma1:CharacterizationFinitaryLTL} and~\ref{lemma2:CharacterizationFinitaryLTL}, the class of finitary languages definable by $\mathsf{BE}$ formulas is $\LTL$-closed, by Theorem~\ref{theo:CharacterizationFinitaryLTL} we get the following result.

\begin{theorem}\label{theoremFromLTLtoBEoverFiniteWords} Let $\varphi$ be an $\LTL$ formula over a finite alphabet $\Sigma$. Then, there exists a $\mathsf{BE}$ formula
 $\varphi_\HS$ over $\Sigma$ such that $\Lang_\act(\varphi_\HS) =\Lang_\act(\varphi)$.
\end{theorem}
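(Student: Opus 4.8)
The plan is to reduce the theorem to Wilke's characterization (Theorem~\ref{theo:CharacterizationFinitaryLTL}): if the class $\mathcal{C}$ of finitary languages that are definable by $\mathsf{BE}$ formulas (under the homogeneous action-based semantics) is shown to be $\LTL$-closed in the sense of Definition~\ref{def:LTLClosure}, then Theorem~\ref{theo:CharacterizationFinitaryLTL} immediately yields that $\mathcal{C}$ contains every $\LTL$-definable finitary language. This inclusion is exactly the assertion to be proved, namely that each $\LTL$ formula $\varphi$ over $\Sigma$ admits a $\mathsf{BE}$ formula $\varphi_\HS$ with $\Lang_\act(\varphi_\HS)=\Lang_\act(\varphi)$. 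Thus the entire proof amounts to verifying the three conditions of $\LTL$-closure for $\mathcal{C}$.

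First I would dispatch condition~1. Closure under complementation and intersection is immediate from the inductive definition of $\Lang_\act$: for $\mathsf{BE}$ formulas over a common alphabet, $\neg\varphi$ defines $\Sigma^{+}\setminus\Lang_\act(\varphi)$ and $\varphi\wedge\psi$ defines $\Lang_\act(\varphi)\cap\Lang_\act(\psi)$, so $\mathcal{C}$ is trivially closed under both Boolean operations. Conditions~2 and~3 are then exactly what Lemmata~\ref{lemma1:CharacterizationFinitaryLTL} and~\ref{lemma2:CharacterizationFinitaryLTL} establish. Lemma~\ref{lemma1:CharacterizationFinitaryLTL} supplies, from any $\mathsf{BE}$ formula defining $\Lang\subseteq\Gamma^{+}$, $\mathsf{BE}$ formulas for $\Sigma^{*}b\Lang$, $\Sigma^{*}b(\Lang+\varepsilon)$, $\Lang b\Sigma^{*}$, and $(\Lang+\varepsilon)b\Sigma^{*}$, which are precisely the languages demanded by condition~2; Lemma~\ref{lemma2:CharacterizationFinitaryLTL} supplies, under the hypotheses on $h_0,h$ and the $\mathsf{BE}$-definability of each $\Lang_d$, a $\mathsf{BE}$ formula for $\Gamma^{*}b\,h^{-1}(\Lang_\act(\varphi))\Gamma^{*}$, which is condition~3. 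Assembling these three facts shows that $\mathcal{C}$ is $\LTL$-closed, and Theorem~\ref{theo:CharacterizationFinitaryLTL} closes the argument.

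At the level of the theorem itself there is essentially no obstacle: the statement is a direct corollary of the two lemmata and of Wilke's result, so its proof is just the bookkeeping of matching conditions 1--3 to the corresponding results. The genuine difficulty is entirely confined to Lemma~\ref{lemma2:CharacterizationFinitaryLTL} (the inverse-homomorphism condition), where one must inductively translate a $\mathsf{BE}$ formula over the block alphabet $\Delta$ into a $\mathsf{BE}$ formula over $\Sigma$ that correctly reads the $b$-delimited blocks. The homogeneous reading $\Lang_\act(a)=a^{+}$ is what forces the auxiliary encodings $h_b$, $k_b$, and $\hat{\theta}_d$ to thread the separator letter $b$ through the whole induction, and this is where the real care is needed; once those lemmata are available, the present theorem follows with no further effort.
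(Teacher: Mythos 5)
Your proposal is correct and follows exactly the route the paper takes: it observes that condition~1 of $\LTL$-closure is immediate from the inductive definition of $\Lang_\act$, invokes Lemmata~\ref{lemma1:CharacterizationFinitaryLTL} and~\ref{lemma2:CharacterizationFinitaryLTL} for conditions~2 and~3, and concludes via Theorem~\ref{theo:CharacterizationFinitaryLTL}. Your assessment that the theorem itself is pure bookkeeping, with the real work confined to the two lemmata, matches the paper's presentation.
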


 The result expressed in Theorem~\ref{theoremFromLTLtoBEoverFiniteWords} above is used to prove that finitary $\CTLStar$ is subsumed by the fragment $\mathsf{ABE}$ under the state-based semantics.

  \begin{theorem}\label{theo:FromFInitaryCTLStarToFutureHSBranching} Let $\varphi$ be a finitary $\CTLStar$ 
  formula over $\mathpzc{AP}$. Then, there is an $\mathsf{ABE}$ formula $\varphi_\HS$ over $\mathpzc{AP}$ such that for all
  Kripke structures $\mathpzc{K}$ over $\mathpzc{AP}$ and traces $\rho$, $\mathpzc{K},\rho,0\models \varphi$ if and only if  $\mathpzc{K},\rho\models_\stat \varphi_\HS$.
  \end{theorem}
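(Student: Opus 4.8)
The plan is to proceed by structural induction on the finitary $\CTLStar$ formula, defining the translation through two mutually recursive maps: a map $\tau_p$ on \emph{path formulas} and a map $\tau_h$ on \emph{state formulas}. For a path formula $\psi$, $\tau_p(\psi)$ will be an $\mathsf{ABE}$ formula such that, for every $\mathpzc{K}$ and every trace $\rho$, $\mathpzc{A}_\mathpzc{K},\rho\models \tau_p(\psi)$ iff $\rho,0\models\psi$; this is precisely the statement of the theorem for $\psi$. For a state formula $\chi$, $\tau_h(\chi)$ will be an $\mathsf{ABE}$ formula that behaves \emph{homogeneously}, i.e.\ for every $\mathpzc{K}$ and every interval $I$, $\mathpzc{A}_\mathpzc{K},I\models \tau_h(\chi)$ iff every state occurring in $I$ satisfies $\chi$. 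The recursion bottoms out on pure $\LTL$ path formulas (no occurrence of $\EQF$) and decreases in size at each quantifier alternation, so it is well founded; taking $\varphi_\HS=\tau_p(\varphi)$ then yields the theorem, since $\models_\stat$ over the full trace $\rho$ is exactly the $\tau_p$ case.

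For the path-formula map I would read $\psi$ as an $\LTL$ formula over the finite alphabet $\Sigma=2^{SF}$, where $SF$ is the set of \emph{maximal state subformulas} of $\psi$ (atoms $p$ and top-level occurrences $\EQF\theta$), each treated as a fresh proposition letter. Since the finitary $\CTLStar$ clauses for $\Next$ and $\until$ coincide with the standard finite-word semantics of $\LTL$, and since each maximal state subformula depends only on the current state, we obtain $\rho,0\models\psi$ iff $\mu_{SF}(\rho)\in\Lang_\act(\psi)$, where $\mu_{SF}(\rho)$ is the finite word over $\Sigma$ recording which state subformulas hold at each state of $\rho$. Applying Theorem~\ref{theoremFromLTLtoBEoverFiniteWords} gives a $\mathsf{BE}$ formula $\psi_{\mathsf{BE}}$ over $\Sigma$ with $\Lang_\act(\psi_{\mathsf{BE}})=\Lang_\act(\psi)$, and I set $\tau_p(\psi)$ to be $\psi_{\mathsf{BE}}$ with each letter $\chi\in SF$ simultaneously replaced by $\tau_h(\chi)$.

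The two substitution clauses are $\tau_h(p)=p$ (already homogeneous, matching $\Lang_\act(p)$) and
\[
\tau_h(\EQF\theta)\;=\;[G]\bigl(\Length_1\rightarrow \hsA\,\tau_p(\theta)\bigr).
\]
Here $[G](\Length_1\rightarrow\cdots)$ ranges over exactly the point subintervals of the current interval, i.e.\ its individual states, while $\hsA$ from a point $[z,z]$ reaches precisely the traces starting at $z$ (the meets relation that $\hsA$ denotes over $\mathpzc{A}_\mathpzc{K}$); hence $\hsA\,\tau_p(\theta)$ holds at $[z,z]$ iff some trace from $z$ satisfies $\theta$ at position $0$, which by the inductive hypothesis on $\tau_p$ is the state property $\EQF\theta$. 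Correctness of $\tau_p(\psi)$ then follows from a routine induction on the structure of $\psi_{\mathsf{BE}}$ showing that $\mathpzc{A}_\mathpzc{K},\rho\models\psi_{\mathsf{BE}}[\chi\mapsto\tau_h(\chi)]$ iff $\mu_{SF}(\rho)\in\Lang_\act(\psi_{\mathsf{BE}})$: the $\hsB$/$\hsE$ subinterval structure on $\rho$ is identical on both sides, and for each atom $\chi$ the homogeneous reading ``$\chi$ holds at all positions of the subinterval'' is matched exactly by the defining property of $\tau_h(\chi)$.

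The hard part will be the reconciliation with homogeneity. The finite-word translation of Theorem~\ref{theoremFromLTLtoBEoverFiniteWords} interprets every proposition letter homogeneously, whereas a state subformula $\EQF\theta$ is a genuinely state-local property; the map $\tau_h$ is the device that bridges this gap, and the crux is to verify that ``$\EQF\theta$ holds at \emph{every} state of $I$'' is expressible in the future-only fragment $\mathsf{ABE}$. This is exactly where $\hsA$ is indispensable, being the only modality able to quantify existentially over the branching futures issuing from an internal state of $I$, which $\hsB$ and $\hsE$ alone cannot reach. One must also confirm that the encoding stays within $\mathsf{ABE}$ — the derived operators $[G]$ and $\Length_1$ use only $\hsB$ and $\hsE$, and $\hsA$ is primitive in the fragment — and that the inductive hypothesis for $\tau_p(\theta)$ is available, which it is since $\theta$ is a strictly smaller formula.
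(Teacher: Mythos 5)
Your proposal follows essentially the same route as the paper: induction on the nesting depth of $\EQF$, abstraction of the maximal state subformulas into an enlarged alphabet so that the path formula becomes an $\LTL$ formula over finite words, application of Theorem~\ref{theoremFromLTLtoBEoverFiniteWords}, and re-substitution of each $\EQF\theta$ via $[G](\Length_1\rightarrow\hsA\,\tau_p(\theta))$, which is exactly the paper's construction. The only slip is in the substitution step: the atoms of $\psi_{\mathsf{BE}}$ are letters $P\in 2^{SF}$ rather than elements of $SF$, so each $P$ must be replaced by a formula asserting that at every point of the interval \emph{exactly} the state subformulas in $P$ hold --- i.e.\ you also need the negative conjuncts $\neg\hsA\,\tau_p(\theta)$ for $\EQF\theta\notin P$ and $\neg p$ for $p\notin P$ --- which your $\tau_h$ as literally stated omits, though your own setup already contains everything needed to repair this.
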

  \begin{proof} The proof is by induction on the nesting depth of modality $\EQF$ in $\varphi$. In the base case, $\varphi$ is a finitary $\LTL$ formula over $\mathpzc{AP}$. Since what we need to deal with it is just the first part of the work we have to do for the inductive step, it is omitted and only the inductive step is detailed. 
  
  Let $H$ be the non-empty set of subformulas of $\varphi$ of the form $\EQF \psi$ which do not occur in the scope of the path quantifier $\EQF$, that is, the $\EQF \psi$ formulas which are maximal with respect to the nesting depth of modality $\EQF$. Then, $\varphi$ can be seen as an $\LTL$ formula over the extended set of proposition letters
  $\overline{\mathpzc{AP}}=\mathpzc{AP}\cup H$. Let $\Sigma = 2^{\overline{\mathpzc{AP}}}$ and $\overline{\varphi}$ be the $\LTL$ formula over $\Sigma$
  obtained from $\varphi$ by replacing the occurrences of each proposition letter $p \in \overline{\mathpzc{AP}}$ in $\varphi$ with the formula
  $\bigvee_{P\in \Sigma \; : \; p\in P} P$, according to the $\LTL$ action-based semantics. 
  
  Given a Kripke structure  $\mathpzc{K}$ over $\mathpzc{AP}$ with labeling $\mu$  and a trace $\rho$ of $\mathpzc{K}$,
  we denote by $\rho_H$ the finite word over $2^{\overline{\mathpzc{AP}}}$ of length $|\rho|$ defined as $\rho_H(i)= \mu(\rho(i))\cup \{\EQF\psi\in H\mid\mathpzc{K},\rho,i\models \EQF\psi\}$, for all $i\in [0,|\rho|-1]$.
  One can easily prove by structural induction on $\overline{\varphi}$ that $\mathpzc{K},\rho,0\models \varphi$ if and only if $\rho_H\in \Lang_\act(\overline{\varphi})$.
  By Theorem~\ref{theoremFromLTLtoBEoverFiniteWords}, there exists a $\mathsf{BE}$ formula $\overline{\varphi}_\HS$ over $\Sigma$ such that
  $\Lang_\act(\overline{\varphi})= \Lang_\act(\overline{\varphi}_\HS)$.

Now, by the induction hypothesis, for each formula   $\EQF \psi\in H$, there exists an $\mathsf{ABE}$ formula $\psi_\HS$ such that
for all Kripke structures $\mathpzc{K}$ and traces $\rho$ of $\mathpzc{K}$,
$\mathpzc{K},\rho,0\models \psi \text{ iff }  \mathpzc{K},\rho\models_\stat \psi_\HS$.
Since $\rho$ is arbitrary, 
$\mathpzc{K},\rho,i\models \EQF\psi\, \text{ iff }\, \mathpzc{K},\rho[i,i],0\models \EQF\psi\, \text{ iff }\,  \mathpzc{K},\rho[i,i]\models_\stat \hsA\psi_\HS$,
for each $i\geq 0$.

Let $\varphi_\HS$ be the $\mathsf{ABE}$ formula over $\mathpzc{AP}$ obtained from the $\mathsf{BE}$ formula $\overline{\varphi}_\HS$ by replacing each occurrence of
$P\in \Sigma$ in $\overline{\varphi}_\HS$ with the formula
\[
[G]\Big(\Length_1 \longrightarrow  \smashoperator[r]{\bigwedge_{\EQF\psi\in H\cap P}} \hsA\psi_\HS \;\wedge  \smashoperator[r]{\bigwedge_{\EQF\psi\in H\setminus P}}\neg \hsA\psi_\HS \;\wedge \smashoperator[r]{\bigwedge_{p\in \mathpzc{AP}\cap P}} p\;\wedge  \smashoperator[r]{\bigwedge_{p\in \mathpzc{AP}\setminus P}} \neg p\Big).
\]

Since for all $i\geq 0$ and $\EQF\psi\in H$, $\mathpzc{K},\rho,i\models \EQF\psi$ if and only if  $\mathpzc{K},\rho[i,i]\models_\stat \hsA\psi_\HS$, it is possible to prove by a straightforward induction
on the structure of $\overline{\varphi}_\HS$ that, for any Kripke structure $\mathpzc{K}$ and trace $\rho$ of $\mathpzc{K}$ we have
$\mathpzc{K},\rho\models_\stat \varphi_\HS$ if and only if $\rho_H\in \Lang_\act(\overline{\varphi}_\HS)$.
%

   Therefore, since $\mathpzc{K},\rho,0\models \varphi$ if and only if $\rho_H\in \Lang_\act(\overline{\varphi})$ and
   $\Lang_\act(\overline{\varphi})= \Lang_\act(\overline{\varphi}_\HS)$, $\mathpzc{K},\rho,0\models \varphi$ if and only if  $\mathpzc{K},\rho\models_\stat \varphi_\HS$, for any Kripke structure
 $\mathpzc{K}$ and trace $\rho$ of $\mathpzc{K}$. 
  \end{proof}

  Since the fragment $\A\B\E$ of $\HS$ does not feature any modalities unravelling a Kripke structure backward (namely, $\hsAt$ and $\hsEt$), the computation-tree-based semantics coincides with the state-based one (recall Figure~\ref{fig:ST} and \ref{fig:CT}), and thus the next corollary immediately follows from Theorem~\ref{theo:FromFInitaryCTLStarToFutureHSBranching}. 

  \begin{corollary}\label{cor:FromFInitaryCTLStarToFutureHSBranching} Finitary $\CTLStar$ is subsumed by both $\HS_\stat$ and $\HS_\LinearPast$.
  \end{corollary}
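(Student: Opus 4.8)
The plan is to derive Corollary~\ref{cor:FromFInitaryCTLStarToFutureHSBranching} as an immediate consequence of Theorem~\ref{theo:FromFInitaryCTLStarToFutureHSBranching}, which already provides, for every finitary $\CTLStar$ formula $\varphi$, an $\mathsf{ABE}$ formula $\varphi_\HS$ such that $\mathpzc{K},\rho,0\models \varphi$ if and only if $\mathpzc{K},\rho\models_\stat \varphi_\HS$, for all Kripke structures $\mathpzc{K}$ and traces $\rho$. First I would lift this trace-level equivalence to a model-checking equivalence. By definition, $\mathpzc{K}$ is a model of the finitary $\CTLStar$ formula $\varphi$ if and only if $\mathpzc{K},\rho,0\models\varphi$ for every \emph{initial} trace $\rho\in\Trk_\mathpzc{K}^{0}$; and $\mathpzc{K}\models_\stat\varphi_\HS$ if and only if $\mathpzc{K},\rho\models_\stat\varphi_\HS$ for every initial trace $\rho$. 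Since the theorem gives the equivalence $\mathpzc{K},\rho,0\models\varphi \iff \mathpzc{K},\rho\models_\stat\varphi_\HS$ uniformly over \emph{all} traces (in particular over the initial ones), the two universally quantified statements coincide, so $\mathpzc{K}\models\varphi \iff \mathpzc{K}\models_\stat\varphi_\HS$. As $\mathpzc{K}$ ranges over all finite Kripke structures, this exhibits, for each finitary $\CTLStar$ formula, an equivalent $\HS_\stat$ formula (indeed in the fragment $\mathsf{ABE}$), establishing that finitary $\CTLStar$ is subsumed by $\HS_\stat$.

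It remains to transfer the same conclusion to $\HS_\LinearPast$. Here I would invoke the observation recorded just before the corollary statement: the fragment $\A\B\E$ contains no backward-looking modalities (no $\hsAt$, no $\hsEt$), so from any interval the only reachable intervals are subintervals (via $\hsB,\hsE$) or neighbouring intervals reached forward (via $\hsA$), and none of these accesses the portion of a trace preceding its initial state. Consequently the evaluation of an $\mathsf{ABE}$ formula depends only on the current interval and its future, and is insensitive to whether past is branching (as in $\HS_\stat$) or linear and cumulative (as in $\HS_\LinearPast$). Formally, $\HS_\LinearPast$ evaluates $\varphi_\HS$ over the abstract interval model induced by the computation tree $\mathpzc{C}(\mathpzc{K})$ under the state-based semantics; since the nodes of $\mathpzc{C}(\mathpzc{K})$ are exactly the initial traces of $\mathpzc{K}$, and the $B,E,A$ relations over these nodes mirror those over the corresponding traces of $\mathpzc{K}$, the absence of transposed modalities guarantees that $\mathpzc{K}\models_\LinearPast\varphi_\HS$ and $\mathpzc{K}\models_\stat\varphi_\HS$ agree. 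Combining this with the equivalence established in the previous paragraph yields that finitary $\CTLStar$ is also subsumed by $\HS_\LinearPast$.

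There is essentially no hard part: the corollary is a packaging result that reads off the model-checking subsumption from the trace-level equivalence of Theorem~\ref{theo:FromFInitaryCTLStarToFutureHSBranching}, together with the elementary structural fact that $\mathsf{ABE}$ is blind to the past. The only point demanding a line of care is making explicit that the theorem's equivalence holds for \emph{all} traces, so that restricting attention to initial traces on both sides preserves the biconditional, and that the coincidence of the state-based and computation-tree-based semantics on $\A\B\E$ is exactly what the earlier remark (recall Figure~\ref{fig:ST} and~\ref{fig:CT}) asserts. Both are routine, so the corollary follows immediately.
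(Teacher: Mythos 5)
Your proposal is correct and follows essentially the same route as the paper: the corollary is read off directly from Theorem~\ref{theo:FromFInitaryCTLStarToFutureHSBranching}, combined with the observation that $\mathsf{ABE}$, lacking the transposed modalities $\hsAt$ and $\hsEt$, cannot look backward, so the state-based and computation-tree-based semantics coincide on it. The extra care you take in lifting the trace-level equivalence to the model-checking level over initial traces is implicit in the paper but entirely in the same spirit.
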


\subsection{From $\HS_\LinearPast$ to finitary $\CTLStar$}
We show now that $\HS_\LinearPast$ is subsumed by both $\CTLStar$ and its finitary variant.
To prove this result,  we first introduce a hybrid and linear-past extension of
$\CTLStar$, called \emph{hybrid} $\CTLStarLP$, and its finitary variant, called \emph{finitary hybrid} $\CTLStarLP$.

Besides standard modalities, hybrid logics make use of explicit variables and quantifiers that bind them~\cite{BS98}.
Variables and binders allow us to easily mark points in a path, which will be considered as starting and ending points of intervals, thus permitting a natural encoding of $\HS_\LinearPast$. Actually, we will show that the restricted use of variables and binders exploited in our encoding does not increase the expressive power of (finitary) $\CTLStar$ (as it happens for an unrestricted use), thus proving the desired result. We start defining \emph{hybrid} $\CTLStarLP$.
%

For a countable set $\{x,y,z,\ldots\}$ of (position) variables, the set of  formulas $\varphi$ of
hybrid $\CTLStarLP$  over $\mathpzc{AP}$ is defined as follows:
\[
\varphi ::= \top \ | \ p \ | \ x \ | \ \neg \varphi \ | \ \varphi \vee \varphi \ | \ \Downarrowx.\varphi \ | \ \Next \varphi\ | \ \varphi \until \varphi \ | \ \Next^{-} \varphi\ | \ \varphi \until^{-} \varphi\ | \ \EQ  \varphi,
\]
where $\Next^{-}$ (`previous') and $\until^{-}$ (`since') are the  past counterparts of the `next' and `until' modalities $\Next$ and $\until$, and $\Downarrowx$ is the \emph{downarrow binder operator}~\cite{BS98}, which binds $x$ to the current position along the given initial infinite path. We also use the standard shorthands $\Eventually^{-}\varphi:= \top \until^{-} \varphi$ (`eventually in the past')  and its
dual  $\Always^{-} \varphi:=\neg \Eventually^{-}\neg\varphi$ (`always in the past'). As usual, a sentence is a formula with no free variables.

Let $\mathpzc{K}$ be a Kripke structure and $\varphi$ be a hybrid $\CTLStarLP$ formula. For an \emph{initial}  infinite path $\pi$ of $\mathpzc{K}$, a variable valuation $g$, that assigns to each  variable $x$  a position along $\pi$, and  $i\geq 0$, the satisfaction relation 
$\pi,g, i\models  \varphi$ 
is defined as follows (we omit the clauses for Boolean connectives, for $\until$ and $\Next$):
\[ \begin{array}{ll}
\pi, g,i \models \Next^{-} \varphi  & \Leftrightarrow  i>0 \text{ and } \pi, g, i-1 \models \varphi ,\\
 \pi, g, i \models \varphi_1\until^{-} \varphi_2  &
  \Leftrightarrow  \text{for some $j\leq i$}, \pi, g, j
  \models \varphi_2
  \text{ and }  \pi, g, k \models  \varphi_1 \text{ for all }j< k\leq i,\\
\pi, g, i \models \EQ \varphi  & \Leftrightarrow \text{for some initial infinite path } \pi'  \text{ such that } \pi'[0,i]=\pi[0,i],\,   \pi', g, i \models \varphi,\\
\pi,g, i \models x  &  \Leftrightarrow  g(x)=i,\\
\pi,g, i \models \Downarrowx.\varphi  &  \Leftrightarrow  \pi,g[x\leftarrow i], i \models \varphi,
\end{array} \]
where $g[x\leftarrow i](x)=i$ and $g[x\leftarrow i](y)=g(y)$ for $y\neq x$.  A Kripke structure $\mathpzc{K}$ is a model of a formula $\varphi$ if  $\pi,g_0, 0 \models \varphi$, for every initial infinite path $\pi$ of $\mathpzc{K}$, with $g_0$ the variable evaluation assigning $0$ to each variable. Note that the path quantification is `memoryful',
i.e., it  ranges over infinite paths that start at the root and visit  the current node of the computation tree. Clearly, the semantics for the syntactical fragment
$\CTLStar$ coincides with the standard one. If we disallow the use of variables and binder modalities, we obtain the  logic
$\CTLStarLP$, a well-known linear-past extension of $\CTLStar$ which is as expressive as $\CTLStar$ \cite{jcss/KupfermanPV12}.
We also consider the finitary variant of hybrid $\CTLStarLP$, where the path quantifier $\EQ$ is replaced with the finitary path quantifier $\EQF$. This logic corresponds to an extension of finitary $\CTLStar$ and its semantics is similar to that of hybrid $\CTLStarLP$ with the exception that path quantification ranges over the \emph{finite} paths (traces) that start at the root and visit the current node of the computation tree.

In the following, we will use the fragment of hybrid $\CTLStarLP$ consisting of \emph{well-formed} formulas,
namely, formulas $\varphi$ where:
 \begin{itemize}
   \item each subformula $\EQ \psi$   of $\varphi$ has at most one free variable (namely, not bound by the downarrow binder operator); 
	\item each  subformula $\EQ \psi(x)$ of $\varphi$ having $x$ as free variable occurs in $\varphi$ in the context $(\Eventually^{-} x) \wedge \EQ\psi(x)$.
\end{itemize}
Intuitively, the above conditions affirm that, for each state subformula  $\EQ \psi$, the unique free variable (if any) refers to ancestors of the current node in the computation tree.\footnote{The well-formedness constraint ensures that a formula captures only branching regular requirements. As an example,
the formula $\EQ\Eventually\Downarrowx.\Always^{-}(\neg \Next^{-} \top \rightarrow \AQ \Eventually(x\wedge p))$ is \emph{not} well-formed and requires that there is a level of the computation tree such that each node in the level satisfies $p$. This represents a non-regular context-free branching requirement (see, e.g., \cite{AlurCZ06}).} 

The notion of well-formed formula of finitary hybrid $\CTLStarLP$ is similar: the path quantifier $\EQ$ is replaced by its finitary version $\EQF$.

 We first show that
 $\HS_\LinearPast$ can be translated into the well-formed fragment of hybrid $\CTLStarLP$ (resp., well-formed fragment of finitary hybrid $\CTLStarLP$). Then, we show that this fragment is subsumed by $\CTLStar$ (resp., finitary $\CTLStar$).

\begin{proposition}\label{prop:HSComputationToHybrid} Given a $\HS_\LinearPast$ formula $\varphi$, one can construct in linear-time an equivalent well-formed
sentence of hybrid $\CTLStarLP$ (resp., finitary hybrid $\CTLStarLP$).
\end{proposition}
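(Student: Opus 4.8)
The plan is to define, by structural induction on the $\HS_\LinearPast$ formula, a translation that represents an interval of the computation tree $\mathpzc{C}(\mathpzc{K})$ by a single position together with one marked ancestor. An interval $[x,y]$ of $\mathpzc{C}(\mathpzc{K})$ is a downward path from an ancestor $x$ to a descendant $y$; along any initial infinite path $\pi$ passing through $y$ both endpoints appear, with $x$ in the linear past of $y$. I would therefore take the \emph{current position} of the hybrid formula to be the right endpoint $y$, and keep the left endpoint $x$ recorded in a bound position variable. Since the past of the computation tree is linear, the candidate left endpoints are exactly the positions reachable by the past modalities, so a single marked variable suffices. Homogeneous atoms are then handled by a ``since'' formula: a proposition letter $p$ holds over $[x,y]$ iff $p$ holds continuously from the marked position $x$ up to the current position, which I would encode as $p \mathrel{\until^{-}} (x \wedge p)$.

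I would define a map $T$ that is homomorphic on the Boolean connectives and, writing $x$ for the variable currently marking the left endpoint, treats the four primitive modalities as follows. For $\hsB\psi$ (same left endpoint, strictly shorter prefix) the right endpoint moves back into the linear past while staying $\ge x$: I would use $T(\hsB\psi)=\Next^{-}\Eventually^{-}\big((\Eventually^{-} x)\wedge T(\psi)\big)$, where $\Next^{-}\Eventually^{-}$ forces a strictly earlier position and $\Eventually^{-} x$ forces it to lie in $[x,y)$. For $\hsBt\psi$ (same left endpoint, strictly longer extension) the right endpoint moves into the \emph{branching} future, so here --- and only here --- I introduce a path quantifier: $T(\hsBt\psi)=(\Eventually^{-} x)\wedge\EQ(\Next\Eventually\,T(\psi))$, the memoryful $\EQ$ choosing a descendant branch and $\Next\Eventually$ moving strictly forward (the left endpoint $x$ remains an ancestor, hence still correctly marked). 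The conjunct $\Eventually^{-} x$ is vacuously true at $y$ (as $x\le y$) but is inserted precisely to realise the well-formedness context required for the path quantifier.

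The delicate cases are $\hsE$ and $\hsEt$, where the \emph{marked} left endpoint moves while the right endpoint is preserved. Since the binder can only bind the current position, I would use a mark-and-return gadget: bind the current right endpoint to a fresh variable $z$, jump into the linear past to the intended new left endpoint $v$, rebind $v$ to a fresh variable $x'$, and then move forward along the same (current) path back to $z$, where the recursion continues with $x'$ as the new left-endpoint marker. Schematically,
\[
T(\hsE\psi)\;=\;\mathord{\downarrow} z.\;\Eventually^{-}\!\big(\,(\Next^{-}\Eventually^{-} x)\;\wedge\;\mathord{\downarrow} x'.\,\Eventually(z\wedge T(\psi)[x'/x])\,\big),
\]
and $T(\hsEt\psi)$ is identical except that the constraint $\Next^{-}\Eventually^{-} x$ (which says $x<v$) is replaced by $\neg\Eventually^{-} x$ (which says $v<x$). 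Crucially, the forward move $\Eventually(z\wedge\cdots)$ is taken along the fixed current path and uses no path quantifier, so it deterministically returns to the old right endpoint; consequently these cases introduce no new $\EQ$.

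I would then prove the correctness lemma by structural induction: for every initial infinite path $\pi$ of $\mathpzc{K}$, all positions $i\le j$, and every valuation $g$ with $g(x)=i$, it holds that $\mathpzc{C}(\mathpzc{K}),\rho\models_\stat\psi$ iff $\pi,g,j\models T(\psi)$, where $\rho$ is the unique trace of $\mathpzc{C}(\mathpzc{K})$ running from $\pi[0,i]$ to $\pi[0,j]$; the atomic case uses the ``since'' encoding, and each modal case uses the navigation above, the memoryful semantics of $\EQ$ ensuring that the ancestors (hence the marked endpoint) survive a branch choice. The required sentence is $\mathord{\downarrow} x.\,\Always\,T(\psi)$: ranging over all initial paths and all positions $j$ with the root marked as left endpoint exactly enumerates the initial traces of $\mathpzc{C}(\mathpzc{K})$, so this sentence is equivalent to $\psi$ in $\HS_\LinearPast$. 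Well-formedness follows by inspection --- the only path quantifier is the one introduced for $\hsBt$, at which point the sole free variable is the current left-endpoint marker (every right-endpoint marker $z$ is consumed before recursing), and it is always guarded by the conjunct $\Eventually^{-} x$ --- and the finitary statement is obtained verbatim by replacing $\EQ$ with $\EQF$. Since each clause wraps its subtranslations a constant number of times, $T$ is computable in linear time. The main obstacle is exactly the $\hsE/\hsEt$ gadget: I must verify that marking the right endpoint, re-marking the left endpoint in the past, and returning along the current path faithfully realises the movement of the \emph{marked} endpoint, and that this is achieved without spawning additional or multi-variable path quantifiers, which is what keeps the output inside the well-formed fragment.
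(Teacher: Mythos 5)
Your proposal is correct and follows essentially the same route as the paper: the left endpoint is tracked by a bound position variable with the current position playing the role of the right endpoint, $\hsB$/$\hsBt$ move the right endpoint via past navigation and a single $\Eventually^{-}x$-guarded memoryful path quantifier respectively, $\hsE$/$\hsEt$ use the same mark-the-right-endpoint-and-return gadget, and the sentence is $\Downarrowx.\Always\,f(\varphi,x)$. Your clauses differ from the paper's only in inessential ways (e.g., $p\until^{-}(x\wedge p)$ versus $\Always^{-}((\Eventually^{-}x)\rightarrow p)$, and locating the new left endpoint by $\Next^{-}\Eventually^{-}x$ or $\neg\Eventually^{-}x$ rather than by first jumping exactly to $x$), so this is the paper's proof up to cosmetic reformulation.
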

\begin{proof}
We focus on the translation from $\HS_\LinearPast$ into the well-formed fragment of hybrid $\CTLStarLP$. The translation from $\HS_\LinearPast$ into the well-formed fragment
of finitary hybrid $\CTLStarLP$ is similar, and thus omitted.
Let $\varphi$ be a $\HS_\LinearPast$ formula. The desired hybrid $\CTLStarLP$ sentence is the formula
$\Downarrowx .\Always\, f(\varphi,x)$, where $f(\varphi,x)$ is a mapping which is homomorphic with respect to 
the Boolean connectives, and over proposition letters and modalities behaves as follows:
\[ \begin{array}{ll}
f(p,x) & = \Always^{-}((\Eventually^{-}x ) \rightarrow p),\\
f(\hsB\psi,x) & = \Next^{-}\Eventually^{-}(f(\psi,x)\wedge \Eventually^{-}x) ,\\
f(\hsBt\psi,x) & = \EQ(\Next\Eventually f(\psi,x)) \wedge \Eventually^{-}x,\\
f(\hsE\psi,x) & = \Downarrowy.\Eventually^{-}\bigl(x \wedge \Next\Eventually \Downarrowx.\Eventually(y\wedge f(\psi,x))\bigr ) ,\\
f(\hsEt\psi,x) & = \Downarrowy.\Eventually^{-}\bigl((\Next\Eventually x) \wedge  \Downarrowx.\Eventually(y\wedge f(\psi,x))\bigr ),
\end{array} \]
where $y$ is a fresh variable.

Clearly $\Downarrowx .\Always\, f(\varphi,x)$ is well-formed. The formula $f(\varphi,x)$ intuitively states that $\varphi$ holds over an 
interval of the current path that starts at the position (associated with the variable) $x$ and ends at the current position.
More formally, let $\mathpzc{K}$ be a Kripke structure, $[h,i]$ be an interval of positions, $g$ be a valuation assigning to 
the variable $x$ the position $h$, and $\pi$ be an initial infinite path.
By a straightforward induction on the structure of  $\varphi$, one can show that
 $\mathpzc{K},\pi,g,i\models f(\varphi,x)$ if and only if  $\mathpzc{C}(\mathpzc{K}),\mathpzc{C}(\pi,h,i)\models_\stat \varphi$, where
$\mathpzc{C}(\pi,h,i)$ denotes the trace of the computation tree $\mathpzc{C}(\mathpzc{K})$ starting from $\pi[0,h]$ and leading to $\pi[0,i]$.
Hence, $\mathpzc{K}$ is a model of $\Downarrowx .\Always\, f(\varphi,x)$ if, for each initial trace $\rho$ of $\mathpzc{C}(\mathpzc{K})$, we have
$\mathpzc{C}(\mathpzc{K}),\rho\models_\stat \varphi$.
\end{proof}

Let $\LTLP$ be the past extension of $\LTL$, obtained by adding the past modalities $\Next^{-}$ and $\until^{-}$.
By exploiting the well-known separation theorem for $\LTLP$ over finite and infinite words \cite{Gabbay87}, which states that any $\LTLP$ formula can be effectively converted into an equivalent Boolean combination of $\LTL$ formulas and pure past $\LTLP$ formulas, we can prove that, under the hypothesis of well-formedness, the extensions of $\CTLStar$ (resp., finitary $\CTLStar$) used to encode  $\HS_\LinearPast$ formulas do not increase the expressive power of $\CTLStar$ (resp., finitary $\CTLStar$). Such a result is the fundamental step to prove, together with Proposition~\ref{prop:HSComputationToHybrid},
that $\CTLStar$ subsumes $\HS_\LinearPast$. In addition, paired with Corollary~\ref{cor:FromFInitaryCTLStarToFutureHSBranching}, it will allow us to state the main result of the section, namely, that $\HS_\LinearPast$ and finitary $\CTLStar$ have the same expressiveness.

Let us now show that the well-formed fragment of hybrid $\CTLStarLP$ (resp., finitary hybrid $\CTLStarLP$) is not more expressive than $\CTLStar$ (resp., finitary $\CTLStar$). Once more, we focus on the well-formed fragment of hybrid $\CTLStarLP$ omitting the similar proof for the finitary variant.

We start with some additional definitions and auxiliary results.  A \emph{pure past} $\LTLP$ formula is an $\LTLP$ formula which does not contain occurrences of future temporal modalities. Given two formulas $\varphi$ and $\varphi'$ of hybrid $\CTLStarLP$, we say that
$\varphi$ and $\varphi'$ are \emph{congruent} if, for every  Kripke structure $\mathpzc{K}$, initial infinite path $\pi$, valuation $g$, and current position $i$,
$\mathpzc{K},\pi,g, i \models \varphi$ if and only if $\mathpzc{K},\pi,g, i \models \varphi'$ (note that congruence is a \emph{stronger} requirement than equivalence).

As usual, for a formula $\varphi$ of hybrid $\CTLStarLP$ with one free variable $x$, we write $\varphi(x)$. Moreover, since the satisfaction relation depends only on the variables occurring free in the given formula, for $\varphi(x)$ we  use the notation
$\mathpzc{K},\pi,i \models \varphi(x \leftarrow h)$ to mean that $\mathpzc{K},\pi,g,i \models \varphi$ for any valuation $g$ assigning $h$ to the unique free variable $x$.
For a formula $\varphi$ of hybrid $\CTLStarLP$, let $\EQSubf(\varphi)$ denote the set of subformulas of $\varphi$ of the form $\EQ \psi$ which do not occur in the scope of
the path quantifier $\EQ$.

Finally, for technical reasons, we introduce the notion of \emph{simple} hybrid $\CTLStarLP$ formula.
\begin{definition} Given a variable $x$, a \emph{simple} hybrid $\CTLStarLP$ formula $\psi$ with respect to $x$  is a hybrid $\CTLStarLP$ formula satisfying the following syntactical constraints:
\begin{itemize}
    \item $x$ is the unique variable occurring in $\psi$; 
    \item $\psi$ does \emph{not} contain occurrences of the binder modalities and past temporal modalities;
    \item $\EQSubf(\psi)$ consists of $\CTLStar$ formulas.
\end{itemize}
\end{definition}

Intuitively, a \emph{simple} hybrid $\CTLStarLP$ (over $\mathpzc{AP}$) formula $\psi$ with respect to $x$  can be seen as a $\CTLStar$ formula over the set of proposition letters $\mathpzc{AP}\cup \{x\}$ such that $x$ does not occur in the scope of $\EQ$. The next lemma shows that $\psi$ can be further simplified whenever it is paired with the formula $\Eventually^{-} x$.

\begin{lemma}\label{lemma:UsingSeparationHybridCTLPreliminary} Let $\psi$ be  a \emph{simple} hybrid $\CTLStarLP$ formula  with respect to $x$. 
Then, $(\Eventually^{-} x)\wedge \psi$ is congruent to a formula of the form $(\Eventually^{-} x)\wedge \xi$, where $\xi$ is a Boolean combination of the atomic formula $x$ and $\CTLStar$ formulas. 
\end{lemma}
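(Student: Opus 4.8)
The plan is to exploit the fact that a position variable is bound to a single point of the path, so that under the guard $\Eventually^{-}x$ the marker $x$ is forced to be false throughout the strict future of the evaluation point. First I would record the shape of a simple formula: since $\psi$ contains neither binders nor past modalities, and the members of $\EQSubf(\psi)$ are genuine $\CTLStar$ formulas (hence $x$-free), $\psi$ is syntactically an $\LTL$ formula whose atoms are the proposition letters of $\mathpzc{AP}$, the variable $x$, and the state subformulas in $\EQSubf(\psi)$, the latter treated as atomic since their truth is independent of the valuation of $x$. The only obstacle to the conclusion is the presence of $x$ underneath the future modalities $\Next$ and $\until$.

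The key semantic observation I would isolate is the following: if $g(x)=h$ and we evaluate at a position $i$ where $\Eventually^{-}x$ holds, then $h\le i$; consequently $x$ is false at every position $j>i$, and $x$ is true at $i$ exactly when $h=i$. Thus, relative to the future cone of $i$, the atom $x$ is true at most at the current point and false ever after.

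To turn this into syntax I would define two mutually recursive, $x$-eliminating translations $N(\cdot)$ and $F(\cdot)$, homomorphic on the Boolean connectives, fixing $N(p)=F(p)=p$ and $N(\EQ\chi)=F(\EQ\chi)=\EQ\chi$ for $p\in\mathpzc{AP}$ and $\EQ\chi\in\EQSubf(\psi)$, setting $N(x)=\top$ and $F(x)=\bot$, and on the temporal operators putting
\[
N(\Next\theta)=F(\Next\theta)=\Next F(\theta),\qquad F(\theta_1\until\theta_2)=F(\theta_1)\until F(\theta_2),
\]
\[
N(\theta_1\until\theta_2)=N(\theta_2)\vee\bigl(N(\theta_1)\wedge\Next(F(\theta_1)\until F(\theta_2))\bigr).
\]
By a straightforward structural induction I would establish the two invariants: for every position $j$ at which $x$ holds and at no later position, $\pi,g,j\models\theta$ iff $\pi,j\models N(\theta)$; and for every $j$ such that $x$ holds at no position $\ge j$, $\pi,g,j\models\theta$ iff $\pi,j\models F(\theta)$. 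The only non-routine clause is $\until$ in the $N$-case, where I unfold the fixpoint once ($\theta_2$ now, or $\theta_1$ now and the until holding from $j+1$ on) and observe that from $j+1$ no occurrence of $x$ survives, so the tail is governed by $F$; the $F$-case is immediate, since replacing $x$ by $\bot$ is sound once $x$ is absent from the entire future. Note that $N(\theta)$ and $F(\theta)$ contain no variable, binder, or past modality and leave the $\EQ$-subformulas untouched, so they are $\CTLStar$ formulas.

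Finally I would take $\xi:=(x\wedge N(\psi))\vee(\neg x\wedge F(\psi))$, a Boolean combination of $x$ and the $\CTLStar$ formulas $N(\psi),F(\psi)$, and verify the congruence of $(\Eventually^{-}x)\wedge\psi$ with $(\Eventually^{-}x)\wedge\xi$ by a case split guarded by $\Eventually^{-}x$: when the guard fails both conjunctions are false; when it holds we have $h\le i$, and either $h=i$ (so $x$ is true at $i$ and absent strictly later, and the $N$-invariant gives $\psi\equiv N(\psi)$, matching the first disjunct of $\xi$) or $h<i$ (so $x$ is absent from all positions $\ge i$, and the $F$-invariant gives $\psi\equiv F(\psi)$, matching the second disjunct). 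I expect the delicate point to be the bookkeeping of the exact future-cone condition that justifies the switch from $N$ to $F$ across a single $\Next$ step, which is precisely what the two invariants are engineered to maintain.
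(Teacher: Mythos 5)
Your proof is correct and rests on exactly the same observation as the paper's: under the guard $\Eventually^{-}x$ the variable $x$ refers to a present or past position, so every occurrence of $x$ pushed into the strict future by $\Next$ (or by the one-step unfolding of $\until$) can be replaced by $\bot$. Your $F(\cdot)$ is precisely the paper's substitution $\widehat{\theta}$ of $\bot$ for $x$, and your $N$-clause for $\until$ is the paper's congruence of $(\Eventually^{-}x)\wedge(\theta_1\until\theta_2)$ with $(\Eventually^{-}x)\wedge\bigl(\theta_2\vee(\theta_1\wedge\Next(\widehat{\theta_1\until\theta_2}))\bigr)$; you merely package the induction as two explicit mutually recursive translations with stated invariants, which is a presentational difference only.
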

\begin{proof} Let  $\psi$ be  a \emph{simple} hybrid $\CTLStarLP$ formula  with respect to $x$. From a syntactic point of view, $\psi$ is not, in general, a $\CTLStar$ formula due to the occurrences of the free variable $x$. We show that these occurrences can be separated whenever $\psi$ is paired with $\Eventually^{-} x$, obtaining a Boolean combination of the atomic formula $x$ and $\CTLStar$ formulas. 

The base case with $\psi=x$, $\psi=p\in\mathpzc{AP}$, or $\psi=\EQ\psi'$ is obvious.

As for the inductive step, let $\psi$ be a Boolean combination of simple hybrid $\CTLStarLP$ formulas $\theta$, where $\theta$ is either $p\in\mathpzc{AP}$, the variable $x$, a $\CTLStar$ formula, or a simple hybrid $\CTLStarLP$ formula (with respect to $x$) of the forms $\Next\theta_1$ or 
$\theta_1\until \theta_2$. Therefore, we just need to consider the cases where $\theta =\Next \theta_1$ or $\theta =\theta_1\until \theta_2$.

Let us consider the case $\theta =\Next \theta_1$. Since there are not past temporal modalities in $\theta_1$, 
$\Next \theta_1$ forces the free occurrence of $x$ in $\psi$ to be interpreted in a (strictly) future position. 
However, $\psi$ is conjunct with the formula $\Eventually^{-} x$, which turns out to be false when $x$ is associated with a (strictly) future position.  
Let us denote by $\widehat{\theta}$  the $\CTLStar$ formula obtained from $\theta$ by replacing each occurrence of $x$ in $\psi$ with $\bot$ (false).
Now, when $x$ is mapped to a (strictly) future position, $\Eventually^{-} x$ is false, and, when $x$ is mapped to a present/past position, $\Eventually^{-} x$ is  true, and $\theta$ and $\widehat{\theta}$ are congruent.
As a consequence, it is clear that $(\Eventually^{-} x)\wedge \theta$ is congruent to $(\Eventually^{-} x)\wedge  \widehat{\theta}$.

Let us consider the case for $\theta =\theta_1\until \theta_2$. Using the same arguments of the previous case, we have that
 $(\Eventually^{-} x)\wedge \theta$ is congruent to $(\Eventually^{-} x)\wedge (\theta_2 \vee (\theta_1\wedge\Next (\widehat{\theta_1\until \theta_2})))$. By distributivity of $\wedge$ over $\vee$, we get $((\Eventually^{-} x)\wedge\theta_2) \vee ((\Eventually^{-} x)\wedge\theta_1\wedge\Next (\widehat{\theta_1\until \theta_2})))$. The thesis follows by applying the inductive hypothesis to $(\Eventually^{-} x)\wedge\theta_2$ and to $(\Eventually^{-} x)\wedge\theta_1$, and by factorizing $\Eventually^{-} x$ (notice that $\widehat{\theta_1\until \theta_2}$ is a $\CTLStar$ formula).
\end{proof}

The next lemma states an important technical property of  well formed formulas, which will be exploited in 
Theorem~\ref{Theo:HybridCTLInfiniteCase} to prove that the set of sentences of the well-formed fragment of hybrid $\CTLStarLP$ has the same expressiveness as $\CTLStar$. Intuitively, if the hybrid features of the language do not occur in the scope of existential path quantifiers, it is possible to remove the occurrences of the binder $\downarrow$ and to suitably separate past and future modalities. The result is obtained by exploiting the equivalence of $\FO$ and $\LTLP$ over infinite words and by applying the separation theorem for $\LTLP$ over infinite words~\cite{Gabbay87}, that we recall here for completeness. 

\begin{theorem}[$\LTLP$ separation over infinite words]\label{th:LTLpsep}
Any $\LTLP$ formula $\psi$ can be effectively transformed into a formula
\[
\psi'=   \bigvee_{i=1}^t(\psi_{p,i}\wedge \psi_{f,i}),
\]
for some $t\geq 1$, where $\psi_{p,i}$ is a pure past $\LTLP$ formula and $\psi_{f,i}$ is an $\LTL$ formula, such that
for all infinite words $w$ over $2^{\mathpzc{AP}}$ and $i\geq 0$, it holds that
$
 w,i\models \psi \text{ if and only if } w,i\models \psi'.
$
\end{theorem}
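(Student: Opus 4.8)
The plan is to reduce the statement to a composition argument (of Feferman--Vaught / Ehrenfeucht--Fra\"iss\'e type) over the linear order $(\Nat,<)$, exploiting the expressive equivalence between $\LTLP$ and $\FO$ over words. First I would translate the given $\LTLP$ formula $\psi$ into an equivalent $\FO$ formula $\varphi(x)$ with a single free variable $x$ (the straightforward direction of Kamp's correspondence, dual to Proposition~\ref{theo:FromFOtoLTL}), so that $w,i\models\psi$ iff $(w,i)\models\varphi$. Let $q$ be the quantifier rank of $\varphi$. The crux is then to show that the rank-$q$ $\FO$-type of the pointed word $(w,i)$ is already determined by the rank-$q$ type of the finite prefix $w[0,i]$ (with $x$ placed at its last position) together with the rank-$q$ type of the infinite suffix starting at $i$.

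Concretely, I would fix the finite set $T_q$ of rank-$q$ $\FO$-types and invoke the composition theorem for ordered sums: cutting $(\Nat,<)$ at $i$ realizes $w$ as the concatenation of its prefix up to $i$ and its suffix from $i$ onwards, and the type of the whole pointed word is a computable function of the types of the two pieces. Since $T_q$ is finite, this yields a finite Boolean description of $\varphi$: one has $(w,i)\models\varphi$ iff $(w,i)$ realizes one of finitely many admissible pairs of (prefix type, suffix type).

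It then remains to render each half as a formula of the required shape. A condition \emph{``the prefix $w[0,i]$ realizes rank-$q$ type $\tau$''} is an $\FO$ property of an initial segment, hence---again by Kamp's theorem, now over finite words---expressible by a pure past $\LTLP$ formula $\psi_{p,\tau}$ evaluated at $i$; dually, \emph{``the infinite suffix starting at $i$ realizes rank-$q$ type $\sigma$''} is expressible by an $\LTL$ formula $\psi_{f,\sigma}$ evaluated at $i$ (Proposition~\ref{theo:FromFOtoLTL}). Taking the disjunction of $\psi_{p,\tau}\wedge\psi_{f,\sigma}$ over the admissible pairs $(\tau,\sigma)$ produces the separated formula $\psi'=\bigvee_{i=1}^{t}(\psi_{p,i}\wedge\psi_{f,i})$ with $w,i\models\psi$ iff $w,i\models\psi'$. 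Effectiveness is immediate, since types, their compatibility under composition, and the two Kamp translations are all computable.

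The main obstacle is the composition step itself: one must set up the $q$-round Ehrenfeucht--Fra\"iss\'e game for ordered sums \emph{with a marked point} so that types split cleanly across the cut at $i$, while being careful that the successor relation (hidden inside $\Next$ and $\Next^{-}$) straddles the boundary between positions $i$ and $i+1$. I would manage this by working with $<$ only---which defines successor in $\FO$ over $(\Nat,<)$---and by assigning the cut point $i$ to exactly one side, so that each piece remains a genuine prefix, respectively suffix, of $w$. An alternative and more syntactic route, the one underlying Gabbay's original argument in~\cite{Gabbay87}, is to apply a terminating system of rewriting rules that iteratively pull past operators out of the scope of future operators (and vice versa) until no mixed nesting survives; this bypasses the detour through $\FO$ but demands a delicate normal-form and termination analysis, which is why I would favour the composition argument sketched above.
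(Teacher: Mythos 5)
Your proposal is correct in outline, but note that the paper does not actually prove this statement: it is recalled as a known result and attributed to Gabbay~\cite{Gabbay87}, whose original argument is the purely syntactic one you mention at the end (a terminating system of rewrite rules that pulls past operators out of the scope of future ones and vice versa). What you propose is the standard semantic alternative: translate $\psi$ into an $\FO$ formula $\varphi(x)$ of quantifier rank $q$, cut the pointed word $(w,i)$ at position $i$, invoke the Ehrenfeucht--Fra\"{\i}ss\'e/Feferman--Vaught composition theorem for ordered sums of labelled linear orders to reduce the rank-$q$ type of the whole pointed word to the pair consisting of the type of the finite prefix (with its last point marked) and the type of the infinite suffix, and then re-import each of the finitely many type conditions into temporal logic via Kamp's theorem --- over finite words, read backwards, for the pure past conjuncts, and over infinite words (Proposition~\ref{theo:FromFOtoLTL}) for the $\LTL$ conjuncts. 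Your treatment of the two delicate points is right: restricting the signature to $<$ and unary predicates means no atomic formula straddles the cut, and since a pure past $\LTLP$ formula evaluated at $i$ is allowed to read the letter at $i$ itself (the paper only forbids \emph{future} modalities), the apparent overlap of position $i$ between the two pieces in your first formulation is harmless however you resolve it. The trade-off against the route the paper defers to is the usual one: Gabbay's syntactic proof stays entirely inside temporal logic and yields an explicit (if badly blowing-up) procedure, whereas your composition proof is conceptually cleaner and makes effectiveness transparent, but rests on two substantial black boxes (Kamp's theorem in both directions and the marked-point composition lemma) and gives no control over the number $t$ of disjuncts. The one step you should actually carry out before calling this a complete proof is that composition lemma itself --- showing that Duplicator's winning strategies on the two pieces combine into one on the pointed sum when one piece carries the marked point; everything else in your sketch is routine.
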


\begin{lemma}\label{lemma:UsingSeparationHybridCTL} Let $(\Eventually^{-} x)\wedge \EQ \varphi(x)$ (resp., $\EQ\varphi$) be a well-formed formula (resp., well-formed sentence) of hybrid $\CTLStarLP$ such that $\EQSubf(\varphi)$ consists of
$\CTLStar$ formulas. Then, $(\Eventually^{-} x)\wedge \EQ \varphi(x)$ (resp., $\EQ\varphi$) is congruent to a well-formed formula of hybrid $\CTLStarLP$ which is a Boolean combination of $\CTLStar$ formulas and (formulas that
correspond to) \emph{pure past}  $\LTLP$ formulas over the set of proposition letters $\mathpzc{AP}\cup  \EQSubf(\varphi) \cup \{x\}$ (resp., $\mathpzc{AP}\cup  \EQSubf(\varphi)$).
\end{lemma}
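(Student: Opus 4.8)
The plan is to treat the $\CTLStar$ subformulas as opaque atoms, turn the body of the quantifier into a binder-free $\LTLP$ formula, and then use the separation theorem to split that formula into a past part that can be pulled out of the quantifier and a future part that collapses to $\CTLStar$. Concretely, I would first abstract each formula $\EQ\psi\in\EQSubf(\varphi)$ by a fresh proposition letter, so that $\varphi$ becomes a formula $\widetilde{\varphi}(x)$ over the extended alphabet $\mathpzc{AP}\cup\EQSubf(\varphi)\cup\{x\}$ built only from Boolean connectives, the future modalities $\Next,\until$, the past modalities $\Next^{-},\until^{-}$, the binder $\downarrow$ and position variables — crucially, with \emph{no} path quantifier left, since the maximal ones have all been abstracted away. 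Thus $(\Eventually^{-}x)\wedge\EQ\varphi(x)$ is congruent to $(\Eventually^{-}x)\wedge\EQ\widetilde{\varphi}(x)$, and it remains to analyse a single memoryful existential quantification over a purely linear (hybrid) body.

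The first main step is to eliminate the binders and bound variables in $\widetilde{\varphi}(x)$. Fixing the continuation path chosen by $\EQ$, along it $\widetilde{\varphi}(x)$ is an interpreted hybrid-$\LTLP$ formula over the infinite word $\mu(\pi)$, with the single free variable $x$ denoting a position $h$. Because the whole formula occurs in the context $(\Eventually^{-}x)\wedge\EQ(\cdot)$, well-formedness guarantees $h\leq i$, where $i$ is the current position; hence $x$ marks exactly one past-or-present position and can be recorded by the monadic proposition letter $x$ itself. Translating $\widetilde{\varphi}(x)$ to first-order logic in the standard way (binders and bound variables become first-order position quantifiers, temporal modalities become order-bounded quantifications, the abstracted atoms and $x$ become monadic predicates) yields an $\FO$ formula with a single free variable for the current position. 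By the equivalence between $\FO$ and $\LTLP$ over infinite words (the past-enabled form of Kamp's theorem underlying Proposition~\ref{theo:FromFOtoLTL}), this $\FO$ formula is congruent to a binder-free $\LTLP$ formula $\Psi$ over $\mathpzc{AP}\cup\EQSubf(\varphi)\cup\{x\}$. I expect this binder-elimination, and in particular the legitimacy of encoding the free variable $x$ as an ordinary proposition letter (which is exactly where the $\Eventually^{-}x$ hypothesis is needed), to be the main obstacle.

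With the body now a plain $\LTLP$ formula $\Psi$, I would apply the separation theorem (Theorem~\ref{th:LTLpsep}) to write $\Psi$ as $\bigvee_{k}(\Psi_{p,k}\wedge\Psi_{f,k})$, with each $\Psi_{p,k}$ pure past and each $\Psi_{f,k}$ pure future $\LTL$ over the same alphabet. Distributing the memoryful quantifier over the disjunction gives $\EQ\Psi\equiv\bigvee_{k}\EQ(\Psi_{p,k}\wedge\Psi_{f,k})$. Since the truth of a pure past formula at position $i$ depends only on the prefix $[0,i]$, which is common to every path the memoryful $\EQ$ ranges over, each $\Psi_{p,k}$ can be pulled outside the quantifier, leaving $\bigvee_{k}\bigl(\Psi_{p,k}\wedge\EQ\Psi_{f,k}\bigr)$, where $\Psi_{p,k}$ is a pure past $\LTLP$ formula over $\mathpzc{AP}\cup\EQSubf(\varphi)\cup\{x\}$, exactly as the statement allows.

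It remains to show that each $\EQ\Psi_{f,k}$ is congruent to a Boolean combination of the atom $x$ and $\CTLStar$ formulas. Here I again use $h\leq i$: in the pure future formula $\Psi_{f,k}$, evaluated at $i$, every occurrence of $x$ under a $\Next$ looks at a strictly future position and is therefore false, so it may be replaced by $\bot$; the only surviving occurrence is the present one, whose value ($h=i$) is fixed across all continuations. Case-splitting on this present value — equivalently, invoking Lemma~\ref{lemma:UsingSeparationHybridCTLPreliminary} — turns $\EQ\Psi_{f,k}$ into $(x\wedge\EQ A_k)\vee(\neg x\wedge\EQ B_k)$, where $A_k,B_k$ are $x$-free pure future $\LTL$ formulas; since a memoryful quantification of a pure future formula coincides with the standard $\CTLStar$ quantifier, and reinstating the abstracted $\EQSubf(\varphi)$ atoms gives back genuine $\CTLStar$ subformulas, both $\EQ A_k$ and $\EQ B_k$ are $\CTLStar$ formulas. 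Collecting everything, $(\Eventually^{-}x)\wedge\EQ\varphi(x)$ is congruent to a Boolean combination of $\CTLStar$ formulas, the atom $x$, and pure past $\LTLP$ formulas over $\mathpzc{AP}\cup\EQSubf(\varphi)\cup\{x\}$, and well-formedness is preserved throughout. The sentence case $\EQ\varphi$ is identical, simply dropping $x$ and the $\Eventually^{-}x$ bookkeeping.
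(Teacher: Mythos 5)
Your proposal is correct and follows essentially the same route as the paper: abstract the maximal $\EQ$-subformulas and the free variable $x$ as extra proposition letters, pass through $\FO$ to a binder-free $\LTLP$ formula, apply the separation theorem, pull the pure-past conjuncts out of the memoryful quantifier (their truth depends only on the shared prefix), and dispatch the residual occurrences of $x$ in the future parts via Lemma~\ref{lemma:UsingSeparationHybridCTLPreliminary}. The only cosmetic difference is the order of the final algebraic manipulations, which does not affect the argument.
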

\begin{proof} We focus on well-formed formulas of the form  $(\Eventually^{-} x)\wedge \EQ \varphi(x)$. The case of well-formed sentences of the form $\EQ\varphi$ is similar, and thus omitted.

Let $\overline{\mathpzc{AP}}= \mathpzc{AP}\cup  \EQSubf(\varphi) \cup \{x\}$. By hypothesis, $\EQSubf(\varphi)$ is a set of $\CTLStar$ formulas, that is, they are devoid of any hybrid feature.

Given a Kripke structure $\mathpzc{K}=(\mathpzc{AP},S, \delta,\mu,s_0)$,  an initial infinite path $\pi$, and $h\geq 0$, we denote by $\pi_{\overline{\mathpzc{AP}},h}$ the infinite word over $2^{\overline{\mathpzc{AP}}}$, which, for every position $i\geq 0$, is defined as follows:
 \begin{itemize}
 \item $\pi_{\overline{\mathpzc{AP}},h}(i)\cap \mathpzc{AP} =\mu(\pi(i))$;
 \item $\pi_{\overline{\mathpzc{AP}},h}(i)\cap \EQSubf(\varphi)= \{\psi\in \EQSubf(\varphi)\mid \mathpzc{K},\pi,i\models \psi\}$;
 \item $x\in \pi_{\overline{\mathpzc{AP}},h}(i)$ if and only if $i=h$.
 \end{itemize}

 By using a fresh position variable $\present$ to represent the current position, the formula $\varphi(x)$ can be easily converted into an $\FO$ formula
 $\varphi_\FO(\present)$
 over $ \overline{\mathpzc{AP}}$ having $\present$ as its unique free variable, such that for all Kripke structures $\mathpzc{K}$, initial infinite paths $\pi$, and positions $i$ and $h$, we have:
 \begin{equation}\label{eq:UsingSeparationHybridCTL1}
  \mathpzc{K},\pi,i\models \varphi(x \leftarrow h) \text{ if and only if } \pi_{\overline{\mathpzc{AP}},h} \models \varphi_\FO(\present \leftarrow i).
\end{equation}
(To this end, it suffices to map any proposition letter $\overline{p}\in \overline{\mathpzc{AP}}$ into a unary predicate $\overline{p}$,  and all the operators $\Next,\Next^{-},\until,\until^{-},\downarrow$ into  $\FO$ formulas expressing their semantics.)

 By the equivalence of $\FO$ and $\LTLP$ and the separation theorem for $\LTLP$ over infinite words (Theorem~\ref{th:LTLpsep}), starting from the $\FO$ formula $\varphi_\FO(\present)$, one can construct an $\LTLP$ formula $\varphi_\LTLP$ over $\overline{\mathpzc{AP}}$ of the form 
  \begin{equation}\label{eq:UsingSeparationHybridCTL2}
\varphi_\LTLP:=   \bigvee_{i\in I}(\varphi_{p,i}\wedge \varphi_{f,i})
\end{equation}
such that $\varphi_{p,i}$ 
is a pure past $\LTLP$ formula,
$\varphi_{f,i}$ 
is an $\LTL$ formula, and for all infinite words $w$ over $2^{\overline{\mathpzc{AP}}}$ and $i\geq 0$, it holds that:
  \begin{equation}\label{eq:UsingSeparationHybridCTL3}
 w,i\models \varphi_\LTLP \text{ if and only if } w\models \varphi_\FO(\present \leftarrow i).
\end{equation}
 The $\LTLP$ formula $\varphi_\LTLP$ over $\overline{\mathpzc{AP}}$ corresponds to a hybrid $\CTLStarLP$ formula $\varphi_\LTLP(x)$ over $\mathpzc{AP}$. (Note that the only hybrid feature is the possible occurrence of the variable $x$.)
 By definition of the infinite words $\pi_{\overline{\mathpzc{AP}},h}$, one can easily show by structural induction that 
  for all Kripke structures $\mathpzc{K}$, initial infinite paths $\pi$, and positions $i$ and $h$:
  \begin{equation}\label{eq:UsingSeparationHybridCTL4}
\pi_{\overline{\mathpzc{AP}},h},i \models  \varphi_\LTLP \text{ if and only if  }  \mathpzc{K},\pi,i\models \varphi_\LTLP(x \leftarrow h),
\end{equation}
the latter being a hybrid $\CTLStarLP$ formula.
Thus, by Points (\ref{eq:UsingSeparationHybridCTL1}), (\ref{eq:UsingSeparationHybridCTL3}), and (\ref{eq:UsingSeparationHybridCTL4}), we obtain that 
$\varphi(x)$ and $\varphi_\LTLP(x)$ are congruent.

Since in (\ref{eq:UsingSeparationHybridCTL2}),  for each $i\in I$, $\varphi_{p,i}$ is a pure past $\LTLP$ formula over $\overline{\mathpzc{AP}}$,  $\EQ \varphi_{p,i}(x)$ is trivially congruent to $\varphi_{p,i}(x)$. As a consequence, we have that
$(\Eventually^{-} x)\wedge \EQ \varphi(x)$ is congruent to
 $(\Eventually^{-} x)\wedge \bigvee_{i\in I}(\varphi_{p,i}(x)\wedge \EQ\varphi_{f,i}(x))$, which is congruent to $\bigvee_{i\in I}(\varphi_{p,i}(x)\wedge (\Eventually^{-} x)\wedge \EQ\varphi_{f,i}(x))$, which is in turn congruent to
 $\bigvee_{i\in I}(\varphi_{p,i}(x)\wedge \EQ((\Eventually^{-} x)\wedge \varphi_{f,i}(x)))$.

 Now, $\varphi_{f,i}(x)$ is a \emph{simple} hybrid $\CTLStarLP$ formula  with respect to $x$, and  $\EQ x$ (resp., $\EQ\neg x$) is trivially congruent to $x$ (resp., $\neg x$).
 By Lemma~\ref{lemma:UsingSeparationHybridCTLPreliminary} and some simple manipulation steps, we can prove the following sequence of equivalences:
 
 \begin{align*}
 \bigvee_{i\in I}\Big(\varphi_{p,i}(x)\wedge \EQ((\Eventually^{-} x)\wedge \varphi_{f,i}(x))\Big)&=
 \tag*{(Lemma~\ref{lemma:UsingSeparationHybridCTLPreliminary} and disjunctive normal form)}\\
   \bigvee_{i\in I}\Big(\varphi_{p,i}(x) \wedge\EQ \big((\Eventually^{-} x)\wedge\bigvee_{j\in J}(\tilde{x}_{i,j}\wedge \psi_{i,j})\big)\Big)& =\tag*{($\Eventually^{-} x$ is a pure past $\LTLP$ formula)}\\
   \bigvee_{i\in I}\Big(\varphi_{p,i}(x) \wedge(\Eventually^{-} x)\wedge\EQ \bigvee_{j\in J}(\tilde{x}_{i,j}\wedge \psi_{i,j})\Big)&=\tag*{(Distributive property of $\wedge$ over $\vee$)}\\
 (\Eventually^{-} x)\wedge \bigvee_{i\in I}\Big(\varphi_{p,i}(x) \wedge\EQ \bigvee_{j\in J}(\tilde{x}_{i,j}\wedge \psi_{i,j})\Big)&=\tag*{(Distributive property of $\EQ$ over $\vee$ and $\tilde{x}_{i,j}$ is a pure past $\LTLP$ formula)}\\
 (\Eventually^{-} x)\wedge \bigvee_{i\in I}\Big(\varphi_{p,i}(x) \wedge \bigvee_{j\in J}(\tilde{x}_{i,j}\wedge \EQ\psi_{i,j})\Big)&=\tag*{(Distributive property of $\wedge$ over $\vee$)}\\
 (\Eventually^{-} x)\wedge \bigvee_{i\in I}\bigvee_{j\in J}\Big(\varphi_{p,i}(x) \wedge \tilde{x}_{i,j}\wedge \EQ\psi_{i,j}\Big)&
 \end{align*}
where $\tilde{x}_{i,j}$ is either $x$, $\neg x$, or $\top$.
 
 Hence, 
 $(\Eventually^{-} x)\wedge \EQ \varphi(x)$ is congruent to a formula of the form
 $(\Eventually^{-} x)\wedge \bigvee_{i\in I'}(\psi_{p,i}(x)\wedge \EQ\psi_{i})$, for some $I'$, where $\psi_{p,i}(x)$ corresponds 
 to a  \emph{pure past}  $\LTLP$ formula over $\overline{\mathpzc{AP}} \,(=\mathpzc{AP}\cup  \EQSubf(\varphi) \cup \{x\})$
 and
 $\psi_i$ is a $\CTLStar$ formula.
\end{proof}

The following lemma generalizes the separation result given by Lemma~\ref{lemma:UsingSeparationHybridCTL} to any well-formed formula of the form $(\Eventually^{-} x)\wedge \EQ \varphi(x)$, that is, to formulas where $\varphi(x)$ is unconstrained.

\begin{lemma}\label{cor:UsingSeparationHybridCTL} Let $(\Eventually^{-} x)\wedge \EQ \varphi(x)$ (resp., $\EQ\varphi$) be a well-formed formula (resp., well-formed sentence) of hybrid $\CTLStarLP$.  Then, there exists a finite set $\mathpzc{H}$ of $\CTLStar$ formulas of the form $\EQ\psi$, such that $(\Eventually^{-} x)\wedge \EQ \varphi(x)$ (resp., $\EQ\varphi$) is congruent to a well-formed formula of hybrid $\CTLStarLP$ which is a Boolean combination of $\CTLStar$ formulas and (formulas that
correspond to) \emph{pure past}  $\LTLP$ formulas over the set of proposition letters $\mathpzc{AP}\cup  \mathpzc{H} \cup \{x\}$ (resp., $\mathpzc{AP}\cup  \mathpzc{H}$).
\end{lemma}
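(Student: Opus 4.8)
The plan is to argue by induction on the nesting depth of the path quantifier $\EQ$ in $\varphi$, using Lemma~\ref{lemma:UsingSeparationHybridCTL} both as the base of the induction and as the final reduction step. The idea is to peel off the existential quantifiers one level at a time, replacing the maximal existentially quantified subformulas of $\varphi$ by congruent Boolean combinations of $\CTLStar$ formulas and pure past $\LTLP$ formulas, until the hypothesis of Lemma~\ref{lemma:UsingSeparationHybridCTL} (namely, that $\EQSubf$ consists of $\CTLStar$ formulas) is met. Throughout, I would work with the \emph{congruence} relation rather than with mere equivalence: since congruence is preserved by every hybrid $\CTLStarLP$ context (Boolean connectives, temporal and binder operators, and path quantifiers), replacing a subformula by a congruent one inside $\varphi$ produces a formula congruent to the original, which is exactly what legitimizes the substitutions below.

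In the base case the nesting depth is $0$, so $\EQSubf(\varphi)=\emptyset$; in particular $\EQSubf(\varphi)$ trivially consists of $\CTLStar$ formulas and Lemma~\ref{lemma:UsingSeparationHybridCTL} applies directly, giving the claim with $\mathpzc{H}=\emptyset$. For the inductive step, let $\EQ\psi_1,\dots,\EQ\psi_k$ be the elements of $\EQSubf(\varphi)$, i.e.\ the maximal existentially quantified subformulas of $\varphi$; each $\psi_\ell$ has strictly smaller $\EQ$-nesting depth. By well-formedness, each $\EQ\psi_\ell$ is either a sentence, or has a single free variable $y_\ell$ and occurs in $\varphi$ exactly in the context $(\Eventually^{-} y_\ell)\wedge\EQ\psi_\ell(y_\ell)$, where $y_\ell$ is either $x$ or a variable bound by a $\downarrow$ in $\varphi$. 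I would apply the induction hypothesis to each such $(\Eventually^{-} y_\ell)\wedge\EQ\psi_\ell(y_\ell)$ (resp.\ to the sentence $\EQ\psi_\ell$), obtaining a finite set $\mathpzc{H}_\ell$ of $\CTLStar$ formulas of the form $\EQ\psi'$ together with a Boolean combination $\chi_\ell$ of $\CTLStar$ formulas and pure past $\LTLP$ formulas over $\mathpzc{AP}\cup\mathpzc{H}_\ell\cup\{y_\ell\}$ that is congruent to it.

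Next I would substitute, inside $\varphi$, each maximal occurrence $(\Eventually^{-} y_\ell)\wedge\EQ\psi_\ell(y_\ell)$ (resp.\ $\EQ\psi_\ell$) by the congruent $\chi_\ell$, obtaining a well-formed $\widetilde{\varphi}(x)$ with $(\Eventually^{-} x)\wedge\EQ\varphi(x)$ congruent to $(\Eventually^{-} x)\wedge\EQ\widetilde{\varphi}(x)$. The crucial observation is that the only existentially quantified formulas remaining in $\widetilde{\varphi}$ are those contributed by the sets $\mathpzc{H}_\ell$, and these are \emph{plain} $\CTLStar$ sentences; the pure past $\LTLP$ fragments of the $\chi_\ell$ are hybrid only through (bound) occurrences of $y_\ell$ and carry no path quantifier of their own. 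Hence $\EQSubf(\widetilde{\varphi})\subseteq\bigcup_\ell\mathpzc{H}_\ell$ consists entirely of $\CTLStar$ formulas, so Lemma~\ref{lemma:UsingSeparationHybridCTL} applies to $(\Eventually^{-} x)\wedge\EQ\widetilde{\varphi}(x)$ and yields a congruent Boolean combination of $\CTLStar$ formulas and pure past $\LTLP$ formulas over $\mathpzc{AP}\cup\EQSubf(\widetilde{\varphi})\cup\{x\}$. Taking $\mathpzc{H}:=\EQSubf(\widetilde{\varphi})$ establishes the statement; the sentence case $\EQ\varphi$ is handled identically, dropping the $\{x\}$.

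The main obstacle, I expect, is justifying the substitution step rigorously rather than the separation itself (which is delegated to Lemma~\ref{lemma:UsingSeparationHybridCTL}). Two points require care. First, one must check that the well-formedness hypothesis really does force every quantified subformula with a free variable to appear in the canonical context $(\Eventually^{-} y_\ell)\wedge\EQ\psi_\ell(y_\ell)$, so that the induction hypothesis is literally applicable and the replacement is local and unambiguous. Second, one must verify that after substitution $\widetilde{\varphi}$ is still well-formed and that its residual existential subformulas are genuine $\CTLStar$ sentences, so that treating the elements of $\bigcup_\ell\mathpzc{H}_\ell$ as fresh proposition letters is sound and the pure past $\LTLP$ formulas over them remain admissible inputs to Lemma~\ref{lemma:UsingSeparationHybridCTL} (whose $\FO$-based proof already accommodates the binders and bound variables $y_\ell$ of $\widetilde{\varphi}$, since $x$ remains the only free variable). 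Both points follow from the fact that the formulas in $\mathpzc{H}_\ell$ are variable- and binder-free, but this needs to be stated explicitly to close the argument.
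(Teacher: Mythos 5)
Your proposal is correct and follows essentially the same route as the paper's proof: induction on the $\EQ$-nesting depth, application of the inductive hypothesis to each maximal subformula $(\Eventually^{-} y)\wedge \EQ\psi(y)$ (or sentence $\EQ\psi$) in $\EQSubf(\varphi)$, substitution of the resulting congruent Boolean combinations (using closure of congruence under substitution), and a final application of Lemma~\ref{lemma:UsingSeparationHybridCTL} once $\EQSubf$ of the rewritten formula consists only of $\CTLStar$ formulas. The additional care you flag about well-formedness being preserved under the substitution is a point the paper leaves implicit, but your argument for it is sound.
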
 
\begin{proof}
As in the case of Lemma~\ref{lemma:UsingSeparationHybridCTL}, we focus on well-formed formulas of the form  $(\Eventually^{-} x)\wedge \EQ \varphi(x)$ (the case of well-formed sentences of the form $\EQ\varphi$ is similar).

The proof is by induction on the nesting depth of the path quantifier $\exists$ in $\varphi(x)$. 

Base case: $\EQSubf(\varphi)=\emptyset$. We apply Lemma~\ref{lemma:UsingSeparationHybridCTL}, and the result follows taking  $\mathpzc{H}=\emptyset$. 

Inductive step: 
let $\EQ \psi\in \EQSubf(\varphi)$. Since $(\Eventually^{-} x)\wedge \EQ \varphi(x)$ is well-formed, either $\psi$ is a sentence, or $\psi$ has a unique free variable $y$ and $\EQ \psi(y)$ occurs in $\varphi(x)$ in the context $(\Eventually^{-} y)\wedge \EQ \psi(y)$. Assume that the latter case holds (the former is similar). 
By definition of well-formed formula, $y$ is not free in $\varphi(x)$, and $(\Eventually^{-} y)\wedge \EQ \psi(y)$ must occur in the scope of some occurrence of  $\Downarrowy$.
By the inductive hypothesis, the thesis holds for $(\Eventually^{-} y)\wedge \EQ \psi(y)$. Hence, there exists 
a finite set $\mathpzc{H}'$ of $\CTLStar$ formulas of the form $\EQ\theta$ such that $(\Eventually^{-} y)\wedge \EQ \psi(y)$ is congruent to a well-formed formula of hybrid $\CTLStarLP$, say $\xi(y)$, which is a Boolean combination of $\CTLStar$ formulas and formulas that correspond to  \emph{pure past}  $\LTLP$ formulas over the set of proposition letters
$\mathpzc{AP}\cup  \mathpzc{H'} \cup \{y\}$. 

By replacing each occurrence of $(\Eventually^{-} y)\wedge \EQ \psi(y)$ in $\varphi(x)$ with $\xi(y)$, and repeating the procedure for all the formulas in   $\EQSubf(\varphi)$, we obtain a well-formed formula of hybrid $\CTLStarLP$ of the form $(\Eventually^{-} x)\wedge \EQ \theta(x)$ which is congruent to $(\Eventually^{-} x)\wedge \EQ \varphi(x)$ (note that the congruence relation is closed under substitution) and such that $\EQSubf(\theta)$ consists of $\CTLStar$ formulas. At this point we can apply Lemma~\ref{lemma:UsingSeparationHybridCTL} proving the assertion.
\end{proof}

We can now prove that the well-formed sentences of hybrid $\CTLStarLP$ can be expressed in $\CTLStar$.

\begin{theorem}\label{Theo:HybridCTLInfiniteCase} 
The set of sentences of the well-formed fragment of hybrid $\CTLStarLP$ has the same expressiveness as $\CTLStar$.
\end{theorem}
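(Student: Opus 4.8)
The plan is to prove the two inclusions separately. One direction is immediate: $\CTLStar$ is syntactically a fragment of hybrid $\CTLStarLP$ (the fragment using neither variables, nor the binder $\Downarrowx$, nor past modalities), every $\CTLStar$ sentence is vacuously well-formed, and the excerpt already records that the hybrid semantics agrees with the standard one on $\CTLStar$ formulas. Hence every $\CTLStar$ sentence is a well-formed hybrid $\CTLStarLP$ sentence with exactly the same models. All the work lies in the converse: turning an arbitrary well-formed sentence $\Phi$ of hybrid $\CTLStarLP$ into an equivalent $\CTLStar$ sentence.

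For the converse I would first strip $\Phi$ of all its genuinely hybrid existential quantifiers by a single sweep of Lemma~\ref{cor:UsingSeparationHybridCTL}. Let $\EQ\psi_1,\dots,\EQ\psi_k$ be the maximal $\EQ$-subformulas of $\Phi$ (those not in the scope of another $\EQ$). By well-formedness each is either a sentence $\EQ\psi_j$ or occurs inside the guarded context $(\Eventually^{-}x_j)\wedge\EQ\psi_j(x_j)$, so Lemma~\ref{cor:UsingSeparationHybridCTL} applies to each and returns a congruent formula that is a Boolean combination of honest $\CTLStar$ state formulas (a finite set $\mathpzc{H}_j$ of formulas $\EQ\theta$) and pure past $\LTLP$ formulas over $\mathpzc{AP}\cup\mathpzc{H}_j\cup\{x_j\}$; since a $\CTLStar$ state formula is a present-time assertion, such a combination is itself a pure past $\LTLP$ formula over that enlarged alphabet. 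Because the lemma's internal induction has already absorbed every \emph{nested} path quantifier into the $\CTLStar$ atoms of $\mathpzc{H}_j$, substituting these results back into $\Phi$ (congruence being closed under substitution) yields a congruent sentence $\Phi'$ whose only occurrences of $\EQ$ now sit inside the $\CTLStar$ state formulas collected in $\mathpzc{H}=\bigcup_j\mathpzc{H}_j$. Treating each element of $\mathpzc{H}$ as a fresh proposition letter over the alphabet $\Sigma'=\mathpzc{AP}\cup\mathpzc{H}$, the formula $\Phi'$ becomes a purely \emph{linear} sentence built only from $\Next,\until,\Next^{-},\until^{-}$ and the binder $\Downarrowx$ over $\Sigma'$, i.e.\ a hybrid $\LTLP$ sentence.

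The second step is to eliminate the surviving binders and past modalities of $\Phi'$. Since $\Phi'$ carries no path quantifier at its outer level, its truth at a pair $(\pi,0)$ depends only on the $\Sigma'$-labelling of the single path $\pi$, so $\Phi'$ acts on infinite words over $2^{\Sigma'}$. I would translate $\Phi'$, exactly as in the proof of Lemma~\ref{lemma:UsingSeparationHybridCTL}, into an $\FO$ sentence over $2^{\Sigma'}$ (each $\Downarrowx$ becomes an ordinary first-order freeze of the current position, the atom $x$ becomes the equality $z=x$, and the temporal operators are rendered by their standard $\FO$ definitions), and then back into an $\LTL$ formula $\chi$ over $\Sigma'$ via Kamp's theorem (Proposition~\ref{theo:FromFOtoLTL}); this is sound because hybrid $\LTLP$ over data-free words is no more expressive than $\FO$, which in turn equals $\LTL$. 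Finally, as a sentence holds in $\mathpzc{K}$ precisely when it holds at position $0$ of every initial path, $\Phi$ is equivalent to $\AQ\chi$; re-expanding each proposition of $\mathpzc{H}$ into the $\CTLStar$ state formula it abbreviates turns $\AQ\chi$ into a bona fide $\CTLStar$ sentence (an $\LTL$ matrix over $\CTLStar$ state subformulas, guarded by one outermost $\AQ$), which is the formula sought. The finitary variant follows by the same route, replacing $\EQ$ throughout by $\EQF$ and invoking the finitary counterparts of the cited results.

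The main obstacle is this second step, the collapse of the residual hybrid $\LTLP$ skeleton to $\CTLStar$: one must argue cleanly that, once Lemma~\ref{cor:UsingSeparationHybridCTL} has confined every path quantifier inside genuine $\CTLStar$ subformulas, those subformulas can be uniformly frozen as atomic propositions in a relabelled structure, that over data-free words the binder $\Downarrowx$ contributes nothing beyond first-order power (so $\FO$ and Kamp's theorem apply), and that the \emph{memoryful} root quantification built into the sentence semantics is captured faithfully by a single outermost $\AQ$. The genuinely hard combinatorial work — pulling past and future apart while respecting well-formedness — is already discharged by the separation machinery of Lemmata~\ref{lemma:UsingSeparationHybridCTLPreliminary}, \ref{lemma:UsingSeparationHybridCTL} and~\ref{cor:UsingSeparationHybridCTL} and by the $\LTLP$ separation theorem (Theorem~\ref{th:LTLpsep}); by contrast the trivial inclusion and the bookkeeping of the substitutions are routine.
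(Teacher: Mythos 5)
Your proof is correct, and it agrees with the paper on the easy inclusion and on using Lemma~\ref{cor:UsingSeparationHybridCTL} as the main engine, but your endgame is genuinely different. The paper does not peel off the maximal $\EQ$-subformulas and then confront a residual hybrid $\LTLP$ skeleton at the top level: it first rewrites the sentence $\varphi$ as $\neg\EQ\neg\varphi$, which is still well-formed and now has exactly the shape $\EQ\psi$ covered by Lemma~\ref{cor:UsingSeparationHybridCTL}, so a \emph{single} application of that lemma also absorbs the outermost binders and temporal operators, returning a Boolean combination of $\CTLStar$ formulas and pure past $\LTLP$ formulas over $\mathpzc{AP}\cup\mathpzc{H}$. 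Because a sentence is evaluated at position $0$ of each initial path, those pure past formulas are then killed by a two-line syntactic map ($\Next^{-}\theta\mapsto\bot$, $\theta_1\until^{-}\theta_2\mapsto\theta_2$), with no further appeal to $\FO$ or Kamp's theorem. Your route instead reruns the $\FO$-to-$\LTL$ translation on the leftover linear skeleton and closes with an outer $\AQ$; this is sound --- the binder over data-free words is only first-order, the formulas in $\mathpzc{H}$ are state formulas and so can be frozen as fresh letters, and memoryful root quantification coincides with the standard one at position $0$ --- but it duplicates machinery that the $\neg\EQ\neg$ trick lets the paper reuse for free. What your version buys is an explicit normal form for the output (a single $\AQ$ over an $\LTL$ matrix whose atoms are $\CTLStar$ state subformulas); what the paper's buys is brevity. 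One bookkeeping point to tighten: Lemma~\ref{cor:UsingSeparationHybridCTL} yields $\CTLStar$ state formulas both as atoms of the pure past parts (the set $\mathpzc{H}$) and as separate top-level constituents of the Boolean combination, so your relabelling alphabet must collect both; this is harmless but is glossed over in your sketch.
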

\begin{proof}
Let $\varphi$ be a well-formed sentence of hybrid $\CTLStarLP$. To prove the thesis,  we construct a $\CTLStar$ formula which is equivalent to $\varphi$. 

Since $\varphi$ is equivalent to $\neg \EQ \neg \varphi$ and $\neg \EQ \neg \varphi$ is well-formed, by applying Lemma~\ref{cor:UsingSeparationHybridCTL} one can convert  $\neg \EQ \neg \varphi$ into a congruent hybrid $\CTLStarLP$ formula which is a Boolean combination of $\CTLStar$ formulas and formulas $\theta$ which can be seen as pure past $\LTLP$ formulas over the set of proposition letters $\mathpzc{AP}\cup  \mathpzc{H}$, where $\mathpzc{H}$ is a set of $\CTLStar$ formulas of the form $\EQ\psi$. 

Since the past temporal modalities
in such  $\LTLP$ formulas $\theta$ refer to the initial position of the initial infinite paths, one can replace $\theta$ with an equivalent $\CTLStar$ formula $f(\theta)$, where the mapping $f$ is inductively defined as follows:
 \begin{itemize}
   \item $f(p)=p$ for all $p\in \mathpzc{AP}\cup  \mathpzc{H}$;
   \item $f$ is homomorphic with respect to the Boolean connectives;
   \item $f(\Next^{-}\theta)=\bot$ and $f(\theta_1\until^{-}\theta_2)=f(\theta_2)$.
 \end{itemize}
The resulting $\CTLStar$ formula is equivalent to $\neg \EQ \neg \varphi$, as required.
\end{proof}

By an easy adaptation of the proof of Theorem~\ref{Theo:HybridCTLInfiniteCase}, where one exploits the separation theorem for $\LTLP$ over finite words \cite{Gabbay87}, it is possible to characterize also the expressiveness of well-formed \emph{finitary} hybrid $\CTLStarLP$.

\begin{theorem}\label{theo:FinitaryHybridCTL} The set of sentences of the well-formed fragment of finitary hybrid $\CTLStarLP$ has the same expressiveness as finitary $\CTLStar$.
\end{theorem}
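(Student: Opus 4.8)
The plan is to replay, essentially verbatim, the argument proving Theorem~\ref{Theo:HybridCTLInfiniteCase}, systematically replacing infinite paths by traces, the quantifier $\EQ$ by $\EQF$, and every appeal to an infinite-word result by its finite-word counterpart. One inclusion is immediate, since finitary $\CTLStar$ is by definition a syntactic fragment of the well-formed fragment of finitary hybrid $\CTLStarLP$; hence all the work lies in showing that well-formed finitary hybrid $\CTLStarLP$ sentences can be captured in finitary $\CTLStar$.

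For that direction I would first develop the finitary analogues of the three auxiliary results. The finitary version of Lemma~\ref{lemma:UsingSeparationHybridCTLPreliminary} (separating a \emph{simple} formula once it is paired with $\Eventually^{-}x$) carries over with no essential change: its only mechanism is that $\Eventually^{-}x$ is false whenever $x$ is mapped strictly to the future, a fact insensitive to finiteness of the path once $\until$ and $\Next$ are read under the finitary semantics. Next, the finitary versions of Lemma~\ref{lemma:UsingSeparationHybridCTL} and Lemma~\ref{cor:UsingSeparationHybridCTL} follow by the same inductions, with a single genuine change: the auxiliary formula $\varphi(x)$ is now encoded as an $\FO$ formula over \emph{finite} words and separated via the separation theorem for $\LTLP$ \emph{over finite words}~\cite{Gabbay87}, together with the finite-word equivalence of $\FO$ and $\LTLP$, in place of the infinite-word instance recalled in Theorem~\ref{th:LTLpsep}. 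The remaining steps---distributing $\EQF$ over disjunctions and pulling pure-past conjuncts outside the quantifier, using that $\EQF\psi_{p}$ is congruent to $\psi_{p}$ for a pure past $\psi_{p}$ (the finitary path quantifier is still memoryful and fixes the past)---are purely syntactic and transfer verbatim.

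Finally I would mirror the closing step of Theorem~\ref{Theo:HybridCTLInfiniteCase}. Given a well-formed finitary hybrid $\CTLStarLP$ sentence $\varphi$, rewrite it as $\neg\EQF\neg\varphi$ and apply the finitary separation lemma to obtain a congruent Boolean combination of finitary $\CTLStar$ formulas and pure past $\LTLP$ formulas over $\mathpzc{AP}\cup\mathpzc{H}$, where $\mathpzc{H}$ is a finite set of finitary $\CTLStar$ formulas of the form $\EQF\psi$. Because the past modalities of a \emph{sentence} are ultimately evaluated at the root (initial state) of each trace, every such pure past formula $\theta$ collapses to an equivalent finitary $\CTLStar$ formula under the same mapping $f$ with $f(\Next^{-}\theta)=\bot$ and $f(\theta_1\until^{-}\theta_2)=f(\theta_2)$, yielding the required finitary $\CTLStar$ sentence.

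I expect the obstacle to be bookkeeping rather than conceptual, exactly as the paper suggests: one must check that the $\FO$--$\LTLP$ equivalence and the separation theorem are available over finite words (they are, by~\cite{Gabbay87}), and confirm that the finitary reading of $\Next$---vacuously false at the last position of a trace---does not perturb the congruences. The one place warranting genuine care is the inductive case $\theta=\Next\theta_1$ in the finitary analogue of Lemma~\ref{lemma:UsingSeparationHybridCTLPreliminary}, where the boundary-sensitivity of $\Next$ interacts with the substitution of $\bot$ for $x$; there one verifies that replacing $x$ by $\bot$ beneath $\Next$ remains correct at the right endpoint of the trace, which again follows from the falsity of $\Eventually^{-}x$ on strictly future positions.
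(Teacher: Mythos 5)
Your proposal matches the paper's intent exactly: the paper proves this theorem in one line, stating that it follows "by an easy adaptation of the proof of Theorem~\ref{Theo:HybridCTLInfiniteCase}, where one exploits the separation theorem for $\LTLP$ over finite words," which is precisely the systematic substitution (traces for infinite paths, $\EQF$ for $\EQ$, finite-word separation for Theorem~\ref{th:LTLpsep}) that you carry out. Your additional attention to the boundary behaviour of $\Next$ at the last position of a trace is a sensible check that the paper leaves implicit, and it does not change the argument.
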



Together with Proposition~\ref{prop:HSComputationToHybrid}, Theorem~\ref{Theo:HybridCTLInfiniteCase} 
(resp., Theorem~\ref{theo:FinitaryHybridCTL})
allows us to conclude that $\CTLStar$ (resp., finitary $\CTLStar$) subsumes $\HS_\LinearPast$. 

Finally, by exploiting Corollary~\ref{cor:FromFInitaryCTLStarToFutureHSBranching}, we can state the main result of the section, namely, $\HS_\LinearPast$ and finitary $\CTLStar$ have the same expressiveness.

\begin{theorem}\label{Cor:CharacterizationHSCompTree} $\CTLStar\geq\HS_\LinearPast$. Moreover, $\HS_\LinearPast$ is as expressive as finitary $\CTLStar$.
\end{theorem}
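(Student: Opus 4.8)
The plan is to assemble the statement from the three translation results developed in this section together with Corollary~\ref{cor:FromFInitaryCTLStarToFutureHSBranching}; the hard work has already been localized in those results, so the proof of the theorem is essentially a composition of equivalences. The theorem bundles three claims: $\CTLStar\geq\HS_\LinearPast$, and the two inclusions $\HS_\LinearPast\geq$ finitary $\CTLStar$ and finitary $\CTLStar\geq\HS_\LinearPast$ whose conjunction gives the equivalence.

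First I would establish $\CTLStar\geq\HS_\LinearPast$. Given an arbitrary $\HS_\LinearPast$ formula $\varphi$, Proposition~\ref{prop:HSComputationToHybrid} produces a well-formed sentence of hybrid $\CTLStarLP$ that is equivalent to $\varphi$, i.e.\ that has exactly the same Kripke-structure models. Theorem~\ref{Theo:HybridCTLInfiniteCase} then yields a $\CTLStar$ formula equivalent to that well-formed sentence. Composing the two translations gives a $\CTLStar$ formula equivalent to $\varphi$; since $\varphi$ was arbitrary, this is precisely the containment $\CTLStar\geq\HS_\LinearPast$.

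Next I would treat the equivalence with finitary $\CTLStar$. One inclusion, $\HS_\LinearPast\geq$ finitary $\CTLStar$, is already available as Corollary~\ref{cor:FromFInitaryCTLStarToFutureHSBranching}. For the converse inclusion finitary $\CTLStar\geq\HS_\LinearPast$ I would run the same two-step argument along the finitary branch: for the same $\HS_\LinearPast$ formula $\varphi$, Proposition~\ref{prop:HSComputationToHybrid} also outputs an equivalent well-formed sentence of finitary hybrid $\CTLStarLP$, and Theorem~\ref{theo:FinitaryHybridCTL} converts that into an equivalent finitary $\CTLStar$ formula. Combining the two inclusions yields that $\HS_\LinearPast$ and finitary $\CTLStar$ have the same expressiveness.

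The only point requiring care, rather than a genuine obstacle, is making sure the various notions of equivalence line up so that the compositions are transitive. Proposition~\ref{prop:HSComputationToHybrid} is phrased at the level of \emph{models of} a Kripke structure, whereas Theorems~\ref{Theo:HybridCTLInfiniteCase} and~\ref{theo:FinitaryHybridCTL} assert that the well-formed fragments have the same expressiveness as (finitary) $\CTLStar$; I would check that each ``equivalent'' in the chain unfolds to ``the same class of Kripke-structure models'', so that equality of model classes closes each composition. I would also note that, although the intermediate results are stated for general Kripke structures, they specialize immediately to the finite structures over which the expressiveness comparison of this paper is defined, so no extra argument is needed there.
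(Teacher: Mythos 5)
Your proposal is correct and follows exactly the paper's own argument: the paper likewise obtains $\CTLStar\geq\HS_\LinearPast$ by composing Proposition~\ref{prop:HSComputationToHybrid} with Theorem~\ref{Theo:HybridCTLInfiniteCase}, obtains finitary $\CTLStar\geq\HS_\LinearPast$ by composing Proposition~\ref{prop:HSComputationToHybrid} with Theorem~\ref{theo:FinitaryHybridCTL}, and closes the equivalence with Corollary~\ref{cor:FromFInitaryCTLStarToFutureHSBranching}. Your extra check that each notion of ``equivalent'' unfolds to equality of Kripke-structure model classes is a sensible precaution but adds nothing beyond what the paper implicitly assumes.
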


\section{Expressiveness comparison of  $\HS_\LinearTime$,  $\HS_\stat$, and $\HS_\LinearPast$}\label{sect:allSems}

In this section, we compare the expressiveness of the three semantic variants of $\HS$, namely, $\HS_\LinearTime$,  $\HS_\stat$, and $\HS_\LinearPast$. 
The resulting picture was anticipated in Figure \ref{results}. Here, we give the proofs of the depicted results.

We start showing that $\HS_{\stat}$ is \emph{not} subsumed by $\HS_{\LinearPast}$. As a matter of fact, we show that   $\HS_{\stat}$ is sensitive to backward unwinding of finite Kripke structures, allowing us to sometimes discriminate finite Kripke structures with the same computation tree (these structures are always indistinguishable by $\HS_{\LinearPast}$). 

Let us consider, for instance, the two finite Kripke structures $\mathpzc{K}_1$ and $\mathpzc{K}_2$ of Figure~\ref{FigStateBasedExpressiveness}, whose forward and backward unwinding is shown in Figure~\ref{FigFBUnwinding}. Since $\mathpzc{K}_1$ and $\mathpzc{K}_2$ have the same computation tree, no $\HS$ formula $\varphi$ under the computation-tree-based semantics can distinguish $\mathpzc{K}_1$ and $\mathpzc{K}_2$, that is, $\mathpzc{K}_1\models_\LinearPast\varphi$ if and only if $\mathpzc{K}_2\models_\LinearPast\varphi$. On the other hand, the requirement ``\emph{each state reachable from the initial one where $p$ holds has a predecessor where $p$ holds as well}'' can be expressed, under the state-based semantics, by the $\HS$ formula
$\psi:= \hsE(p\wedge \Length_1) \rightarrow \hsE(\Length_1 \wedge \hsAt(p\wedge \neg\Length_1)).$
It is easy to see that $\mathpzc{K}_1\models_\stat\psi$: for any initial trace $\rho$ of $\mathpzc{K}_1$,
we have $\mathpzc{K}_1,\rho\models_\stat \hsE(p\wedge \Length_1)$ iff $\rho=s_0s_1^k$ for $k\geq 1$; the length-1 suffix $s_1$ is \emph{met-by} $s_1s_1$, and $\mathpzc{K}_1,s_1s_1\models_\stat p\wedge \neg\Length_1$.

On the contrary, in $\mathpzc{K}_2$ there is an initial trace, $s_0's_1'$, for which $\mathpzc{K}_2,s_0's_1'\models_\stat\hsE(p\wedge \Length_1)$; however the only traces that meet the length-1 suffix $s_1'$ are $s_1'$ itself and $s_0's_1'$, but neither of them model $p\wedge \neg\Length_1$.
Therefore, $\mathpzc{K}_2\not\models_\stat\psi$. 
This allows us to prove the following proposition.

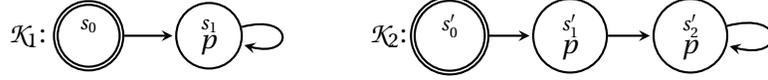
\begin{figure}[t]
\centering
\begin{tikzpicture}[->,>=stealth,thick,shorten >=1pt,node distance=1.6cm]
   \node [right] at (-1.2,0) {$\mathpzc{K}_1$:};
   \node [style={double,circle,draw}] (v0) {$\stackrel{s_0}{\phantom{p}}$};
   \node [style={circle,draw}, right of=v0] (v1) {$\stackrel{s_1}{p}$};
   \node [right] at (3.6,0) {$\mathpzc{K}_2$:};
   \node [right of=v1] (z1) {$\phantom{p}$};
   \node [left of=v0] (z0) {$\phantom{p}$};
   \node [style={double,circle,draw}, right of=z1] (v2) {$\stackrel{s_0'}{\phantom{p}}$};
      \node [style={circle,draw}, right of=v2] (v3) {$\stackrel{s_1'}{p}$};
       \node [style={circle,draw}, right of=v3] (v4) {$\stackrel{s_2'}{p}$};
   \draw (v0) to [right] (v1);
    \draw (v1) to [loop right] (v1);
    \draw (v4) to [loop right] (v4);
    \draw (v2) to [right] (v3);
    \draw (v3) to [right] (v4);
\end{tikzpicture}
\caption{The Kripke structures $\mathpzc{K}_1$ and $\mathpzc{K}_2$.}\label{FigStateBasedExpressiveness}
\end{figure}

\begin{figure}[b]
    \centering
\begin{tikzpicture}[->,>=stealth,thick,shorten >=1pt,node distance=1.6cm]

   \node [right] at (-1.2,0) {$\mathpzc{K}_1$:};
   \node [] (v0) {$\cdots$};
   \node [style={circle,draw}, right of=v0] (v1) {$\stackrel{s_1}{p}$};
   \node [style={double,circle,draw}, above left of=v1] (v1a) {$\stackrel{s_0}{\phantom{p}}$};
   \node [style={circle,draw}, right of=v1] (v11) {$\stackrel{s_1}{p}$};
   \node [style={double,circle,draw}, above left of=v11] (v110) {$\stackrel{s_0}{\phantom{p}}$};
   \node [ right of=v11] (v12) {$\cdots$};
   \draw (v0) to [right] (v1);
    \draw (v1) to (v11);
    \draw (v11) to (v12);
    \draw (v1a) to (v1);
    \draw (v110) to (v11);
\end{tikzpicture}

\bigskip

\begin{tikzpicture}[->,>=stealth,thick,shorten >=1pt,node distance=1.6cm]
   \node [right] at (-1.2,-1) {$\mathpzc{K}_2$:};
   \node [style={double,circle,draw}] (v2) {$\stackrel{s_0'}{\phantom{p}}$};
    \node [style={circle,draw}, right of=v2] (v3) {$\stackrel{s_1'}{p}$};
    \node [style={circle,draw}, below right of=v3] (v4) {$\stackrel{s_2'}{p}$};
    \node [left of=v4] (v4x) {$\cdots$};
    \node [style={circle,draw}, right=2.5cm of v4] (v40) {$\stackrel{s_2'}{p}$};
    \node [style={circle,draw}, above left of=v40] (v401) {$\stackrel{s_1'}{p}$};
    \node [style={double,circle,draw},left of=v401] (v401x) {$\stackrel{s_0'}{\phantom{p}}$};
    \node [right of=v40] (v41) {$\cdots$};
    \draw (v2) to (v3);
    \draw (v3) to (v4);
    \draw (v3) to (v4);
    \draw (v4) to (v40);
    \draw (v40) to (v41);
    \draw (v4x) to (v4);
    \draw (v401) to (v40);
    \draw (v401x) to (v401);
\end{tikzpicture}
\caption{Forward and backward unwinding of $\mathpzc{K}_1$ and $\mathpzc{K}_2$.}\label{FigFBUnwinding}
\end{figure}
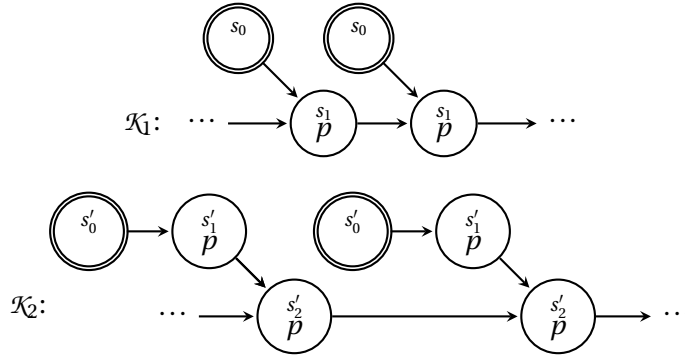

\begin{proposition}\label{Prop:StateBasedExpressiveness} $\HS_{\LinearPast} \not \geq \HS_\stat$.
\end{proposition}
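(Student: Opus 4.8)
The plan is to establish the non-subsumption by producing a single $\HS_\stat$ formula that no $\HS_\LinearPast$ formula can match, exploiting the fact that $\HS_\LinearPast$ evaluates a formula on the computation tree $\mathpzc{C}(\mathpzc{K})$ rather than on $\mathpzc{K}$ itself. I would argue by contradiction: suppose $\HS_\LinearPast \geq \HS_\stat$, so that the formula $\psi := \hsE(p\wedge \Length_1) \rightarrow \hsE(\Length_1 \wedge \hsAt(p\wedge \neg\Length_1))$, read under $\models_\stat$, admits an equivalent $\HS_\LinearPast$ formula $\varphi$.

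First I would isolate the structural principle driving the separation. By the definition of the computation-tree-based semantics, $\mathpzc{K}\models_\LinearPast \varphi$ holds iff $\mathpzc{C}(\mathpzc{K})\models_\stat \varphi$; hence whenever two finite Kripke structures have isomorphic labelled computation trees they satisfy exactly the same $\HS_\LinearPast$ formulas. In other words, $\HS_\LinearPast$ is blind to any feature of a Kripke structure that is washed out by forward unwinding, and in particular to its backward branching structure.

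Next I would instantiate this with the pair $\mathpzc{K}_1,\mathpzc{K}_2$ of Figure~\ref{FigStateBasedExpressiveness}. Both are deterministic and unwind to the very same infinite linear computation tree, namely the single branch with labelling $\emptyset\,\{p\}^{\omega}$; consequently $\mathpzc{K}_1\models_\LinearPast \varphi$ iff $\mathpzc{K}_2\models_\LinearPast \varphi$ for every $\HS$ formula $\varphi$. I would then invoke the facts already verified in the surrounding text, namely that $\mathpzc{K}_1\models_\stat \psi$ while $\mathpzc{K}_2\not\models_\stat \psi$: the formula $\psi$ requires, under the state-based semantics, that every reachable $p$-state possess a $p$-predecessor, which holds in $\mathpzc{K}_1$ because of the self-loop on $s_1$, but fails in $\mathpzc{K}_2$ since the only predecessor of $s_1'$ is the $\neg p$-state $s_0'$.

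The contradiction then falls out by a short chain: from $\mathpzc{K}_1\models_\stat \psi$ and the assumed equivalence we get $\mathpzc{K}_1\models_\LinearPast \varphi$; by computation-tree indistinguishability $\mathpzc{K}_2\models_\LinearPast \varphi$; and equivalence again yields $\mathpzc{K}_2\models_\stat \psi$, contradicting $\mathpzc{K}_2\not\models_\stat \psi$. This final chain is routine; the only genuinely creative part — and hence the real obstacle — is the design choice underlying the construction, namely finding a property that is sensitive to backward unwinding (so that it is naturally expressible with the transposed modality $\hsAt$ under $\models_\stat$, yet invisible to the linear, cumulative past of $\HS_\LinearPast$) together with two Kripke structures that share a computation tree but disagree on that property. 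Once such a witness is in hand, the argument is purely definitional.
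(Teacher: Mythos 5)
Your proposal is correct and follows essentially the same route as the paper: the same pair $\mathpzc{K}_1,\mathpzc{K}_2$ sharing a computation tree, the same distinguishing formula $\psi$ exploiting $\hsAt$, and the same observation that $\models_\LinearPast$ cannot separate structures with isomorphic unwindings. The verification that $\mathpzc{K}_1\models_\stat\psi$ and $\mathpzc{K}_2\not\models_\stat\psi$ matches the paper's own argument.
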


Since, as stated by  Theorem~\ref{Cor:CharacterizationHSCompTree}, $\HS_{\LinearPast}$ and finitary $\CTLStar$ have the same expressiveness and finitary $\CTLStar$ is subsumed by $\HS_\stat$ (Corollary~\ref{cor:FromFInitaryCTLStarToFutureHSBranching}), by Proposition~\ref{Prop:StateBasedExpressiveness} the next corollary follows.

\begin{corollary} $\HS_\stat$ is more expressive than $\HS_{\LinearPast}$.
\end{corollary}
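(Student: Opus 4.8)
The plan is to separate the two logics by a single distinguishing example, arguing by contradiction. Concretely, I would take the finite structures $\mathpzc{K}_1$ and $\mathpzc{K}_2$ of Figure~\ref{FigStateBasedExpressiveness} together with the $\HS_\stat$ formula $\psi$ displayed just above, and establish two complementary facts: that $\HS_\LinearPast$ is blind to the difference between $\mathpzc{K}_1$ and $\mathpzc{K}_2$, while $\psi$ under the state-based semantics is not.

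For the first fact, I would observe that unwinding $\mathpzc{K}_1$ and $\mathpzc{K}_2$ from their initial states yields in both cases a single infinite path whose root carries no proposition letter and all of whose remaining nodes are labelled by $p$. Hence $\mathpzc{C}(\mathpzc{K}_1)$ and $\mathpzc{C}(\mathpzc{K}_2)$ are isomorphic as labelled structures: the differing state names are irrelevant, since the induced labelling $\mu'$ records only the truth of $p$. Because $\models_\LinearPast$ is defined through $\models_\stat$ on the computation tree, and the $\HS$ satisfaction relation is invariant under isomorphism of labelled structures, it follows that $\mathpzc{K}_1\models_\LinearPast\varphi$ if and only if $\mathpzc{K}_2\models_\LinearPast\varphi$ for \emph{every} $\HS$ formula $\varphi$.

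For the second fact, I would verify the two evaluations sketched in the running text. On $\mathpzc{K}_1$, every initial trace is either $s_0$, for which the antecedent $\hsE(p\wedge\Length_1)$ fails so that $\psi$ holds vacuously, or of the form $s_0 s_1^k$ with $k\geq 1$; in the latter case the unique length-$1$ proper suffix is the single-state trace $s_1$, and the trace $s_1 s_1$ — which meets it and is therefore reached by $\hsAt$, which has length $\geq 2$, and over which $p$ holds throughout by homogeneity — witnesses the consequent, so $\mathpzc{K}_1\models_\stat\psi$. On $\mathpzc{K}_2$, the initial trace $s_0' s_1'$ satisfies the antecedent, but the only traces ending in $s_1'$ are $s_1'$ itself, which fails $\neg\Length_1$, and $s_0' s_1'$, which fails $p$ since $p\notin\mu(s_0')$; hence the consequent fails and $\mathpzc{K}_2\not\models_\stat\psi$.

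Combining the two facts yields the claim: were $\HS_\LinearPast\geq\HS_\stat$, there would be an $\HS_\LinearPast$ formula $\varphi$ equivalent to $\psi$, whence $\mathpzc{K}_1\models_\LinearPast\varphi$ and $\mathpzc{K}_2\not\models_\LinearPast\varphi$, contradicting the first fact. The bookkeeping in the first fact is routine; the main obstacle is the careful semantic check in the second, where one must track how $\hsE$ isolates the final single-state suffix, how the inverse modality $\hsAt$ reaches backward \emph{precisely} to the traces ending at that state, and how homogeneity forces $p$ to hold along an entire backward-extending trace. This is exactly the point at which $\mathpzc{K}_1$ and $\mathpzc{K}_2$ diverge: $\mathpzc{K}_1$ offers the $p$-labelled self-loop $s_1 s_1$ as a witnessing predecessor, whereas in $\mathpzc{K}_2$ every backward extension of $s_1'$ must pass through the $p$-free state $s_0'$.
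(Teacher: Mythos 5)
Your argument is correct as far as it goes, and the half you do prove is exactly the paper's own argument for Proposition~\ref{Prop:StateBasedExpressiveness}: the isomorphic computation trees of $\mathpzc{K}_1$ and $\mathpzc{K}_2$ make them indistinguishable under $\models_\LinearPast$, while the careful evaluation of $\psi$ (including the $\hsAt$ step and the role of homogeneity) shows $\mathpzc{K}_1\models_\stat\psi$ and $\mathpzc{K}_2\not\models_\stat\psi$, hence $\HS_\LinearPast\not\geq\HS_\stat$. That part checks out in every detail.

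The gap is that this is only half of the statement. The paper defines ``$L_1$ is (strictly) more expressive than $L_2$'' as the conjunction of $L_1\geq L_2$ and $L_2\not\geq L_1$, so the corollary also asserts the containment $\HS_\stat\geq\HS_\LinearPast$, i.e., that \emph{every} $\HS_\LinearPast$ formula has an equivalent $\HS_\stat$ formula. This is not a triviality one can wave through: the same $\HS$ formula generally means different things under the two semantics (past is linear on the computation tree but branching on the Kripke structure), so one cannot simply reuse the formula. The paper obtains the containment by composing Theorem~\ref{Cor:CharacterizationHSCompTree} ($\HS_\LinearPast$ is as expressive as finitary $\CTLStar$, which rests on the hybrid $\CTLStarLP$ encoding and the $\LTLP$ separation theorem) with Corollary~\ref{cor:FromFInitaryCTLStarToFutureHSBranching} (finitary $\CTLStar$ translates into the fragment $\mathsf{ABE}$, which contains no past modalities and therefore behaves identically under the state-based and computation-tree-based semantics, so the translation lands inside $\HS_\stat$). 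Your proposal establishes non-subsumption in one direction but omits this entire containment argument, so it proves incomparability of $\HS_\LinearPast$ with respect to $\HS_\stat$ in one direction rather than strict inclusion.
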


In the following, we focus on the comparison of $\HS_{\LinearTime}$ with $\HS_\stat$ and $\HS_{\LinearPast}$ showing that $\HS_{\LinearTime}$ is incomparable with both $\HS_\stat$ and $\HS_{\LinearPast}$. 

The fact that  $\HS_{\LinearTime}$ does not subsume either $\HS_\stat$ or $\HS_{\LinearPast}$ can be easily proved as follows. Consider the $\CTL$ formula $\forall \Always \exists \Eventually p$ asserting that from each state reachable from the initial one, it is possible to reach a state where $p$ holds. It is well-known that this formula is not $\LTL$-definable (see~\cite{Baier:2008:PMC:1373322}, Theorem~6.21). Thus, by Corollary~\ref{cor:HSLinearCharacterization}, there is no equivalent $\HS_{\LinearTime}$ formula. On the other hand, the requirement $\forall \Always \exists \Eventually p$ can be trivially expressed under the state-based (resp., computation-tree-based) semantics by the $\HS$ formula $\hsBt \hsE p $, proving the following result.

\begin{proposition}\label{prop:nonBranchingExpressibilityOfLinearTime} $\HS_{\LinearTime} \not \geq \HS_\stat$ and $\HS_{\LinearTime} \not \geq \HS_\LinearPast$.
\end{proposition}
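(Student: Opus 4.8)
The plan is to exhibit a single property that is definable in both branching variants $\HS_\stat$ and $\HS_\LinearPast$ but provably not definable in $\HS_\LinearTime$, and then to reduce non-definability in $\HS_\LinearTime$ to a classical $\LTL$ non-definability fact via the characterization $\HS_\LinearTime \equiv \LTL$ (Corollary~\ref{cor:HSLinearCharacterization}). The witness property is the one expressed by the $\CTL$ formula $\forall\Always\exists\Eventually p$ (``from every reachable state a state satisfying $p$ is reachable''), and the witness $\HS$ formula is $\hsBt\hsE p$.

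First I would show that $\hsBt\hsE p$, under each branching semantics, defines exactly the class of finite Kripke structures modelling $\forall\Always\exists\Eventually p$. Fix a finite $\mathpzc{K}$ and an initial trace $\rho$ ending in a reachable state $s=\lst(\rho)$. In the state-based case, the modality $\hsBt$ extends $\rho$ forward to a trace $\rho'$ with $\rho\in\Pref(\rho')$, and $\hsE p$ at $\rho'$ forces, by homogeneity, $p$ to hold over some proper suffix of $\rho'$, hence in particular at $\lst(\rho')$, which lies strictly after $s$. Conversely, if $p$ is reachable from $s$ in at least one step at a state $t$, extending $\rho$ to end at $t$ and taking the length-one suffix witnesses $\hsBt\hsE p$ at $\rho$. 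Quantifying over all initial traces yields $\mathpzc{K}\models_\stat\hsBt\hsE p$ iff every reachable state reaches $p$ in at least one step. The computation-tree case reduces to the state-based one on $\mathpzc{C}(\mathpzc{K})$ by definition of $\models_\LinearPast$, and since forward reachability is preserved by unwinding, the same characterization holds for $\models_\LinearPast$.

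The one point needing care is the off-by-one between the ``at least one step'' reading forced by $\hsBt\hsE p$ and the ``at least zero steps'' reading of $\exists\Eventually p$. I would reconcile these with a short argument using left-totality of $\delta$: if every reachable state reaches $p$ (possibly in zero steps), then for any reachable $s$ with a successor $s'$, $p$ is reachable from $s'$, hence from $s$ in at least one step; the converse implication is trivial. Thus the two conditions define the same class of finite Kripke structures, so $\mathpzc{K}\models_\stat\hsBt\hsE p$ iff $\mathpzc{K}\models\forall\Always\exists\Eventually p$, and likewise for $\models_\LinearPast$.

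Finally I would argue by contradiction: suppose some $\HS_\LinearTime$ formula $\chi$ were equivalent to $\hsBt\hsE p$ in $\HS_\stat$ over all finite Kripke structures. By Corollary~\ref{cor:HSLinearCharacterization} there is an $\LTL$ formula equivalent to $\chi$, and by the previous step this $\LTL$ formula would be equivalent to $\forall\Always\exists\Eventually p$ on all finite Kripke structures, contradicting the well-known fact that $\forall\Always\exists\Eventually p$ is not $\LTL$-definable (\cite{Baier:2008:PMC:1373322}, Theorem~6.21), witnessed by two structures with the same set of initial path-traces only one of which satisfies the formula. The identical argument with $\HS_\LinearPast$ in place of $\HS_\stat$ gives $\HS_\LinearTime\not\geq\HS_\LinearPast$. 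The main obstacle here is not the non-definability itself, which is imported as a known result, but the bookkeeping of the second and third paragraphs: matching the step-count semantics of the interval modalities to the $\CTL$ property uniformly across both branching variants.
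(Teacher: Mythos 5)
Your proposal is correct and follows essentially the same route as the paper: the witness property $\forall\Always\exists\Eventually p$, its $\HS$ rendering $\hsBt\hsE p$ under both branching semantics, and the reduction to non-$\LTL$-definability via Corollary~\ref{cor:HSLinearCharacterization}. The extra bookkeeping you supply (reconciling the ``at least one step'' reading of $\hsBt\hsE p$ with $\exists\Eventually p$ via left-totality of $\delta$) is a valid and correct elaboration of what the paper dismisses as ``trivially expressed.''
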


To prove the converse, namely, that $\HS_{\LinearTime}$ is not subsumed either by $\HS_\stat$ or by $\HS_{\LinearPast}$, 
we will show that the $\LTL$ formula $\Eventually\, p$ (equivalent to the $\CTL$ formula $\forall \Eventually\,p$)
cannot be expressed in either $\HS_{\LinearPast}$ or $\HS_{\stat}$. The proof is rather involved and requires a number of definitions and intermediate results. We work it out
for the state-based semantics only, because the one for the computation-tree-based semantics is very similar. 



Let us start by defining two families of Kripke structures $(\mathpzc{K}_n)_{n\geq 1}$ and $(\mathpzc{M}_n)_{n\geq 1}$ over $\{p\}$ such that for all $n\geq 1$, the $\LTL$ formula $\Eventually\, p$
 distinguishes $\mathpzc{K}_n$ and  $\mathpzc{M}_n$, and for every $\HS$ formula $\psi$ of size at most $n$, $\psi$ does \emph{not} distinguish
 $\mathpzc{K}_n$ and  $\mathpzc{M}_n$ under the state-based semantics. 
 
For a given $n \geq 1$, the Kripke structures $\mathpzc{K}_n$ and $\mathpzc{M}_n$ are depicted in
 Figure~\ref{FigEventuallyNOnBranchingExpressible}. Notice that the Kripke structure $\mathpzc{M}_n$  differs from  $\mathpzc{K}_n$ only in that its initial state is $s_1$ instead of $s_0$. Formally, $\mathpzc{K}_n=(\{p\},S_n, \delta_n,\mu_n,s_0)$ and $\mathpzc{M}_n=(\{p\},S_n, \delta_n,\mu_n,s_1)$, with
 $S_n=\{s_0,s_1,\ldots, s_{2n},t\}$, $\delta_n=\{(s_0,s_0),(s_0,s_1),\ldots, (s_{2n-1},s_{2n}),\allowbreak (s_{2n},t),(t,t)\}$,  $\mu(s_i)=\emptyset$ for all $0\leq i\leq 2n$, and $\mu(t)=\{p\}$. 
 
 Now, it is immediate to see that
%
 $\mathpzc{K}_n\not\models \Eventually p$ and $\mathpzc{M}_n\models \Eventually p$.

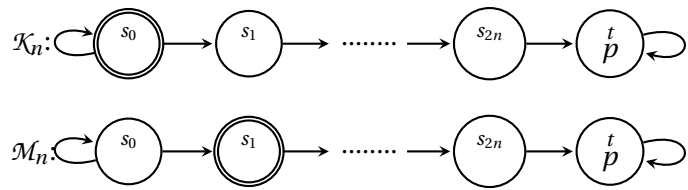
\begin{figure}[b]
\centering
\begin{tikzpicture}[->,>=stealth,thick,shorten >=1pt,node distance=1.6cm]
   \node [right] at (-1.7,0) {$\mathpzc{K}_n$:};
   \node [style={double,circle,draw}] (v0) {$\stackrel{s_0}{\phantom{p}}$};
   \node [style={circle,draw}, right of=v0] (v1) {$\stackrel{s_1}{\phantom{p}}$};
   \draw (v0) to [right] (v1);
       \draw (v0) to [loop left] (v0);
   \node [right of=v1] (v2) {$........$};
   \draw (v1) to [right] (v2);
   \node [style={circle,draw},right of=v2] (v2n) {$\stackrel{s_{2n}}{\phantom{p}}$};
   \draw (v2) to [right] (v2n);
   \node [style={circle,draw}, right of=v2n] (v) {$\stackrel{t}{p}$};
      \draw (v) to [loop right] (v);
   \draw (v2n) to [right] (v);
\end{tikzpicture}

\bigskip

\begin{tikzpicture}[->,>=stealth,thick,shorten >=1pt,node distance=1.6cm]
   \node [right] at (-1.7,0) {$\mathpzc{M}_n$:};
   \node [style={circle,draw}] (v0) {$\stackrel{s_0}{\phantom{p}}$};
   \node [style={double,circle,draw}, right of=v0] (v1) {$\stackrel{s_1}{\phantom{p}}$};
   \draw (v0) to [right] (v1);
       \draw (v0) to [loop left] (v0);
   \node [right of=v1] (v2) {$........$};
   \draw (v1) to [right] (v2);
   \node [style={circle,draw},right of=v2] (v2n) {$\stackrel{s_{2n}}{\phantom{p}}$};
   \draw (v2) to [right] (v2n);
   \node [style={circle,draw}, right of=v2n] (v) {$\stackrel{t}{p}$};
      \draw (v) to [loop right] (v);
   \draw (v2n) to [right] (v);
\end{tikzpicture}
\caption{The Kripke structures $\mathpzc{K}_n$ and $\mathpzc{M}_n$ with $n\geq 1$.}\label{FigEventuallyNOnBranchingExpressible}
\end{figure}

On the contrary, we are going to prove that  $\mathpzc{K}_n\models_\stat \psi$ if and only if $\mathpzc{M}_n\models_\stat \psi$ for all balanced $\HS_\stat$ formulas $\psi$ of length at most $n$ with $n\geq 1$. 
An $\HS_\stat$ formula $\psi$ is \emph{balanced} if, for each subformula $\hsB \theta $ (resp., $\hsBt \theta $), $\theta$ has the form $\theta_1\wedge\theta_2$ with $|\theta_1| = |\theta_2|$. Proving the result for  balanced $\HS_\stat$ formulas allows us to state it for any $\HS_\stat$ formula, since it is possible to trivially 
convert an $\HS_\stat$ formula $\psi$ into a balanced one (by using conjunctions of $\top$) which is equivalent to $\psi$ under any of the considered $\HS$ semantic variants.

To prove such a result, we need some technical definitions.
Let $\rho$ be a trace of $\mathpzc{K}_n$ (note that $\mathpzc{K}_n$ and $\mathpzc{M}_n$ feature the same traces).
By construction, $\rho$ has the form $\rho'\cdot \rho''$, where $\rho'$ is a (possibly empty) trace visiting only states where $p$ does not hold, and $\rho''$
is a (possibly empty) trace visiting only the state $t$, where $p$ holds. We say that $\rho'$ (resp., $\rho''$) is the $\emptyset$-part (resp., $p$-part) of $\rho$.
Let $N_\emptyset(\rho)$, $N_p(\rho)$, and $D_p(\rho)$ be the natural numbers defined as follows:
\begin{itemize}
  \item $N_\emptyset(\rho) = |\rho'|$ (the length of the $\emptyset$-part of $\rho$);
  \item $N_p(\rho) = |\rho''|$ (the length of the $p$-part of $\rho$);
  \item $D_p(\rho)=0$ if $N_p(\rho)>0$ (i.e., $\lst(\rho)=t$); otherwise, $D_p(\rho)$ is the length of the minimal trace starting from  $\lst(\rho)$ and leading to
  $s_{2n}$. Note that $D_p(\rho)$ is well defined and $0\!\leq\! D_p(\rho)\!\leq\! 2n+1$.
\end{itemize}

By construction, the following property holds.

\begin{proposition}\label{remark:HcompatibilityOne} For all traces $\rho$ and $\rho'$ of $\mathpzc{K}_n$, if $D_p(\rho)= D_p(\rho')$, then $\lst(\rho)=\lst(\rho')$.
\end{proposition}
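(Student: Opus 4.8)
The plan is to show that $D_p$, regarded as a quantity attached to a trace, in fact factors through the last state $\lst(\rho)$, and that the resulting map from the states of $\mathpzc{K}_n$ to the natural numbers is injective. The contrapositive of injectivity then gives exactly the claim: equal $D_p$-values force equal last states.

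First I would read off directly from the definition that $D_p(\rho)$ depends only on $\lst(\rho)$, and not on the rest of $\rho$. Indeed, if $\lst(\rho)=t$ then the $p$-part of $\rho$ is non-empty (it contains the final occurrence of $t$), so $N_p(\rho)>0$ and the first clause gives $D_p(\rho)=0$. Otherwise $\lst(\rho)=s_i$ for some $i$ with $0\leq i\leq 2n$, in which case $p$ does not hold at $\lst(\rho)$, so $N_p(\rho)=0$ and $D_p(\rho)$ is, by the second clause, the length of the shortest trace from $s_i$ to $s_{2n}$ --- a number determined by $s_i$ alone.

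Next I would evaluate this shortest-trace length along the chain. The transition relation $\delta_n$ is the linear backbone $s_0\to s_1\to\cdots\to s_{2n}\to t$ together with the two self-loops $(s_0,s_0)$ and $(t,t)$. Since neither self-loop skips a state nor provides an alternative route, the minimal trace from $s_i$ to $s_{2n}$ is $s_i s_{i+1}\cdots s_{2n}$, of length $2n-i+1$ (recalling that $|w|$ counts states). Hence the assignment of $D_p$-values to last states is $t\mapsto 0$ and $s_i\mapsto 2n-i+1$, taking the pairwise distinct values $0,1,\ldots,2n+1$ as $\lst(\rho)$ ranges over $t,s_{2n},s_{2n-1},\ldots,s_0$. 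Because this assignment is injective, $D_p(\rho)=D_p(\rho')$ immediately yields $\lst(\rho)=\lst(\rho')$, as required. There is essentially no obstacle here; the only points meriting a word of care are the length convention and the observation that the endpoint self-loops introduce no shortcuts, so that the shortest-path length strictly decreases along the backbone and the map is genuinely injective.
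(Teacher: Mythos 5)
Your argument is correct and is precisely the content behind the paper's one-line justification (``by construction''): $D_p$ depends only on $\lst(\rho)$, and the induced map $t\mapsto 0$, $s_i\mapsto 2n-i+1$ is injective on states, so equal $D_p$-values force equal last states. The only care points you flag --- the length convention $|s_i\cdots s_{2n}|=2n-i+1$ and the fact that the self-loops at $s_0$ and $t$ create no shortcuts --- are exactly the right ones, and both check out.
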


Now, for each $h\in [1,n]$, we introduce the notion of \emph{$h$-compatibility} between traces of  $\mathpzc{K}_n$. Intuitively, this notion provides a sufficient condition to make two traces indistinguishable under the state-based semantics by means of balanced $\HS$ formulas having size at most $h$.

\begin{definition}[$h$-compatibility] Let $h\in [1,n]$. Two traces $\rho$ and $\rho'$ of $\mathpzc{K}_n$ are \emph{$h$-compatible} if the following conditions hold:
\begin{itemize}
\item $N_p(\rho) = N_p(\rho')$;
  \item either $N_\emptyset(\rho) = N_\emptyset(\rho')$, or $N_\emptyset(\rho)\geq h$ and $N_\emptyset(\rho')\geq h$;
        \item either $D_p(\rho) = D_p(\rho')$, or $D_p(\rho)\geq h$ and $D_p(\rho')\geq h$.
\end{itemize}
We denote by $R(h)$ the binary relation over the set of traces of $\mathpzc{K}_n$ such that $(\rho,\rho')\in R(h)$ if and only if $\rho$ and $\rho'$ are $h$-compatible. Notice that $R(h)$ is an equivalence relation, for all $h\in [1,n]$.
Moreover, $R(h)\subseteq R(h-1)$, for all $h\in [2,n]$, that is, $R(h)$ is a refinement of $R(h-1)$.
\end{definition}

By construction, the next property, that will be used to prove Lemma~\ref{lemma:MainnonBranchingExpressibilityOfEventually}, can be easily shown.

\begin{proposition}\label{remark:HcompatibilityTwo} For every trace  $\rho$ of  $\mathpzc{K}_n$ starting from $s_0$ (resp., $s_1$), there exists a trace $\rho'$ of
$\mathpzc{K}_n$ starting from $s_1$ (resp., $s_0$) such that $(\rho,\rho')\in R(n)$.
\end{proposition}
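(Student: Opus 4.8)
The plan is to describe explicitly the traces of $\mathpzc{K}_n$ starting from $s_0$ and from $s_1$, and then to verify both halves of the claim by a case analysis steered by the invariants $N_\emptyset$, $N_p$, and $D_p$. Since the only loops of $\mathpzc{K}_n$ are the self-loops at $s_0$ and $t$, every trace issuing from $s_0$ has one of the shapes $s_0^{a}$, $s_0^{a}s_1\cdots s_k$ (with $1\le k\le 2n$), or $s_0^{a}s_1\cdots s_{2n}t^{b}$, where $a\ge 1$ and $b\ge 1$; writing $s_k$ for its last $\emptyset$-state ($k=0$ in the first shape) we get $N_\emptyset=a+k$ and $D_p=2n-k+1$ when it does not reach $t$, and $N_p=b$, $D_p=0$ when it does. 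A trace from $s_1$ has instead the shape $s_1\cdots s_j$ ($1\le j\le 2n$) or $s_1\cdots s_{2n}t^{b}$, so that $N_\emptyset=j$ and $D_p=2n-j+1$ in the former case. The decisive structural asymmetry is that a trace from $s_0$ can inflate $N_\emptyset$ without bound through the $s_0$-loop, whereas a non-$t$-reaching trace from $s_1$ obeys the rigid identity $N_\emptyset+D_p=2n+1$ (while the analogous sum for $s_0$ is $a+2n+1\ge 2n+2$). This is exactly why the two families cannot in general match both $N_\emptyset$ and $D_p$ simultaneously, forcing the use of the threshold escape of the $h$-compatibility definition.

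For the direction producing, from a trace $\rho$ starting at $s_0$, an $n$-compatible trace $\rho'$ starting at $s_1$, I would argue as follows. If $\rho$ reaches $t$, I take $\rho'=s_1\cdots s_{2n}t^{b}$, which reproduces $N_p$ and $D_p=0$ exactly, while both values of $N_\emptyset$ (namely $a+2n$ and $2n$) exceed $n$. Otherwise $\rho=s_0^{a}s_1\cdots s_k$ with $N_p=0$, and I choose the endpoint $s_j$ of $\rho'=s_1\cdots s_j$ in three subcases: if $k\ge n$ I set $j=k$, matching $D_p$ and leaving $N_\emptyset(\rho')=k$ and $N_\emptyset(\rho)=a+k$ both at least $n$; if $k\le n-1$ and $a+k\le n+1$ I set $j=a+k$, matching $N_\emptyset$ and leaving $D_p(\rho)=2n-k+1$ and $D_p(\rho')=2n-a-k+1$ both at least $n$; and if $k\le n-1$ and $a+k\ge n+2$ I set $j=n$, so that $N_\emptyset(\rho')=n$, $D_p(\rho')=n+1$, $N_\emptyset(\rho)=a+k\ge n+2$, and $D_p(\rho)=2n-k+1\ge n+2$ are all at least $n$, and both conditions pass through the double escape. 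A quick check shows these three subcases are disjoint and exhaust all pairs $(a,k)$ with $a\ge 1$ and $0\le k\le 2n$.

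For the converse direction I would build, from a trace $\rho'$ starting at $s_1$, an $n$-compatible trace $\rho$ starting at $s_0$; here the additional freedom of the $s_0$-loop makes the task easier, and padding with a single $s_0$ suffices. If $\rho'$ reaches $t$ I take $\rho=s_0s_1\cdots s_{2n}t^{b}$ (matching $N_p$ and $D_p$, both $N_\emptyset$ at least $n$); if $\rho'=s_1\cdots s_j$ I take $\rho=s_0s_1\cdots s_j$ when $j\ge n$ (matching $D_p$, both $N_\emptyset$ at least $n$) and $\rho=s_0s_1\cdots s_{j-1}$ when $j\le n+1$ (matching $N_\emptyset$, both $D_p$ at least $n$), the ranges $j\ge n$ and $j\le n+1$ jointly covering $[1,2n]$. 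Because $R(n)$ is an equivalence relation one could instead invoke symmetry, but I prefer to exhibit $\rho$ explicitly since the two trace families are syntactically different.

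The main obstacle, and the crux of the argument, is the last subcase of the first direction, where $\rho$ leaves the chain early ($k\le n-1$) yet has cycled many times at $s_0$ ($a+k\ge n+2$): here no trace from $s_1$ can reproduce either $N_\emptyset$ or $D_p$ exactly, so the proof must rely entirely on the double threshold escape. The quantitative point I would emphasise is that the chain has length $2n$ rather than $n$: this is precisely what guarantees an index $j$ (namely $j=n$) for which the two ``distances'' $N_\emptyset(\rho')=j$ and $D_p(\rho')=2n-j+1$ are simultaneously at least $n$, so that the single trace $s_1\cdots s_n$ is $n$-compatible with every such $\rho$. Verifying the joint exhaustiveness of the subcases and that each chosen $j$ lies in $[1,2n]$ is the only spot requiring a little arithmetic care.
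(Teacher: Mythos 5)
Your proof is correct. The paper states this proposition without proof (``By construction, \dots can be easily shown''), and your explicit enumeration of the traces issuing from $s_0$ and $s_1$, together with the three-way case split on $N_\emptyset$, $N_p$, and $D_p$ against the threshold $n$, is precisely the ``by construction'' argument the paper leaves implicit; all the arithmetic checks (including the boundary case $\rho=s_0$ arising from $j=1$ in the converse direction, and the exhaustiveness of the subcases) go through.
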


The following lemma lists some useful properties of the equivalence relation $R(h)$.

 \begin{lemma}\label{lemma:Hcompatibility} Let $h\in [2,n]$ and $(\rho,\rho')\in R(h)$. The following properties hold:
 \begin{enumerate}
  \item for each proper prefix $\sigma$ of $\rho$, there exists a proper prefix $\sigma'$ of $\rho'$ such that $(\sigma,\sigma')\in R(\lfloor \frac{h}{2}\rfloor)$;
   \item for each trace of the  form $\rho\cdot \sigma$, where $\sigma$ is not empty, there exists a trace of the form $\rho'\cdot\sigma'$
    such that $\sigma'$ is not empty and $(\rho\cdot\sigma,\rho'\cdot \sigma') \in R(\lfloor \frac{h}{2}\rfloor)$;
 \item for each proper suffix $\sigma$ of $\rho$, there exists a proper suffix $\sigma'$ of $\rho'$ such that $(\sigma,\sigma')\in R(h-1)$;
   \item for each trace of the form $\sigma \cdot \rho$, where $\sigma$ is not empty, there exists a trace of the form $\sigma'\cdot \rho'$ such that $\sigma'$ is not empty and $(\sigma\cdot \rho,\sigma'\cdot \rho')\in R(h)$.
\end{enumerate}
 \end{lemma}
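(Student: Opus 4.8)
The plan is to prove all four properties by a uniform case analysis that rests on the observation that every trace $\rho$ of $\mathpzc{K}_n$ is completely determined by the triple $(N_\emptyset(\rho),N_p(\rho),D_p(\rho))$: once we know how many $t$'s terminate $\rho$ and, if none, how far $\lst(\rho)$ sits from $s_{2n}$, the shape of $\mathpzc{K}_n$ (a linear chain $s_0\to\cdots\to s_{2n}\to t$ with self-loops only at $s_0$ and $t$) forces the entire trace, padding with the $s_0$-loop whenever the $\emptyset$-part is long. I would first record this normal form, together with the elementary facts that moving the \emph{right} endpoint (taking a prefix, or extending forward) alters $\lst(\rho)$, and hence changes $D_p$ and $N_\emptyset$ simultaneously (and possibly $N_p$), whereas moving the \emph{left} endpoint (taking a suffix, or extending backward) leaves $\lst(\rho)$ untouched, so that $N_p$ and $D_p$ are preserved and only $N_\emptyset$ varies. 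This already explains the qualitative difference between the thresholds $\lfloor h/2\rfloor$ of Properties 1--2 and $h-1$, $h$ of Properties 3--4.

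For each property I would then split on the position of the removed or added segment relative to the boundary between the $\emptyset$-part and the $p$-part of $\rho$. The routine cases are those where the segment lies entirely inside the $p$-part or exactly straddles it: there $N_p$ (resp.\ $D_p$) is matched by copying the appropriate number of $t$'s, the witness $\sigma'$ is read off directly from the corresponding data of $\rho'$, and compatibility follows from $(\rho,\rho')\in R(h)$ together with the refinement inclusion $R(h)\subseteq R(\lfloor h/2\rfloor)$.

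The heart of the argument, and the source of the factor $\lfloor h/2\rfloor$ in Properties 1 and 2, is the case where the prefix (resp.\ forward extension) stays inside the $\emptyset$-part. Here, as the right endpoint slides, $N_\emptyset$ and $D_p$ move in opposite directions with constant sum, governed by the identity $N_\emptyset+D_p=N_\emptyset(\rho)+D_p(\rho)$ for a forward extension (and the analogous identity with $D_p(\rho)$ replaced by $\max(1,D_p(\rho))$ for a prefix). In the regime where $\rho$ and $\rho'$ genuinely differ this sum is at least $h$, so $N_\emptyset$ and $D_p$ of the chosen subinterval cannot both fall below $\lfloor h/2\rfloor$. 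I would exploit this by choosing, according to whether $N_\emptyset$ or $D_p$ is the smaller of the two, the unique sliding position on $\rho'$ that reproduces that smaller value exactly, and then verifying that the complementary parameter stays $\ge\lfloor h/2\rfloor$ on both traces. For Properties 3 and 4 no such trade-off occurs: since $\lst$ is fixed, $N_p$ and $D_p$ are inherited verbatim from $(\rho,\rho')\in R(h)$, and one only has to reproduce $N_\emptyset$, either exactly (when $N_\emptyset(\rho)=N_\emptyset(\rho')$, using the same offset) or keeping both copies large (when they already exceed $h$). The loss of one unit in Property 3 reflects that a suffix can drive $N_\emptyset$ down to the value $h-1$, which cannot be matched exactly, whereas backward extension in Property 4 only \emph{increases} $N_\emptyset$ and so preserves the full index $h$.

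The main obstacle I anticipate is the bookkeeping in the degenerate configurations rather than the arithmetic: traces with empty $\emptyset$-part (of the form $t^{b}$), the endpoint $s_0$ at which the forward transition is nondeterministic (so that a forward extension is not unique), and the constraint that each witness $\sigma'$ be a genuine trace of $\mathpzc{K}_n$ — a \emph{proper} and \emph{nonempty} prefix/suffix where required — and not merely an abstract triple. I would dispatch these by observing that $N_\emptyset(\rho)=0$ forces, via $(\rho,\rho')\in R(h)$, the equality $\rho=\rho'=t^{b}$, so that the witness may be taken identical to the given segment, and by checking realisability of every prescribed triple directly from the normal form, the $s_0$-self-loop guaranteeing that each required value of $N_\emptyset$ is attainable.
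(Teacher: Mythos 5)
Your proposal is correct and follows essentially the same route as the paper's proof: a case analysis on the parameters $N_\emptyset$, $N_p$, $D_p$ relative to the thresholds $h$ and $\lfloor h/2\rfloor$, with the trade-off between $N_\emptyset(\sigma)$ and $D_p(\sigma)$ under right-endpoint moves accounting for the halving in Properties 1--2, and the invariance of $\lst$ under left-endpoint moves accounting for Properties 3--4 (which the paper dispatches as immediate). One caveat: your ``constant sum'' identity fails when the sliding endpoint lies in the $s_0$-self-loop padding (there $N_\emptyset+D_p$ strictly grows), but in that regime $D_p=2n+1\geq h$, so the conclusion you actually use --- that in the non-degenerate case $N_\emptyset(\sigma)$ and $D_p(\sigma)$ cannot both drop below $\lfloor h/2\rfloor$ --- still holds.
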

 \begin{proof} We prove Properties 1 and 2. Properties 3 and 4 easily follow by construction and by definition of $h$-compatibility.

 \emph{Property 1.} We distinguish the following cases:
 \begin{enumerate}
  \item $D_p(\rho)<h$ and $N_\emptyset(\rho)<h$. Since $(\rho,\rho')\in R(h)$ and $h\in [2,n]$, it holds that $D_p(\rho)=D_p(\rho')$, $N_\emptyset(\rho)=N_\emptyset(\rho')$, and $N_p(\rho)=N_p(\rho')$, and thus $\rho=\rho'$.
  \item $D_p(\rho)\geq h$. Since $(\rho,\rho')\in R(h)$, $D_p(\rho')\geq h$, $N_p(\rho')=N_p(\rho)=0$,  and either $N_\emptyset(\rho')=N_\emptyset(\rho)$, or $N_\emptyset(\rho)\geq h$ and $N_\emptyset(\rho')\geq h$. In both cases, by construction it easily follows that for each proper prefix $\sigma$ of $\rho$,
      there exists a proper prefix $\sigma'$ of $\rho'$ such that $(\sigma,\sigma')\in R(h-1)\subseteq R(\lfloor \frac{h}{2}\rfloor)$.
  \item $D_p(\rho)<h$ and $N_\emptyset(\rho)\geq h$. Since $(\rho,\rho')\in R(h)$, we have that $D_p(\rho')=D_p(\rho)$ (and hence, by Proposition~\ref{remark:HcompatibilityOne}, $\lst(\rho)=\lst(\rho')$), $N_p(\rho')=N_p(\rho)$,  and $N_\emptyset(\rho')\geq h$.
  
  Let $\sigma$ be a proper prefix of $\rho$. We distinguish the following three subcases:
 \begin{enumerate}
   \item $N_\emptyset(\sigma)< \lfloor \frac{h}{2}\rfloor$. Since $N_\emptyset(\rho)\geq h$, we have that $D_p(\sigma)\geq \lfloor \frac{h}{2}\rfloor$ and $|\sigma|=N_\emptyset(\sigma)$ (and thus $N_p(\sigma)=0$). Since
    $N_\emptyset(\rho')\geq h$, by taking the proper prefix $\sigma'$ of $\rho'$ having length $N_\emptyset(\sigma)$, we obtain that
    $(\sigma,\sigma')\in  R(\lfloor \frac{h}{2}\rfloor)$.
   \item $N_\emptyset(\sigma)\geq  \lfloor \frac{h}{2}\rfloor$ and $D_p(\sigma)\geq  \lfloor \frac{h}{2}\rfloor$. By taking the prefix $\sigma'$ of
   $\rho'$ of length $\lfloor \frac{h}{2}\rfloor$, we get that $(\sigma,\sigma')\in  R(\lfloor \frac{h}{2}\rfloor)$.
   \item  $N_\emptyset(\sigma)\geq  \lfloor \frac{h}{2}\rfloor$ and $D_p(\sigma)<  \lfloor \frac{h}{2}\rfloor$. Since $\lst(\rho)=\lst(\rho')$, $N_p(\rho')=N_p(\rho)$,
   and $N_\emptyset(\rho')\geq h$,
   there exists a proper prefix $\sigma'$ of $\rho'$ such that $\lst(\sigma')=\lst(\sigma)$,  $N_p(\sigma')=N_p(\sigma)$, and
    $N_\emptyset(\sigma')\geq  \lfloor \frac{h}{2}\rfloor$. Hence $(\sigma,\sigma')\in  R(\lfloor \frac{h}{2}\rfloor)$.
 \end{enumerate}
\end{enumerate}
Thus, in all the cases Property~1 holds.

\emph{Property 2.} Let $(\rho,\rho')\in R(h)$ and $\sigma$ be a non-empty trace such that $\rho\cdot \sigma$ is a trace. We distinguish the following cases:
 \begin{enumerate}
  \item $D_p(\rho)<h$. Since $(\rho,\rho')\in R(h)$, we have that $D_p(\rho')= D_p(\rho)$, $N_p(\rho)=N_p(\rho')$, and either $N_\emptyset(\rho')=N_\emptyset(\rho)$, or $N_\emptyset(\rho)\geq h$ and $N_\emptyset(\rho')\geq h$. Hence, $\lst(\rho)=\lst(\rho')$ and, by taking $\sigma'=\sigma$, we obtain that
        $(\rho\cdot\sigma,\rho'\cdot\sigma')\in R(h)\subseteq R(\lfloor \frac{h}{2}\rfloor)$.
  \item $D_p(\rho)\geq h$ and $D_p(\sigma)<\lfloor \frac{h}{2}\rfloor$. It follows that $N_\emptyset(\rho\cdot\sigma)\geq \lfloor \frac{h}{2}\rfloor$. Since
$D_p(\rho')\geq h$, there exists a trace of the form $\rho'\cdot \sigma'$ such that $D_p(\rho'\cdot\sigma')=D_p(\rho\cdot\sigma)$, $N_p(\rho'\cdot\sigma')=N_p(\rho\cdot\sigma)$,
and $N_\emptyset(\rho'\cdot\sigma')\geq \lfloor \frac{h}{2}\rfloor$. Hence
$(\rho\cdot\sigma,\rho'\cdot\sigma')\in R(\lfloor \frac{h}{2}\rfloor)$.
\item $D_p(\rho)\geq h$ and $D_p(\sigma)\geq \lfloor \frac{h}{2}\rfloor$. Thus $D_p(\rho')\geq h$. If $N_\emptyset(\rho\cdot\sigma)<\lfloor \frac{h}{2}\rfloor$, then 
$N_\emptyset(\rho)=N_\emptyset(\rho')$. Therefore, there exists a trace of the form $\rho'\cdot\sigma'$ such that  $N_\emptyset(\rho'\cdot\sigma')=N_\emptyset(\rho\cdot\sigma)$
and $D_p(\sigma')\geq \lfloor \frac{h}{2}\rfloor$. Otherwise, $N_\emptyset(\rho\cdot\sigma)\geq \lfloor \frac{h}{2}\rfloor$ and
   there exists a trace of the form $\rho'\cdot\sigma'$ such that  $N_\emptyset(\rho'\cdot\sigma')\geq \lfloor \frac{h}{2}\rfloor$
and $D_p(\sigma')= \lfloor \frac{h}{2}\rfloor$. In both cases, $(\rho\cdot\sigma,\rho'\cdot\sigma')\in R(\lfloor \frac{h}{2}\rfloor)$.
\end{enumerate}
Thus, Property 2 holds.
\end{proof}

By exploiting Lemma~\ref{lemma:Hcompatibility}, we can prove the following lemma.
 \begin{lemma}\label{corollary:HcompatibilityOne} Let $n$ be a natural number,  $\psi$ be a balanced $\HS_\stat$ formula, with $|\psi|\leq n$, and $(\rho,\rho')\in R(|\psi|)$. Then, $\mathpzc{K}_n,\rho\models \psi$ if and only if $\mathpzc{K}_n,\rho'\models \psi$.
 \end{lemma}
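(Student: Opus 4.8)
The plan is to prove the statement by structural induction on $\psi$, establishing the stronger claim quantified over all pairs of traces: for every balanced subformula $\theta$ (with $|\theta|\le n$) and all traces $\tau,\tau'$ of $\mathpzc{K}_n$ with $(\tau,\tau')\in R(|\theta|)$, we have $\mathpzc{K}_n,\tau\models_\stat\theta$ if and only if $\mathpzc{K}_n,\tau'\models_\stat\theta$. Since each $R(h)$ is an equivalence relation (in particular symmetric), it suffices to prove the left-to-right implication; the converse follows by exchanging the roles of $\tau$ and $\tau'$. Throughout, I will freely use the monotonicity $R(h)\subseteq R(h')$ for $h\ge h'$ recorded in the definition of $R$.

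For the base case $\psi=p$ we have $|\psi|=1$ and $(\rho,\rho')\in R(1)$. By homogeneity, $\mathpzc{K}_n,\rho\models_\stat p$ holds if and only if $p$ holds at every state of $\rho$, i.e.\ (by the shape of $\mathpzc{K}_n$, where $p$ labels only $t$) if and only if $N_\emptyset(\rho)=0$. If $N_\emptyset(\rho)=0$, then the disjunct ``$N_\emptyset(\rho)\ge 1$ and $N_\emptyset(\rho')\ge 1$'' of $1$-compatibility is impossible, forcing $N_\emptyset(\rho')=N_\emptyset(\rho)=0$, whence $\rho'$ (being non-empty) also satisfies $p$. The Boolean cases are routine: for $\psi=\neg\theta$ or $\psi=\theta_1\wedge\theta_2$, each immediate subformula $\theta_i$ satisfies $|\theta_i|\le|\psi|\le n$, so $(\rho,\rho')\in R(|\psi|)\subseteq R(|\theta_i|)$ and the inductive hypothesis applies directly to each $\theta_i$.

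The modal cases are the heart of the argument, and each reduces to the matching clause of Lemma~\ref{lemma:Hcompatibility}. Write $h=|\psi|\in[2,n]$ and recall that, under the state-based semantics, $\hsE\theta$ (resp.\ $\hsEt\theta$, $\hsB\theta$, $\hsBt\theta$) holds at a trace exactly when some proper suffix (resp.\ proper extension to the left, proper prefix, proper extension to the right) of that trace satisfies $\theta$. For $\psi=\hsE\theta$ I take a witnessing proper suffix $\sigma$ of $\rho$ with $\sigma\models_\stat\theta$, apply Property~3 to obtain a proper suffix $\sigma'$ of $\rho'$ with $(\sigma,\sigma')\in R(h-1)=R(|\theta|)$, and conclude $\sigma'\models_\stat\theta$ by the inductive hypothesis; hence $\rho'\models_\stat\hsE\theta$. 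The case $\psi=\hsEt\theta$ is identical but invokes Property~4, which even yields a matching trace $\sigma'\cdot\rho'$ with $(\sigma\cdot\rho,\sigma'\cdot\rho')\in R(h)\subseteq R(|\theta|)$, so the hypothesis again applies.

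The crucial point---and the only place where the balancedness hypothesis is used---is the pair $\hsB,\hsBt$, because Properties~1 and~2 of Lemma~\ref{lemma:Hcompatibility} only preserve compatibility down to level $\lfloor h/2\rfloor$ rather than $h-1$. Here balancedness guarantees $\theta=\theta_1\wedge\theta_2$ with $|\theta_1|=|\theta_2|$, so that $h=|\psi|=2+2|\theta_1|$ is even and $|\theta_1|=|\theta_2|=h/2-1\le\lfloor h/2\rfloor$. Thus, given a witness $\sigma$ (a proper prefix of $\rho$ for $\hsB$, resp.\ a proper right-extension $\rho\cdot\sigma$ for $\hsBt$) satisfying $\theta_1\wedge\theta_2$, Property~1 (resp.\ Property~2) produces a matching trace at level $R(\lfloor h/2\rfloor)\subseteq R(|\theta_1|)=R(|\theta_2|)$; applying the inductive hypothesis separately to $\theta_1$ and to $\theta_2$ shows the matching trace satisfies both conjuncts, hence $\theta$, and therefore $\rho'\models_\stat\psi$. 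The main obstacle is precisely this synchronization between the halving of the compatibility level in Properties~1 and~2 and the fact that balanced formulas split the operand of $\hsB$/$\hsBt$ into two conjuncts of size at most $\lfloor h/2\rfloor$; once this accounting is set up, the induction closes, and the lemma follows by taking $\theta=\psi$.
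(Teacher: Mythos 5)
Your proof is correct and follows essentially the same route as the paper's: induction on the formula, dispatching each modal case to the corresponding property of Lemma~\ref{lemma:Hcompatibility}, with balancedness invoked exactly where the halving in Properties~1 and~2 must be matched against the sizes of the two conjuncts under $\hsB$/$\hsBt$. Your accounting of the compatibility levels ($R(h-1)$ for $\hsE$/$\hsEt$, $R(\lfloor h/2\rfloor)\subseteq R(|\theta_i|)$ for $\hsB$/$\hsBt$) agrees with the paper's.
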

 \begin{proof} The proof is by induction on $|\psi|$. The cases for the Boolean connectives directly follow from the inductive hypothesis and the fact that $R(h)\subseteq R(k)$, for all $h,k\in [1,n]$ with $h\geq k$. 
 
 As for the other cases, we proceed as follows:
 \begin{itemize}
  \item $\psi=p$. Since $(\rho,\rho')\in R(1)$, that is, either $N_{\emptyset}(\rho) = N_{\emptyset}(\rho') = 0$ or both
  $N_{\emptyset}(\rho) \geq 1$ and $N_{\emptyset}(\rho') \geq 1$, $\rho$ visits a state where $p$ does not hold if and only if $\rho'$ visits a state where $p$ does not hold, which proves the thesis.
  \item $\psi =\hsB \theta$ (resp., $\psi =\hsBt\theta$). Since $\psi$ is balanced, $\theta$ has the form $\theta=\theta_1\wedge\theta_2$, with $|\theta_1|=|\theta_2|$. Hence $|\theta_1|,|\theta_2|\leq \lfloor \frac{|\psi|}{2}\rfloor$. We focus on the case $\psi =\hsB \theta$.
  Since $R(|\psi|)$ is an equivalence relation, by symmetry it suffices to show that $\mathpzc{K}_n,\rho\models \psi$ implies $\mathpzc{K}_n,\rho'\models \psi$. If $\mathpzc{K}_n,\rho\models \psi$, then there exists a proper prefix $\sigma$ of $\rho$ such that $\mathpzc{K}_n,\sigma\models \theta_i$, for $i=1,2$. Since $(\rho,\rho')\in R(|\psi|)$, by property (1) of Lemma~\ref{lemma:Hcompatibility}, there exists a proper prefix $\sigma'$ of $\rho'$ such that  $(\sigma,\sigma')\in R(\lfloor \frac{|\psi|}{2}\rfloor)$. Since $R(\lfloor \frac{|\psi|}{2}\rfloor)\subseteq R(|\theta_i|)$, for $i=1,2$, by the inductive hypothesis we get that $\mathpzc{K}_n,\sigma'\models \theta_i$, for $i=1,2$, thus proving that $\mathpzc{K}_n,\rho'\models \psi$.
      
  The case for $\psi =\hsBt \theta$ can be dealt with similarly by exploiting property (2) of  Lemma~\ref{lemma:Hcompatibility}.
  \item $\psi =\hsE \theta$ (resp., $\psi =\hsEt \theta$). We can proceed as in the previous case by applying property (3) of Lemma~\ref{lemma:Hcompatibility} (resp., property (4) of Lemma~\ref{lemma:Hcompatibility}) and the inductive hypothesis.\qedhere
\end{itemize}
 \end{proof}

\begin{lemma}\label{lemma:MainnonBranchingExpressibilityOfEventually} For all natural numbers $n\geq 1$ and balanced $\HS_\stat$ formulas $\psi$, with $|\psi|\leq n$, $\mathpzc{K}_n\models_\stat \psi$ if and only if $\mathpzc{M}_n\models_\stat \psi$.
\end{lemma}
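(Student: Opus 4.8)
The plan is to combine the two preceding results---Proposition~\ref{remark:HcompatibilityTwo} and Lemma~\ref{corollary:HcompatibilityOne}---exploiting the fact that $\mathpzc{K}_n$ and $\mathpzc{M}_n$ differ \emph{only} in their choice of initial state. First I would record that, since $\mathpzc{K}_n$ and $\mathpzc{M}_n$ share the same state set, transition relation, and labelling, they induce the same set of traces and hence the same abstract interval model, so that $\mathpzc{K}_n,\rho\models\psi$ if and only if $\mathpzc{M}_n,\rho\models\psi$ for \emph{every} trace $\rho$. Consequently the only difference between the two model-checking instances lies in the set of initial traces: those of $\mathpzc{K}_n$ start at $s_0$, those of $\mathpzc{M}_n$ start at $s_1$. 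Thus $\mathpzc{K}_n\models_\stat\psi$ unfolds to ``$\rho\models\psi$ for every trace $\rho$ starting at $s_0$'', and symmetrically for $\mathpzc{M}_n$ with $s_1$.

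Next I would prove the forward implication $\mathpzc{K}_n\models_\stat\psi\Rightarrow\mathpzc{M}_n\models_\stat\psi$. Let $\sigma$ be an arbitrary initial trace of $\mathpzc{M}_n$, i.e.\ a trace starting at $s_1$. By Proposition~\ref{remark:HcompatibilityTwo} there is a trace $\tau$ starting at $s_0$---hence an initial trace of $\mathpzc{K}_n$---with $(\sigma,\tau)\in R(n)$. The hypothesis gives $\mathpzc{K}_n,\tau\models\psi$. Since $|\psi|\leq n$ and $R(n)\subseteq R(|\psi|)$ by the refinement property of the relations $R(h)$, we have $(\tau,\sigma)\in R(|\psi|)$ (using that each $R(h)$ is an equivalence relation). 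As $\psi$ is balanced, Lemma~\ref{corollary:HcompatibilityOne} applies and yields $\mathpzc{K}_n,\sigma\models\psi$, whence, by the identity of the two induced models, $\mathpzc{M}_n,\sigma\models\psi$. As $\sigma$ was arbitrary, $\mathpzc{M}_n\models_\stat\psi$.

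The converse implication is entirely symmetric: every initial trace of $\mathpzc{K}_n$ (starting at $s_0$) is paired by the ``resp.'' clause of Proposition~\ref{remark:HcompatibilityTwo} with an $n$-compatible trace starting at $s_1$, and the same transfer via $R(n)\subseteq R(|\psi|)$ and Lemma~\ref{corollary:HcompatibilityOne} applies. This closes the biconditional.

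The real work has already been discharged by the earlier lemmas, so I do not expect a genuine obstacle here; the one point demanding care is the bookkeeping around the quantifier hidden in $\models_\stat$. Specifically, one must apply the compatibility pairing in the correct direction (turning a universal statement over $s_0$-traces into one over $s_1$-traces), and one must explicitly invoke $R(n)\subseteq R(|\psi|)$---valid precisely because $|\psi|\leq n$---to descend from the $n$-compatibility supplied by Proposition~\ref{remark:HcompatibilityTwo} to the $|\psi|$-compatibility required by Lemma~\ref{corollary:HcompatibilityOne}. Observing that the two structures induce literally the same abstract interval model is what permits per-trace satisfaction to be transported freely between $\mathpzc{K}_n$ and $\mathpzc{M}_n$.
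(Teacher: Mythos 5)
Your proposal is correct and follows essentially the same route as the paper: pair each initial trace of one structure with an $n$-compatible initial trace of the other via Proposition~\ref{remark:HcompatibilityTwo}, transfer satisfaction with Lemma~\ref{corollary:HcompatibilityOne}, and use the fact that the two structures induce the same abstract interval model. The only cosmetic difference is that the paper argues by contrapositive while you argue directly, and you make explicit the step $R(n)\subseteq R(|\psi|)$ that the paper leaves implicit.
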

\begin{proof}
First, let us assume that $\mathpzc{K}_n\not\models_\stat \psi$. Then, there exists an initial trace $\rho$ of $\mathpzc{K}_n$ such that $\mathpzc{K}_n,\rho \not \models_\stat \psi$. By Proposition~\ref{remark:HcompatibilityTwo}, there exists a trace $\rho'$ of $\mathpzc{K}_n$, which is an initial trace for $\mathpzc{M}_n$, such that $(\rho,\rho') \in R(|\psi|)$. By Lemma~\ref{corollary:HcompatibilityOne}, we have 
that $\mathpzc{K}_n,\rho ' \not \models_\stat \psi$. Since for any trace $\sigma$ and any $\HS_\stat$ formula $\varphi$, we have that $\mathpzc{K}_n,\sigma \models_\stat \varphi$ if and only if $\mathpzc{M}_n,\sigma \models_\stat \varphi$ ($\mathpzc{K}_n$ and $\mathpzc{M}_n$ feature exactly the same set of traces with exactly the same labeling; they only differ in the initial state), we can conclude that $\mathpzc{M}_n,\rho ' \not \models_\stat \psi$, and thus $\mathpzc{M}_n\not \models_\stat \psi$.

Let us now assume that $\mathpzc{M}_n\not \models_\stat \psi$. Then, there exists an initial trace $\rho$ of $\mathpzc{M}_n$ such that $\mathpzc{M}_n,\rho \not \models_\stat \psi$. As in the converse direction, we have that $\mathpzc{K}_n,\rho \not \models_\stat \psi$, and, by Proposition~\ref{remark:HcompatibilityTwo}, we can easily find an initial trace $\rho'$ of $\mathpzc{K}_n$ such that $(\rho,\rho') \in R(|\psi|)$. By Lemma~\ref{corollary:HcompatibilityOne}, we can conclude that $\mathpzc{K}_n\not\models_\stat \psi$.
\end{proof}

As an immediate consequence of Lemma~\ref{lemma:MainnonBranchingExpressibilityOfEventually} and of the fact that, for each $n\geq 1$,  $\mathpzc{K}_n\not\models \Eventually p$ and $\mathpzc{M}_n\models \Eventually p$, we get the desired undefinability result.




\begin{proposition}\label{prop:nonBranchingExpressibilityOfEventually} The $\LTL$ formula $\Eventually\, p$ (equivalent to the $\CTL$ formula $\forall \Eventually\,p$) cannot be expressed in either $\HS_{\LinearPast}$ or $\HS_{\stat}$.
\end{proposition}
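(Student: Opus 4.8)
The plan is to argue by contradiction, leveraging the two families $(\mathpzc{K}_n)_{n\geq 1}$ and $(\mathpzc{M}_n)_{n\geq 1}$ together with the indistinguishability result of Lemma~\ref{lemma:MainnonBranchingExpressibilityOfEventually}. Suppose, for the sake of contradiction, that $\Eventually p$ \emph{were} expressible under the state-based semantics, i.e.\ that there is an $\HS_\stat$ formula $\varphi$ such that, for every finite Kripke structure $\mathpzc{K}$ over $\{p\}$, we have $\mathpzc{K}\models_\stat \varphi$ if and only if $\mathpzc{K}\models \Eventually p$. The crucial point is that such a $\varphi$ has a \emph{fixed, finite} size, whereas the pairs $\mathpzc{K}_n,\mathpzc{M}_n$ remain indistinguishable by \emph{all} balanced formulas of size up to $n$; pushing $n$ past $|\varphi|$ then forces $\varphi$ to agree on $\mathpzc{K}_n$ and $\mathpzc{M}_n$, contradicting the fact that $\Eventually p$ separates them.

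To carry this out, first I would normalise $\varphi$ to a balanced formula. As already observed, any $\HS_\stat$ formula can be turned (by padding with conjunctions of $\top$) into an equivalent \emph{balanced} one; call it $\varphi'$ and set $m := |\varphi'|$. Now fix any $n \geq \max(m,1)$. On the one hand, Lemma~\ref{lemma:MainnonBranchingExpressibilityOfEventually} applied to the balanced formula $\varphi'$ (whose size is at most $n$) yields $\mathpzc{K}_n\models_\stat \varphi'$ if and only if $\mathpzc{M}_n\models_\stat \varphi'$. On the other hand, by the assumed equivalence, $\mathpzc{K}_n\models_\stat \varphi'$ iff $\mathpzc{K}_n\models \Eventually p$, which is \emph{false}, while $\mathpzc{M}_n\models_\stat \varphi'$ iff $\mathpzc{M}_n\models \Eventually p$, which is \emph{true} (recall that $\mathpzc{M}_n$ differs from $\mathpzc{K}_n$ only in having initial state $s_1$, from which $t$ is inevitably reached, whereas $\mathpzc{K}_n$ admits the diverging initial self-loop at $s_0$). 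These two statements are contradictory, so no such $\varphi$ can exist; hence $\Eventually p$ is not $\HS_\stat$-definable.

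For the $\HS_\LinearPast$ case I would replay exactly the same contradiction, relying on the analogue of the entire $h$-compatibility development for the computation-tree-based semantics, which the preceding discussion indicates is obtained by a routine adaptation (the only change being that traces are read off the computation tree $\mathpzc{C}(\mathpzc{K}_n)$ rather than off $\mathpzc{K}_n$ directly, which does not affect the relevant quantities $N_\emptyset$, $N_p$, and $D_p$). Given the counterpart of Lemma~\ref{lemma:MainnonBranchingExpressibilityOfEventually} for $\models_\LinearPast$, the identical size-bound argument applies verbatim.

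I do not expect a genuine obstacle in this final step itself: the real work has already been front-loaded into the $h$-compatibility machinery (Lemmas~\ref{lemma:Hcompatibility} and~\ref{corollary:HcompatibilityOne}) and the symmetry Proposition~\ref{remark:HcompatibilityTwo}, which together deliver Lemma~\ref{lemma:MainnonBranchingExpressibilityOfEventually}. The only points demanding a little care are (i) justifying that balancing $\varphi$ preserves its meaning while keeping its size finite, so that some $n \geq |\varphi'|$ can be chosen, and (ii) making sure the $\HS_\LinearPast$ analogue of the key lemma is legitimately available; both are straightforward.
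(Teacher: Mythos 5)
Your proposal is correct and follows essentially the same route as the paper: the paper likewise obtains the proposition as an immediate consequence of Lemma~\ref{lemma:MainnonBranchingExpressibilityOfEventually} together with the facts that $\mathpzc{K}_n\not\models \Eventually p$ and $\mathpzc{M}_n\models \Eventually p$, using the same balancing normalisation and the same choice of $n$ exceeding the formula's size, and it too handles $\HS_\LinearPast$ by noting that the whole development adapts with only minor changes.
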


The next proposition immediately follows from Corollary \ref{cor:HSLinearCharacterization} and Proposition~\ref{prop:nonBranchingExpressibilityOfEventually}.

\begin{proposition}\label{prop:oppositeDirection} $\HS_\stat \not \geq  \HS_{\LinearTime} $ and $\HS_\LinearPast \not \geq \HS_{\LinearTime}$.
\end{proposition}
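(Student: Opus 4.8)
The plan is to reduce the statement directly to the two results already established, combining the expressive equivalence of $\HS_\LinearTime$ and $\LTL$ with the inexpressibility of a single $\LTL$ formula in the two branching variants. First I would recall Corollary~\ref{cor:HSLinearCharacterization}, which gives $\HS_\LinearTime \equiv \LTL$; in particular $\HS_\LinearTime \geq \LTL$, so every $\LTL$ formula admits an equivalent $\HS_\LinearTime$ formula. Hence it suffices to exhibit a single $\LTL$ formula that has no counterpart in $\HS_\stat$ (respectively, $\HS_\LinearPast$), since such a formula, viewed inside $\HS_\LinearTime$, will witness the failure of subsumption.

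The witness is the formula $\Eventually\, p$. By Proposition~\ref{prop:nonBranchingExpressibilityOfEventually}, $\Eventually\, p$ cannot be expressed in either $\HS_\stat$ or $\HS_\LinearPast$. I would then argue by contradiction: suppose $\HS_\stat \geq \HS_\LinearTime$. Since $\HS_\LinearTime \geq \LTL$, the $\LTL$ formula $\Eventually\, p$ is equivalent to some $\HS_\LinearTime$ formula $\psi$, and by the assumed subsumption $\psi$ is in turn equivalent to some $\HS_\stat$ formula; composing these equivalences yields an $\HS_\stat$ formula equivalent to $\Eventually\, p$, contradicting Proposition~\ref{prop:nonBranchingExpressibilityOfEventually}. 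Thus $\HS_\stat \not\geq \HS_\LinearTime$. The identical argument with $\HS_\LinearPast$ in place of $\HS_\stat$ gives $\HS_\LinearPast \not\geq \HS_\LinearTime$.

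Since all the technical weight has already been carried by Lemma~\ref{lemma:MainnonBranchingExpressibilityOfEventually} and the $h$-compatibility apparatus underlying Proposition~\ref{prop:nonBranchingExpressibilityOfEventually}, no genuine obstacle remains at this stage: the proposition is essentially a one-line corollary of the transitivity of the subsumption relation $\geq$ together with the two cited facts. The only point to keep straight is the direction of the subsumption --- that producing an $\HS_\stat$ (resp.\ $\HS_\LinearPast$) formula for $\Eventually\, p$ is precisely what the assumed $\HS_\stat \geq \HS_\LinearTime$ (resp.\ $\HS_\LinearPast \geq \HS_\LinearTime$) would buy us, which is exactly what Proposition~\ref{prop:nonBranchingExpressibilityOfEventually} forbids.
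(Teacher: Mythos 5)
Your proposal is correct and follows exactly the paper's argument: the paper derives this proposition immediately from Corollary~\ref{cor:HSLinearCharacterization} and Proposition~\ref{prop:nonBranchingExpressibilityOfEventually}, which is precisely the composition of equivalences (with $\Eventually\, p$ as witness) that you spell out. Your version merely makes explicit the transitivity step that the paper leaves implicit.
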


Putting together Proposition~\ref{prop:nonBranchingExpressibilityOfLinearTime} and \ref{prop:oppositeDirection}, we finally obtain the incomparability result.

\begin{theorem} 
$\HS_\LinearTime$ and $\HS_\stat$  are expressively incomparable, and so are $\HS_\LinearTime$ and $\HS_\LinearPast$.
\end{theorem}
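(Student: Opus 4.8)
The plan is to recognize that this theorem is an immediate corollary of the two propositions that directly precede it, so no new construction is needed. By the definition given in Subsection~\ref{sect:3sem}, two logics $L_1$ and $L_2$ are expressively incomparable precisely when both $L_1\not\geq L_2$ and $L_2\not\geq L_1$ hold. Hence the entire argument consists of pairing the appropriate non-subsumption facts already established: the ``$\HS_\LinearTime$ does not subsume the branching variants'' direction from Proposition~\ref{prop:nonBranchingExpressibilityOfLinearTime}, and the converse ``the branching variants do not subsume $\HS_\LinearTime$'' direction from Proposition~\ref{prop:oppositeDirection}.

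Concretely, for the pair $\HS_\LinearTime$ and $\HS_\stat$ I would first invoke Proposition~\ref{prop:nonBranchingExpressibilityOfLinearTime} to obtain $\HS_\LinearTime\not\geq\HS_\stat$ (witnessed by $\hsBt\hsE p$, which expresses the non-$\LTL$-definable branching requirement $\forall\Always\exists\Eventually p$), and then invoke Proposition~\ref{prop:oppositeDirection} to obtain $\HS_\stat\not\geq\HS_\LinearTime$ (witnessed by $\Eventually p$, shown to be inexpressible in $\HS_\stat$). The conjunction of these two facts is, by definition, exactly the incomparability of $\HS_\LinearTime$ and $\HS_\stat$. The argument for $\HS_\LinearTime$ and $\HS_\LinearPast$ is verbatim the same, reading off the $\HS_\LinearPast$ clauses of the identical pair of propositions. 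Thus the proof of the theorem itself collapses to a single sentence assembling Propositions~\ref{prop:nonBranchingExpressibilityOfLinearTime} and~\ref{prop:oppositeDirection}.

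Because the statement is a one-line consequence, there is no real obstacle at the level of this theorem: all the difficulty has already been discharged in the supporting results. The genuinely hard ingredient is the undefinability of $\Eventually p$ in $\HS_\stat$ (Proposition~\ref{prop:nonBranchingExpressibilityOfEventually}), which rests on the Ehrenfeucht--Fra\"iss\'e-style $h$-compatibility machinery of Lemmata~\ref{lemma:Hcompatibility} and~\ref{corollary:HcompatibilityOne} applied to the family $(\mathpzc{K}_n,\mathpzc{M}_n)_{n\geq 1}$; the opposite direction instead leans on the characterization $\HS_\LinearTime\equiv\LTL$ (Corollary~\ref{cor:HSLinearCharacterization}) together with the standard fact that $\forall\Always\exists\Eventually p$ is not $\LTL$-definable. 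My proof would therefore simply cite these two propositions and state that their combination yields both incomparability results.

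\begin{proof}
By Proposition~\ref{prop:nonBranchingExpressibilityOfLinearTime} we have $\HS_\LinearTime\not\geq\HS_\stat$ and $\HS_\LinearTime\not\geq\HS_\LinearPast$, while by Proposition~\ref{prop:oppositeDirection} we have $\HS_\stat\not\geq\HS_\LinearTime$ and $\HS_\LinearPast\not\geq\HS_\LinearTime$. Since two logics are expressively incomparable exactly when neither subsumes the other, it follows that $\HS_\LinearTime$ and $\HS_\stat$ are expressively incomparable, and likewise $\HS_\LinearTime$ and $\HS_\LinearPast$.
\end{proof}
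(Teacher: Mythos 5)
Your proposal is correct and matches the paper exactly: the paper introduces this theorem with the sentence ``Putting together Proposition~\ref{prop:nonBranchingExpressibilityOfLinearTime} and \ref{prop:oppositeDirection}, we finally obtain the incomparability result,'' which is precisely the one-line assembly you describe. Your identification of where the real work lies (the undefinability of $\Eventually p$ via the $h$-compatibility argument, and the $\HS_\LinearTime\equiv\LTL$ characterization) is also accurate.
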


The proved results also allow us to establish  the  expressiveness relations between  $\HS_\stat$, $\HS_\LinearPast$ and the standard branching temporal logics $\CTL$ and $\CTLStar$.


\begin{corollary}
The following expressiveness results hold:
\begin{enumerate}
\item $\HS_\stat$ and $\CTLStar$ are expressively incomparable;
\item $\HS_\stat$ and $\CTL$ are expressively incomparable;
\item $\HS_\LinearPast$ and finitary $\CTLStar$ are  less expressive than $\CTLStar$;
\item  $\HS_\LinearPast$ and $\CTL$ are expressively incomparable.
\end{enumerate}
\end{corollary}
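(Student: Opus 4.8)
The plan is to derive all four items by combining the characterizations already established in this paper with the two separation results of this section. The ingredients are: $\CTLStar\geq\HS_\LinearPast$ together with $\HS_\LinearPast\equiv$ finitary $\CTLStar$ (Theorem~\ref{Cor:CharacterizationHSCompTree}); Proposition~\ref{Prop:StateBasedExpressiveness}, which exhibits an $\HS_\stat$ formula $\psi$ with $\mathpzc{K}_1\models_\stat\psi$ and $\mathpzc{K}_2\not\models_\stat\psi$ for the structures of Figure~\ref{FigStateBasedExpressiveness}; and Proposition~\ref{prop:nonBranchingExpressibilityOfEventually}, stating that $\Eventually p$ (equivalently the $\CTL$ formula $\forall\Eventually p$) is definable neither in $\HS_\stat$ nor in $\HS_\LinearPast$. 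The one new observation I need is the standard structural fact that, under the MC semantics, the truth of a $\CTLStar$ sentence (hence of a $\CTL$ sentence) on a finite Kripke structure depends only on its computation tree: if $\mathpzc{C}(\mathpzc{K})=\mathpzc{C}(\mathpzc{K}')$, then $\mathpzc{K}\models\chi\iff\mathpzc{K}'\models\chi$ for every $\CTLStar$ sentence $\chi$, since $\mathpzc{K}\models\chi$ is a statement about the initial infinite paths, which are exactly the branches of $\mathpzc{C}(\mathpzc{K})$.

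For items~1 and~2 I establish both non-subsumptions. The directions $\HS_\stat\not\geq\CTLStar$ and $\HS_\stat\not\geq\CTL$ are immediate from Proposition~\ref{prop:nonBranchingExpressibilityOfEventually}, because $\Eventually p$ is a $\CTLStar$ formula and $\forall\Eventually p$ a $\CTL$ formula, neither having an $\HS_\stat$ equivalent. For the converse directions $\CTLStar\not\geq\HS_\stat$ and $\CTL\not\geq\HS_\stat$ I use $\psi$ of Proposition~\ref{Prop:StateBasedExpressiveness}: the structures $\mathpzc{K}_1$ and $\mathpzc{K}_2$ have the same computation tree, so by the structural fact no $\CTLStar$ sentence (a fortiori no $\CTL$ sentence) separates them, whereas $\psi$ does; hence $\psi$ has no $\CTLStar$ nor $\CTL$ equivalent. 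This gives the two incomparabilities.

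Item~3 follows directly: Theorem~\ref{Cor:CharacterizationHSCompTree} gives $\CTLStar\geq\HS_\LinearPast\equiv$ finitary $\CTLStar$, and Proposition~\ref{prop:nonBranchingExpressibilityOfEventually} shows $\Eventually p\in\CTLStar$ has no $\HS_\LinearPast$ equivalent, so both inclusions are strict; thus $\HS_\LinearPast$ and finitary $\CTLStar$ are strictly less expressive than $\CTLStar$. For item~4, one direction, $\HS_\LinearPast\not\geq\CTL$, is again Proposition~\ref{prop:nonBranchingExpressibilityOfEventually}, since $\forall\Eventually p\in\CTL$ is not $\HS_\LinearPast$-definable.

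The remaining direction, $\CTL\not\geq\HS_\LinearPast$, is the main obstacle. Here the computation-tree argument of items~1--2 is unavailable, because $\HS_\LinearPast$ is itself computation-tree-based and, being subsumed by $\CTLStar$, bisimulation-invariant; hence a separating pair of structures must be $\CTL$-inequivalent (non-bisimilar) yet require $\CTL$ formulas of unbounded size to be told apart. The plan is to exhibit a finitary $\CTLStar$ (equivalently, by Theorem~\ref{Cor:CharacterizationHSCompTree}, an $\HS_\LinearPast$) property whose class of finite models is not $\CTL$-definable --- intuitively one correlating several temporal requirements along a single finite path, which $\CTL$ cannot maintain since each of its temporal modalities is immediately re-quantified --- and to prove non-$\CTL$-definability by a bounded-size, Ehrenfeucht--Fra\"iss\'e-style indistinguishability argument for $\CTL$ over a tailored family $(\mathpzc{A}_n,\mathpzc{B}_n)_{n\geq1}$ of finite Kripke structures, in the spirit of Lemma~\ref{corollary:HcompatibilityOne} and Lemma~\ref{lemma:MainnonBranchingExpressibilityOfEventually}: the chosen $\HS_\LinearPast$ formula separates $\mathpzc{A}_n$ from $\mathpzc{B}_n$ for every $n$, while no $\CTL$ sentence of size at most $n$ does. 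The delicate step is designing the family so that the finitary, branching property genuinely escapes $\CTL$ rather than collapsing, as many finitary linear requirements do once the length-one initial traces are taken into account.
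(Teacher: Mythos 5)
Your treatment of items~1--3 and of the direction $\HS_\LinearPast\not\geq\CTL$ in item~4 is correct and follows the paper's own route: Proposition~\ref{prop:nonBranchingExpressibilityOfEventually} for all the ``$\HS\not\geq$'' directions, the insensitivity of $\CTL$/$\CTLStar$ to unwinding together with the formula of Proposition~\ref{Prop:StateBasedExpressiveness} for $\CTLStar\not\geq\HS_\stat$ and $\CTL\not\geq\HS_\stat$, and Theorem~\ref{Cor:CharacterizationHSCompTree} for item~3.

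The remaining direction $\CTL\not\geq\HS_\LinearPast$, however, is where your proposal has a genuine gap: you correctly identify that you need a finitary-$\CTLStar$-definable (equivalently, by Theorem~\ref{Cor:CharacterizationHSCompTree}, $\HS_\LinearPast$-definable) property that is not $\CTL$-definable, but you only describe a \emph{program} for finding one --- an unspecified formula, an unspecified family $(\mathpzc{A}_n,\mathpzc{B}_n)_{n\geq 1}$, and an Ehrenfeucht--Fra\"iss\'e-style indistinguishability argument to be carried out --- without exhibiting any of these ingredients. You even flag the delicate point yourself (that many finitary linear requirements collapse into $\CTL$), but you do not resolve it, so no proof is actually given. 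The paper closes this direction with a much lighter argument: it takes the classical $\CTLStar$ formula $\EQ\bigl(((p_1 \until p_2) \vee (q_1 \until q_2)) \until r\bigr)$, which Emerson and Halpern~\cite{emerson1986sometimes} proved not to be $\CTL$-definable, and observes that replacing $\EQ$ by $\EQF$ yields an equivalent formula of finitary $\CTLStar$ (the property is a pure eventuality, so its truth on an infinite path depends only on a finite prefix, and left-totality of the transition relation lets finite witnesses be extended to infinite ones); by Theorem~\ref{Cor:CharacterizationHSCompTree} this gives an $\HS_\LinearPast$ formula with no $\CTL$ equivalent. To complete your argument you would either need to import this (or another) concrete witness, or actually construct the family and carry out the bounded-size indistinguishability proof you sketch.
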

\begin{proof}
(Item 1) By Proposition~\ref{prop:nonBranchingExpressibilityOfEventually} and the fact that  $\CTLStar$ is not sensitive to unwinding. 

(Item 2) Again, by Proposition~\ref{prop:nonBranchingExpressibilityOfEventually} and the fact that  $\CTL$ is not sensitive to unwinding.

(Item 3) By Theorem~\ref{Cor:CharacterizationHSCompTree},  $\HS_\LinearPast$ is subsumed by $\CTLStar$, and $\HS_\LinearPast$ and finitary $\CTLStar$ have the same expressiveness. 
Hence, by Proposition~\ref{prop:nonBranchingExpressibilityOfEventually}, the result follows.

(Item 4) Thanks to Proposition~\ref{prop:nonBranchingExpressibilityOfEventually}, it suffices to show that there exists a
$\HS_\LinearPast$ formula which cannot be expressed in $\CTL$. Let us consider the $\CTLStar$ formula
$\varphi:= \EQ\bigl(((p_1 \until p_2) \vee (q_1 \until q_2)) \,\until\, r\bigr)$
over the set of propositions $\{p_1,p_2,q_1,q_2,r\}$. It is shown in \cite{emerson1986sometimes} that $\varphi$
cannot be expressed in $\CTL$. Clearly, if
we replace the path quantifier $\EQ$ in $\varphi$ with the finitary path quantifier $\EQF$,  we obtain an equivalent formula of finitary $\CTLStar$.
Thus, since $\HS_\LinearPast$ and finitary $\CTLStar$ have the same expressiveness (Theorem~\ref{Cor:CharacterizationHSCompTree}),
the result follows.
\end{proof}





\section{Conclusions and future work}

In the present paper, we compared interval temporal logic model checking with point-based one with respect to its expressiveness (and succinctness). To this end, we took into consideration three semantic variants  of  the interval temporal logic $\HS$, namely, $\HS_\stat$, $\HS_\LinearPast$, and $\HS_\LinearTime$, under the homogeneity assumption. We investigated their expressiveness and we systematically contrasted them with the point-based temporal logics $\LTL$, $\CTL$, finitary $\CTLStar$, and $\CTLStar$. 

The resulting picture is as follows: $\HS_\LinearTime$ and $\HS_\LinearPast$ turn out to be as expressive as $\LTL$ and finitary $\CTLStar$, respectively. Moreover, $\HS_\LinearTime$ is at least exponentially more succinct than $\LTL$. $\HS_\stat$ is expressively incomparable with $\HS_\LinearTime$/$\LTL$, $\CTL$, and $\CTLStar$, but it is strictly more expressive than $\HS_\LinearPast$/finitary $\CTLStar$.
We believe it possible to fill the expressiveness gap between $\HS_\LinearPast$ and $\CTLStar$ by considering abstract interval models, induced by Kripke structures, featuring worlds also for infinite traces/intervals, and extending the semantics of $\HS$ modalities to infinite intervals. Such an extension will be investigated in future research.

It is worth noting that the decidability of the MC problem for (full) $\HS_\LinearPast$ and $\HS_\LinearTime$ immediately follows from the above results as a byproduct. We leave for future work the study of the related complexity issues, which have been systematically investigated only for $\HS_\stat$.

MC for $\HS$ can be extended in various directions. Recently~\cite{lm16}, a more general definition of interval labeling, that is, of the behavior of proposition letters over intervals, has been proposed, which allows one to associate a regular expression over the set of states of the Kripke structure with each proposition letter. An in-depth investigation of MC with regular expressions for $\HS$ and its fragments can be found in~\cite{sefm2017,DBLP:journals/corr/abs-1709-02094}, where, in particular, it is shown that MC for full $\HS_\stat$ with regular expressions is still (nonelementarily) decidable, and all the sub-fragments of $\mathsf{A\overline{A}B\overline{B}}_\stat$ and $\mathsf{A\overline{A}E\overline{E}}_\stat$ become complete for $\PSPACE$. 

%

Another research direction looks for possible replacements of Kripke structures by more expressive system models. On one hand, we are interested in the investigation of the MC problem for $\HS$ over \emph{visibly pushdown systems}, that can encode recursive programs and infinite state systems.
On the other, we are thinking of the possibility of devising and exploiting \emph{inherently interval-based models} in system descriptions. Kripke structures, being based on states, are naturally oriented to the representation of the state-by-state evolution of the systems and to the characterization of their point-based properties. To express and check temporal constraints which are inherently interval-based, such as, for instance, those involving temporal aggregations, a different formalism is needed, which allows one to directly model systems on the basis of their interval behavior/properties,
thus making it possible to define and benefit from a really general interval-based MC. 

\newpage

\end{document}